\newtheorem{theorem}{Theorem}[section]
\newtheorem{lemma}[theorem]{Lemma}
\newtheorem{corollary}[theorem]{Corollary}
\newtheorem{example}[theorem]{Example}
\newtheorem{observation}[theorem]{Observation}
\newtheorem{definition}[theorem]{Definition}
\newcommand{\0}{\mathbb{0}}
\newcommand{\1}{\mathbb{1}}
\newcommand{\A}{{\cal A}}
\newcommand{\B}{{\cal B}}
\newcommand{\rA}{\mathrm{A}}
\newcommand{\rH}{\mathrm{H}}
\newcommand{\rC}{\mathrm{C}}
\newcommand{\C}{\mathrm{C}}
\newcommand{\N}{\mathbb{N}}
\newcommand{\R}{\mathrm{R}}
\newcommand{\sem}[1]{[\![#1]\!]}
\newcommand{\sqrun}[1]{\widetilde{#1}}
\newcommand{\T}{\mathrm{T}}
\newcommand{\G}{\mathrm{G}}
\newcommand{\tree}{\mathrm{tree}}
\newcommand{\bb}[1]{\mathbb{#1}}
\newcommand{\cA}{\mathcal{A}}
\newcommand{\bbN}{\mathbb{N}}
\newcommand{\bbB}{\mathbb{B}}
\newcommand{\rR}{\mathrm{R}}
\newcommand{\rT}{\mathrm{T}}
\DeclareMathOperator{\fta}{fta}
\DeclareMathOperator{\hgt}{height}
\DeclareMathOperator{\im}{im}
\DeclareMathOperator{\lcm}{lcm}
\DeclareMathOperator{\past}{past}
\DeclareMathOperator{\pos}{pos}
\DeclareMathOperator{\prefix}{prefix}
\DeclareMathOperator{\rk}{rk}
\DeclareMathOperator{\size}{size}
\DeclareMathOperator{\wt}{wt}
\DeclareMathOperator{\supp}{supp}
\title{Finite-image property of  weighted tree automata over past-finite monotonic strong bimonoids}
\author{Manfred Droste}
\affil{\small Institute of Computer Science,\\ University of Leipzig, Leipzig, Germany}
\author{Zolt{\'a}n F{\"u}l{\"o}p\thanks{Research of this author was supported by grant TUDFO/47138-1/2019-ITM of the Ministry for Innovation and Technology, Hungary.}}
\author{D{\'a}vid K{\'o}sz{\'o}\thanks{Supported by the \'UNKP-20-3 - New National Excellence Program of the Ministry for Innovation and Technology from the source of the National Research, Development and Innovation Fund.}}
\affil{\small Department of Foundations of Computer Science,\\ University of Szeged, Szeged, Hungary}
\author{Heiko Vogler}
\affil{\small Faculty of Computer Science,\\ Technische Universit\"at Dresden, Dresden, Germany}
\author{
Manfred Droste\affiliationmark{1} \and 
Zolt{\'a}n F{\"u}l{\"o}p\affiliationmark{2}\thanks{Research of this author was supported by grant TUDFO/47138-1/2019-ITM of the Ministry for Innovation and Technology, Hungary.} \and 
D{\'a}vid K{\'o}sz{\'o}\affiliationmark{2}\thanks{Supported by the \'UNKP-20-3 - New National Excellence Program of the Ministry for Innovation and Technology from the source of the National Research, Development and Innovation Fund.}  \and 
Heiko Vogler\affiliationmark{3} 
}
\affiliation{
Institute of Computer Science, University of Leipzig, Leipzig, Germany\\
Department of Foundations of Computer Science, University of Szeged, Szeged, Hungary\\
Faculty of Computer Science, Technische Universit\"at Dresden, Dresden, Germany
}
\keywords{
strong bimonoid, semiring, weighted tree automaton, finite-image property, preimage property, decidability
}
\begin{document}

\maketitle

\begin{abstract}
  We consider weighted tree automata over strong bimonoids (for short: wta).
  A~wta $\mathcal{A}$ has the finite-image property if its recognized weighted tree language $[\![\mathcal{A}]\!]$ has finite image; moreover, $\mathcal{A}$ has the preimage property if the preimage under $[\![\mathcal{A}]\!]$ of each element of the underlying strong bimonoid  is a recognizable tree language. For each wta $\mathcal{A}$ over a past-finite monotonic strong bimonoid we prove the following results. In terms of $\mathcal{A}$'s structural properties, we characterize whether it has the finite-image property. We characterize those past-finite monotonic strong bimonoids such that
for each wta  $\mathcal{A}$  it is decidable whether  $\mathcal{A}$  has the finite-image property.
In particular, the finite-image property is decidable for wta over past-finite monotonic
semirings. Moreover,  we prove that $\mathcal{A}$  has the preimage property. 
All our results also hold for weighted string automata. 
\end{abstract}

\section{Introduction}

\sloppy Weighted string automata (wsa) were invented \cite{sch61}
for the purpose of describing quantitative properties of recognizable languages, like degree of ambiguity or costs of acceptance. Essentially each wsa is a nondeterministic finite-state automaton in which each transition carries a weight (quantity). In order to calculate with weights, an algebraic structure is needed, called \emph{weight algebra}, and wsa have been investigated over several different weight algebras: semirings \cite{sch61,eil74,berreu88,kuisal86,sak09,drokuivog09}, lattices \cite{wec78,rah09}, strong bimonoids \cite{drostuvog10,cirdroignvog10,drovog12}, valuation monoids \cite{drogoemaemei11,dromei12}, and multi-cost valuation structures \cite{droper16}. The two operations of these weight algebras, usually called addition and multiplication, are used to calculate the weight of a run on a given input word (by means of multiplication) and to sum up the weights of several runs on the given word (by means of addition). In this way, a wsa $\A$ recognizes a weighted language $\sem{\A}$ (or: formal power series), i.e., a mapping from the set of input words to the carrier set of the weight algebra. For the theory of wsa we refer to \cite{sch61,eil74,salsoi78,wec78,berreu82,kuisal86,kui97,sak09,drokuivog09}.

In a similar way, finite-state tree automata have been extended to weighted tree automata (wta) over various weight algebras, e.g.,  complete distributive lattices \cite{inafuk75,esiliu07}, fields \cite{berreu82},
commutative semirings \cite{aleboz87}, strong bimonoids \cite{rad10,drofulkosvog20}, multioperator monoids  \cite{kui99,fulmalvog09,fulstuvog12}, and  tree-valuation monoids \cite{droheuvog15}. In any case, a wta $\A$ recognizes a weighted tree language $\sem{\A}$ (or: formal tree series), i.e., a mapping from the set of input trees to the carrier set of the weight algebra. We note that each wsa is a particular wta (cf. \cite[p.~324]{fulvog09}). For surveys we refer to \cite{esikui03,fulvog09}.

Very important weight algebras for wsa and wta are (a)~the semiring of natural numbers $\N$, (b)~the max-plus-semiring on $\N$, (c)~the min-plus-semiring on $\N$, (d)~the semiring of finite formal languages \cite[Sec.~2]{drokui09}, and (e)~ the semiring of matrices over the positive integers. Apart from (c), these  algebras  are \emph{past-finite} with a suitable order, i.e., each element has only finitely many predecessors in this order. Moreover, the addition and multiplication of each of these algebras are \emph{monotone} with respect to this partial order. In \cite{borfulgazmal05} wta over monotonic semirings were investigated.

Justified by these important examples of weight algebras, we want to advocate in this paper the class of \emph{past-finite monotonic strong bimonoids} as a general model for weight algebras.
These weight algebras share many properties with the semiring of natural numbers: they have  a partial order on its carrier set (which is not necessarily total), they are zero-sum free and zero-divisor free,  their two operations are monotone, and  a strong kind of well-foundedness (called past-finiteness) holds. However, in general, distributivity is not required. 
We will show that the natural numbers $\N$  with addition provide a natural example for past-finite monotonic strong bimonoids which are not semirings.

We will generalize classical results from the theory of wsa and wta over the mentioned specific semirings in this more general setting of weight algebras in a uniform way. We note that classical results for wsa and their proofs crucially employ matrices and therefore need the distributivity of the underlying weight algebras.
Our development uses an analysis of the structure of the wta, but also algebraic means
like congruences and, ultimately, it is combined with a reduction to the classical results for the semiring of natural numbers.

There are two  natural questions associated with a wta $\A$:
\begin{compactitem}
\item Does $\A$ have the finite-image property? A wta $\A$ has the \emph{finite-image property} if the weighted tree language $\sem{\A}$ has finite image.
  \item Does $\A$ have the preimage property? A wta $\A$ has the \emph{preimage property} if the preimage under  $\sem{\A}$ of each element of the weight algebra is a recognizable tree language (cf., e.g., \cite{eng75-15,gecste84} for the theory of recognizable tree languages). 
  \end{compactitem}

  In the literature there are  some answers to these questions. 
  Each wsa over a finite semiring and over the semiring of natural numbers
has the preimage property, and each wsa over a commutative ring
which has the finite-image property also has the preimage property \cite[Ch.~III]{berreu88}.
This has also been shown for wta \cite{drovog06,lou99}.  Furthermore, for each wsa over
any subsemiring of the rational numbers, the finite-image property is decidable \cite{mansim77}.
This latter property is related to the classical Burnside property for semigroups \cite{resreu85}.
Moreover, each wta over a locally finite semiring  \cite{drovog06} and each wsa over a bi-locally finite strong bimonoid \cite{drostuvog10}  has the finite-image property and the preimage property. Thus, in particular, each wsa over a bounded (not necessarily distributive) lattice has the two properties.

Weighted tree languages which are recognized by wta that have both, the finite-image property and the preimage property, are called \emph{recognizable step mappings} \cite{drogas05,drovog06}.  The class of such mappings is characterized by crisp-deterministic wta \cite{fulkosvog19} (cf. \cite{drostuvog10} for the string case). Intuitively, such a wta  can be considered as a usual (unweighted) deterministic finite-state tree automaton in which each final state carries a weight.

In this paper, we investigate the two mentioned questions for wta over past-finite monotonic strong bimonoids.
It is an extended version of \cite{drofulkosvog20} and our main results are the following:
\begin{itemize}
\item For each wta $\A$ over some arbitrary strong bimonoid, we give a sufficient criterion such that $\A$ has the finite-image property and the preimage property (cf. Theorem \ref{thm:sufficient-conds-ensure-rec-step-map}).

\item Each wta $\A$ over some past-finite monotonic strong bimonoid has the preimage property (cf. Theorem~\ref{thm:inverse_b_recognizable}).

\item For each wta $\A$  over some past-finite monotonic strong bimonoid, we characterize when $\A$ has the finite-image property, in terms of structural properties of $\A$ (cf. Theorem~\ref{cor:characterization-of-crisp-determinizability}).

\item We characterize the subclass $\mathcal C$ of those past-finite monotonic strong bimonoids
      for which the following holds: for each wta over some weight algebra from $\mathcal C$, it is decidable whether it has the finite-image property (cf. Theorem~\ref{thm:characterization}). In particular, $\mathcal C$ contains all past-finite monotonic semirings (cf. Theorem~\ref{thm:past-fin-mon-semiring-fin-im-dec}).

      \item Given a wta $\A$  over a past-finite monotonic strong bimonoid and some  $k \in \N_+$,
it is decidable whether the cardinality of the image of $\sem{\A}$  is bounded by  $k$ (cf. Theorem \ref{thm:dec-imA-leq-k}). 
  \end{itemize}
  All the above results except Item 2 are new, i.e., do not appear in \cite{drofulkosvog20}.
For our decidability results, we assume that the respective weight algebras are given
in a computable way.

Since wsa \cite{sch61,eil74} over semirings are a special case of wta over semirings (cf. \cite[p.~324]{fulvog09}), and this relationship also holds for wsa over strong bimonoids, all our results also hold for wsa. We will explain this in more detail in Section \ref{sect:string-case}.

\section{Preliminaries} \label{sect:preliminaries}

\subsection{General notions and notations}

We denote by $\mathbb{N}$ the set of natural numbers $\{0,1,2,\ldots\}$ and by $\N_+$ the set $\N \setminus \{0\}$. 
For every $m,n \in \N$, we denote the set $\{i \in \N \mid m \leq i \leq n\}$ by $[m,n]$. We abbreviate $[1,n]$ by $[n]$. Hence, $[0]=\emptyset$.

Let $A$ be a set. Then $|A|$ denotes the cardinality of $A$, ${\cal P}_{\mathrm{f}}(A)$ denotes the set of finite subsets of $A$,  $A^*$ denotes the {\em set of all strings over $A$}, and $\varepsilon$ denotes the empty string. For every $v,w \in A^*$, $vw$ denotes the {\em concatenation of $v$ and $w$}, $|v|$ denotes the length of $v$, and $\prefix(v)$ denotes the set $\{w \in A^* \mid (\exists u \in A^*): v=wu\}$.

Let $B$ be a set and $R$ a binary relation on $B$. As usual, for every $a,b\in B$, we write $a R b$ instead of $(a,b)\in R$. We call $R$ an {\em equivalence relation} if it is reflexive, symmetric, and transitive.
If $R$ is an equivalence relation, then for each $b\in B$ we denote by $[b]_R$ the equivalence class $\{a\in B \mid aRb\}$ and by $B/_{R}$ the set $\{[b]_R\mid b\in B\}$. We say that  $R$ is a {\em partial ordering} if it is reflexive, antisymmetric, and transitive.
For each $b \in B$, let  $\past(b) = \{a \in B \mid a R b\}$.
We call $(B,R)$ \emph{past-finite} if $\past(b)$ is finite for each $b \in B$.

Let $f:A\to B$ and $g:B\to C$ be mappings, where $C$ is a further set.
The {\em image of $f$} is the set $\im(f)=\{f(a) \mid a \in A\}$, and for each $b \in B$,  the \emph{preimage of $b$ under $f$} is the set $f^{-1}(b)=\{a \in A \mid f(a)=b\}$. Moreover, the {\em composition of $f$ and $g$} is the mapping $g\circ f: A \to C$ defined by $(g\circ f)(a)=g(f(a))$ for each $a\in A$.

\subsection{Trees and contexts}

We suppose that the reader is familiar with the fundamental concepts and results of the theory of finite-state tree automata and recognizable tree languages \cite{eng75-15,gecste84,tata08}. Here we only recall some basic definitions. 

A {\em ranked alphabet} is a tuple $(\Sigma,\rk)$ which consists of an alphabet $\Sigma$ and mapping $\rk: \Sigma \to \N$, called {\em rank mapping}, such that $\rk^{-1}(0)\ne \emptyset$. For each $k \in \N$, we define $\Sigma^{(k)}=\{\sigma \in \Sigma \mid \rk(\sigma) = k\}$. Sometimes we write $\sigma^{(k)}$ to indicate
that $\sigma \in \Sigma^{(k)}$.
%
%We denote $\max\{k \in \N \mid \Sigma^{(k)} \neq \emptyset\}$ by $\maxrk(\Sigma)$.
%
As usual, we abbreviate $(\Sigma,\rk)$ by $\Sigma$ if $\rk$ is irrelevant or it is clear from the context.
If $\Sigma=\Sigma^{(1)}\cup \Sigma^{(0)}$ such that $|\Sigma^{(1)}|\ge 1$ and $|\Sigma^{(0)}|=1$, then we call $\Sigma$ a \emph{string ranked alphabet}.

Let $\Sigma$ be a ranked alphabet and $H$ be a set disjoint from $\Sigma$.
The {\em set of $\Sigma$-trees over $H$}, denoted by $\T_\Sigma(H)$, is the smallest set $T$ such that (i) $\Sigma^{(0)} \cup H \subseteq T$ and (ii) if $k \in \N_+$, $\sigma \in \Sigma^{(k)}$, and $\xi_1,\ldots,\xi_k \in T$, then $\sigma(\xi_1,\ldots,\xi_k) \in T$.
We write  $\T_\Sigma$ for $\T_\Sigma(\emptyset)$.
For every $\gamma \in \Sigma^{(1)}$ and $\alpha \in \Sigma^{(0)}$, we abbreviate the tree $\gamma(\ldots \gamma(\alpha) \ldots)$ with $n$ occurrences of $\gamma$ by $\gamma^n(\alpha)$ and write $\gamma$ for $\gamma^1$.
Any subset $L$ of $\T_\Sigma$ is called \emph{$\Sigma$-tree language}.

We define the {\em set of positions} of trees as a mapping $\pos: \T_\Sigma(H) \to {\cal P}_{\mathrm{f}}(\N_+^*)$ such that (i) for each $\xi \in (\Sigma^{(0)} \cup H)$ let $\pos(\xi)=\{\varepsilon\}$ and (ii) for every $\xi=\sigma(\xi_1,\ldots,\xi_k)$ with $k \in \N_+$, $\sigma\in \Sigma^{(k)}$, and $\xi_1,\ldots,\xi_k\in \T_\Sigma(H)$, let $\pos(\xi)=\{\varepsilon\}\cup\{iv \mid i \in [k], v \in \pos(\xi_i)\}$. The {\em height} and the {\em size} of a tree $\xi\in \T_\Sigma$ are $\hgt(\xi)=\max\{|v| \mid v\in \pos(\xi)\}$ and $\size(\xi)=|\pos(\xi)|$, respectively.

Let $\xi, \zeta \in \T_\Sigma(H)$ and $v \in \pos(\xi)$.
Then the {\em label of $\xi$ at $v$}, denoted by $\xi(v)$, the {\em subtree of $\xi$ at $v$}, denoted by $\xi|_v$, and the {\em replacement of the subtree of $\xi$ at $v$ by $\zeta$}, denoted by $\xi[\zeta]_v$, are defined as follows: 
\begin{compactitem}
\item[(i)] if $\xi \in (\Sigma^{(0)} \cup H)$, then we let $\xi(\varepsilon)=\xi$, $\xi|_\varepsilon = \xi$, and $\xi[\zeta]_\varepsilon=\zeta$ and 
\item[(ii)] for every $\xi=\sigma(\xi_1,\ldots,\xi_k)$ with $k \in \N_+$, $\sigma \in \Sigma^{(k)}$, and $\xi_1,\ldots,\xi_k \in \T_\Sigma(H)$,
  we define $\xi(\varepsilon)=\sigma$ and $\xi|_\varepsilon = \xi$, and $\xi[\zeta]_\varepsilon = \zeta$, and for every $i \in [k]$ and $v' \in \pos(\xi_i)$, we define 
\begin{compactitem}
\item $\xi(iv')=\xi_i(v')$,
\item $\xi|_{iv'}=\xi_i|_{v'}$, and
\item $\xi[\zeta]_{iv}=\sigma(\xi_1,\ldots,\xi_{i-1},\xi_i[\zeta]_v, \xi_{i+1}, \ldots, \xi_k)$.
\end{compactitem}
\end{compactitem}

Let $\square$ be a symbol such that $\square \not\in \Sigma$. For each $\zeta\in \T_\Sigma(\{\square\})$,  we define $\pos_\square(\zeta)=\{v \in \pos(\zeta) \mid \zeta(v) = \square\}$, and for each $v \in \pos(\zeta)$ we abbreviate by $\zeta|^v$ the tree $\zeta[\square]_v$. We denote by $\C_\Sigma$ the set $\{\zeta \in \T_\Sigma(\{\square\}) \mid |\pos_\square(\zeta)|=1\}$, and we call its elements {\em contexts over $\Sigma$} (for short: {\em $\Sigma$-contexts} or  {\em contexts}). Thus a context is a tree over the ranked alphabet $\Sigma$ and the set $H=\{ \square \}$ in which $\square$ occurs precisely once, as a leaf.

Let $c\in \C_\Sigma$ with $\{v\}=\pos_\square(c)$ and $\zeta \in (\T_\Sigma\cup \C_\Sigma)$. Then we abbreviate $c[\zeta]_v$ by $c[\zeta]$. Hence $c[\zeta]$  is obtained from the context  $c$
by replacing the leaf  $\square$  by $\zeta$. Obviously, if $\zeta \in  \C_\Sigma$, then also $c[\zeta]\in  \C_\Sigma$. Moreover, for each $n \in \N$, we define the {\em $n$th power of $c$}, denoted by $c^n$, by induction as follows: $c^0=\square$ and  $c^{n+1}=c[c^n]$.

\begin{quote}
\em In the rest of this paper, $\Sigma$ will denote an arbitrary ranked alphabet if not specified otherwise. Moreover, if we write `$\xi=\sigma(\xi_1,\ldots,\xi_k)$', then we mean that there exist $k \in \N_+$, $\sigma \in \Sigma^{(k)}$, and $\xi_1,\ldots,\xi_k \in \T_\Sigma(H)$ such that $\xi = \sigma(\xi_1,\ldots,\xi_k)$.
\end{quote}

\subsection{Strong bimonoids}

A {\em strong bimonoid} \cite{drostuvog10,cirdroignvog10,drovog12} is an algebra $(B,\oplus,\otimes,\0,\1)$ such that $(B,\oplus,\0)$ is a commutative monoid, $(B,\otimes,\1)$ is a monoid, $\0 \neq \1$, and $\0 \otimes b = b \otimes \0=\0$ for each $b \in B$. 

\ 

We say that $B$ is 
\begin{compactitem}
    \item \emph{commutative} if $\otimes$ is commutative,
    \item {\em left distributive} (respectively, {\em right distributive}) if $\otimes$ is distributive over $\oplus$ from the left (respectively, the right), and
    \item a {\em semiring} if it is left and right distributive.
\end{compactitem} 
Moreover, we call $B$
\begin{compactitem}
\item {\em one-product free} if $a \otimes b = \1$ implies $a = \1 = b$, 
\item {\em zero-divisor free} if $a \otimes b = \0$ implies $a = \0$ or $b = \0$,
\item {\em zero-sum free} if $a \oplus b = \0$ implies $a = \0$ and $b = \0$, and
\item {\em (additively) idempotent} if $a \oplus a = a$
\end{compactitem}
for every $a,b \in B$.

In \cite[Def.~12]{borfulgazmal05} the concept of {\em monotonic semiring} is defined. In the spirit of this definition, we define monotonic strong bimonoids as follows.

\begin{definition} \rm
\sloppy Let $(B,\oplus,\otimes,\0,\1)$ be a strong bimonoid and $\preceq$ a partial order on $B$. We write $a \prec b$ to denote that $a \preceq b$ and $a \ne b$.  We say that  $(B,\oplus,\otimes,\0,\1,\preceq)$ is {\em monotonic} if the following conditions hold:
\begin{compactitem}
\item[(i)]  for every $a, b \in B: a \preceq a \oplus b$, and
\item[(ii)] for every  $a,b,c \in B \setminus \{\0\}$  with  $b \ne \1$  we have:  $a \otimes c \prec a \otimes b \otimes c$.
\end{compactitem}
We call $(B,\oplus,\otimes,\0,\1,\preceq)$ \emph{past-finite} if $(B,\preceq)$ is past-finite.
\end{definition}

If  $B$  is monotonic, then, as is easy to check, $\0 \prec \1 \prec b$  for each  $b \in B \setminus \{\0,\1\}$;
hence, clearly  $B$  is one-product free, zero-divisor free and zero-sum free.
Moreover, if  $B$  has at least 3 elements, then  $B$  is infinite. The only monotonic strong bimonoid
with 2 elements is the Boolean semiring $(\mathbb{B},\sup,\inf,0,1)$ with its natural order, where $\mathbb{B}=\{0,1\}$. Cf. \cite[p.~122]{borfulgazmal05} for further results on
monotonic semirings.

\begin{example} \rm
We give six examples of past-finite monotonic  semirings (cf. \cite[p.~122]{borfulgazmal05}):
\begin{compactenum}
\item[(i)] \sloppy  the semiring of natural numbers \mbox{$(\N,+,\cdot,0,1,\leq)$};
\item[(ii)] the arctic semiring $\mathrm{ASR}_\N=(\N_{-\infty},\max,+,-\infty,0,\leq)$, where $\N_{-\infty} = \N \cup \{-\infty\}$; 
\item[(iii)] the semiring $\mathrm{Lcm}=(\N,\lcm,\cdot,0,1,\leq)$, where $\lcm(0,n)=n=\lcm(n,0)$ for each $n \in \N$ and otherwise $\lcm$ is the usual least common multiple;
\item[(iv)] the semiring $\mathrm{FSet}(\N) = ({\cal P}_{\mathrm{f}}(\N),\cup,+, \emptyset,\{0\},\preceq)$ where the addition on $\N$ is extended to sets as usual, and $\preceq$ is defined by $N_1\preceq N_2$ if there is an injective mapping $f:N_1\to N_2$ such that $n\le f(n)$ for each $n\in N_1$; 
\item[(v)] for each $n \in \N_+$, the semiring $\mathrm{Mat}_n(\N_+) = (\N_+^{n \times n} \cup \{\underline{0}, \underline{1}\},+,\cdot,\underline{0}, \underline{1},\leq)$ of square matrices over $\N_+$ with the common matrix addition and multiplication, where $\underline{0}$ is the $n \times n$ zero matrix and $\underline{1}$ is the $n \times n$ unit matrix; the partial order $\leq$  is defined  by $M\le M'$  if $M_{ij}\le M'_{ij}$ for each $(i,j)\in[n]\times[n]$; and 
\item[(vi)] the semiring $\mathrm{FLang}_\Sigma = ({\cal P}_{\mathrm{f}}(\Sigma^*), \cup, \cdot, \emptyset, \{\varepsilon\},\preceq)$ over the alphabet $\Sigma$ with the operations of union and concatenation, and $\preceq$ is defined by $L_1\preceq L_2$ if there is an injective mapping $f:L_1\to L_2$ such that $w$ is a subword of $f(w)$ for each $w\in L_1$. 
\end{compactenum}
The semirings in (i)-(iv) are commutative, and the semirings in (ii)-(iv) and (vi) are idempotent. \qed
\end{example}

Next we wish to give a natural example of a past-finite monotonic strong bimonoid which is not a semiring.

\begin{example} \label{ex:plus-plus-sb} \rm
  Take $(\N_\0,\oplus,+,\0,0)$, the natural numbers with plus and plus, with a new zero $\0$ added. That is, we have $\N_\0 = \N \cup \{ \0 \}$, the
  bimonoid addition on  $\N$  is the usual one, denoted by $\oplus$, the bimonoid multiplication on  $\N$  is also the usual addition, denoted by $+$, in order to indicate that here the usual addition serves as bimonoid multiplication. Moreover,
  $\0 \oplus x = x$  and  $\0 + x = \0$  for each $x \in \N_\0$. Let $\leq$  be the usual order on  $\N$  together with
$\0 < x$  for each  $x \in \N$. Then  $(\N_\0,\oplus,+,\0,0,\leq)$  is a past-finite strong bimonoid which is not a semiring. We might call this structure the \em{plus-plus-strong bimonoid of natural numbers}.
\end{example}

Now we give an example of an additively locally finite and  past-finite monotonic strong bimonoid which is not a semiring.

\begin{example} \label{ex:add-loc-fin-past-fin} \rm
The strong bimonoid $(\N,+',\cdot,0,1,\leq)$ with the operation $+'$ defined, for each $a, b \in \N$, by
\[
a +' b =
\begin{cases}
\min\{a + b, 100\} &\text{if $a, b \leq 100$}\\
\max\{a, b\} &\text{otherwise}
\end{cases}
\]
is additively locally finite and past-finite monotonic. Moreover, it is neither left distributive nor right distributive. \qed
\end{example}

Next we show a general method for generating past-finite monotonic strong bimonoids.

\begin{example}\label{ex:general-method} \rm
Let $(B,\preceq)$ be a past-finite partially ordered set. Let $(B,+)$ be a commutative semigroup such that,  for every $a,b \in B$, we have $a \preceq a + b$. Moreover, let $(B,\times)$ be a semigroup such that,  for every $a, b, c \in B$, we have $a \prec a \times b$, $c \prec b \times c$, and  $a \times c \prec a \times b \times c$.

According to \cite[Ex.~2.1(4)]{drovog12}, we construct the \emph{strong bimonoid induced by  $(B,+)$ and $(B,\times)$} to be the strong bimonoid $(B',\oplus,\otimes,\0,\1)$ defined as follows:  
\begin{compactitem}
    \item $B' = B \cup \{\0,\1 \}$ where $\0,\1 \not\in B$;
    \item
 we define the operation $\oplus: B' \times B' \to B'$ such that $\oplus|_{B \times B} = +$ and
       for each $b \in B'$ we let $\0 \oplus b = b = b \oplus \0$ and, if $b \neq \0$, then $\1 \oplus b = b = b \oplus \1$;
    \item
       we define the operation $\otimes: B' \times B' \to B'$ such that $\otimes|_{B \times B} = \times$ and
         for each $b \in B'$ we let $\0 \otimes b = \0 = b \otimes \0$, and $\1 \otimes b = b = b \otimes \1$.
   \end{compactitem}
We define the partial ordering $\preceq'$ on $B'$ such that
$\0 \prec' \1 \prec' b$  for each $b \in B$ and $\preceq' \cap (B \times B) =\,\preceq$.
Then $(B',\oplus,\otimes,\0,\1,\preceq')$ is past-finite monotonic.

To verify this, we make the following observations.
Clearly,  $(B',\preceq')$ is past-finite. By case analysis, it is easy to show that $\oplus$ and $\otimes$ satisfy  properties (i) and (ii) of the definition of monotonic strong bimonoid, respectively. In particular, property (ii) can be seen as follows. Let $a,b,c \in B' \setminus \{\0\}$ and $b \ne \1$ (i.e., $b \in B$):
We claim that  $a \otimes c \prec' a \otimes b \otimes c$.

If  $a = c = \1$, then $a \otimes c = \1 \prec' b = a \otimes b \otimes c$.

If  $a \ne \1$  and  $c = \1$, then we have  $a \in B$  and  $a \prec a \times b$  by the assumption on $(B,\times)$, and hence $a \prec' a \times b$ by the definition of $\prec'$. Then $a \otimes c = a \prec' a \times b = a \otimes b = a \otimes b \otimes c$.

If  $a = \1$ and  $c \ne \1$, then $c \in B$  and  $c \prec b \times c$  by the assumption on $(B,\times)$. Using the same arguments as in the previous case, we have $a \otimes c = c \prec' b \times c = b \otimes c = a \otimes b \otimes c$.

Finally, if $a \ne \1$ and $b \ne \1$, then we have  $a,b,c \in B$ and therefore
$a \times c \prec a \times b \times c$  by the assumption on $(B,\times)$. Since $\otimes$ on $B$ and $\prec'$ on $B$ are equal to $\times$ and $\prec$, respectively, we obtain  $a \otimes c \prec' a \otimes b \otimes c$. 

Clearly, $(B',\oplus,\otimes,\0,\1)$ is additively locally finite if, for each finite $A \subseteq B$, the subsemigroup of $(B,+)$ generated by $A$ is finite.
\qed
\end{example}

As an application of the general method, we let $(B,\preceq) = (\N_+,\leq)$,
$(B,+) = (\N_+,+)$  and  $(B,\times) = (\N_+,+)$, both with the usual addition of natural numbers, and $\1 = 0$. Then we obtain the plus-plus-strong bimonoid of
natural numbers given in Example~\ref{ex:plus-plus-sb}. As another application, we can choose $(B,\preceq)=(\mathbb{N}_+,\leq)$, $(B,+)=(\mathbb{N}_+,+)$ with the usual addition on natural numbers, and $(B,\times) = (\mathbb{N}_+,\times)$ where $a \times b = a + b + 2ab$ for every $a,b \in \N_+$. Then $B'$ is a past-finite monotonic strong bimonoid.   Moreover, $B'$ is neither left distributive nor right distributive. As a third one, we can consider the semigroup $(B,+)=(\mathbb{N}_+,\max)$ and the rest as above. Then  $B'$ is an idempotent and past-finite monotonic semiring.

For later use, we introduce the following notations and notions.

We extend $\oplus$ to every finite set $I$ and family $(b_i \mid i \in I)$ of elements of $B$ as usual, and denote the extended operation by $\bigoplus$.
We always abbreviate $\bigoplus (b_i \mid i \in I)$ by $\bigoplus_{i \in I}b_i$. Moreover, if $I=[k]$ for some $k \in \N$, then we write $\bigoplus_{i=1}^k b_i$. For each $b\in B$,  we abbreviate $\bigoplus_{i=1}^k b$ by $kb$.

We also extend the operation $\otimes$ to every $k \in \N$ and family $(b_i \mid i \in [k])$ of elements of $B$ as usual, and denote the extended operation by $\bigotimes$.
We always abbreviate $\bigotimes (b_i \mid i \in [k])$ by $\bigotimes_{i=1}^k b_i$. For each $b\in B$,  we abbreviate $\bigotimes_{i=1}^k b$ by $b^k$.
For each $B_1,B_2 \subseteq B$, we denote by $B_1 \otimes B_2$ the set $\{a \otimes b \mid a \in B_1, b \in B_2\}$.

For each $A\subseteq B$, we denote by $\langle A\rangle_\oplus$ the \emph{submonoid of $(B,\oplus,\0)$ generated by $A$} and we denote by $\langle A\rangle_\otimes$  the \emph{submonoid of $(B,\otimes,\1)$ generated by $A$}.
We say that $B$ is \emph{additively locally finite} (respectively, \emph{multiplicatively locally finite}) if, for each finite $A \subseteq B$, the submonoid $\langle A\rangle_\oplus$ (respectively, $\langle A\rangle_\otimes$) is finite. Observe that, if $B$ is idempotent, then it is also additively locally finite. We say that $B$ is \emph{bi-locally finite} if it is both additively locally finite and 
multiplicatively locally finite.
For each $b \in B$, we abbreviate $\langle \{b \}\rangle_\oplus$ by $\langle b \rangle_\oplus$ and
if $\langle b \rangle_\oplus$ is finite, then we say that $b$ has {\em finite additive order}.

\begin{quote}
\em In the rest of the paper, $(B,\oplus,\otimes,\0,\1)$ denotes an arbitrary strong bimonoid if not specified otherwise. 
\end{quote}

\subsection{Weighted tree languages}\label{sect:wtl}

A {\em $(\Sigma,B)$-weighted tree language} (for short: weighted tree language) is a mapping $r:\T_\Sigma \to B$. For every $(\Sigma,B)$-weighted tree language $r$, we denote by $\supp(r)$ the set $\{\xi \in \T_\Sigma \mid r(\xi) \neq \0\}$.

Let $L \subseteq \T_\Sigma$ be a $\Sigma$-tree language. The {\em characteristic mapping of $L$ with respect to $B$} is the mapping $\1_{(B,L)}: \T_\Sigma \to B$ defined, for each $\xi \in \T_\Sigma$, by $\1_{(B,L)}(\xi) = \1$ if $\xi \in L$ and $\0$ otherwise.

Let $r$ and $r'$ be $(\Sigma,B)$-weighted tree languages and $b \in B$. We define the $(\Sigma,B)$-weighted tree languages $r \oplus r'$ and $b \otimes r$, for each $\xi \in \T_\Sigma$, by $(r \oplus r')(\xi) = r(\xi) \oplus r'(\xi)$ and $(b \otimes r)(\xi) = b \otimes r(\xi)$, respectively.

We say that $r$ is a {\em $(\Sigma,B)$-recognizable one-step mapping} (or just: recognizable one-step mapping) if there exist  a recognizable $\Sigma$-tree language $L \subseteq \T_\Sigma$ and a $b\in B$ such that $r = b \otimes \1_{(B,L)}$. The tree language $L$ is called {\em step language}. Moreover, $r$ is a {\em $(\Sigma,B)$-recognizable step mapping} (or just: recognizable step mapping) if there exist  $n \in \N_+$ and $(\Sigma,B)$-recognizable one-step mappings $r_1,\ldots, r_n$ such that $r=\bigoplus_{i = 1}^n r_i$ (where we extend the sum of two weighted tree languages in a straightforward way to the sum of finitely many weighted tree languages). Obviously, if $r$ is a recognizable step-mapping, then $\im(r)$ is finite. We note that, in \cite{bor04}, recognizable one-step mappings were called weighted tree languages which are constant on their supports.

For each $b\in B$, we define the weighted tree language $\widetilde{b}$ by $\widetilde{b}(\xi)=b$ for each $\xi \in \T_\Sigma$. We note that $\widetilde{b}$ is the recognizable one-step mapping $b \otimes \1_{(B,\T_\Sigma)}$.

\section{Weighted tree automata with run semantics}\label{sect:wta-run-sem}

We recall the concept of weighted tree automata over strong bimonoids from \cite{rad10} (also cf., e.g., \cite{fulvog09,fulkosvog19}).
A \emph{weighted tree automaton (over $\Sigma$ and  $B$)}  (for short: $(\Sigma,B)$-wta or wta) 
is a tuple $\A=(Q,\delta,F)$, where $Q$ is a finite nonempty set ({\em states}), $\delta=(\delta_k \mid k \in \N)$ is a family of mappings $\delta_k: Q^k \times \Sigma^{(k)} \times Q \to B$ ({\em transition mappings}), and $F: Q \to B$ ({\em root weight mapping}). 

\begin{quote}
\em From now on, for every $k\in \N$, $(q_1,\ldots,q_k) \in Q^k$, $\sigma \in \Sigma^{(k)}$, and $q \in Q$, we abbreviate expressions of the form $\delta_k((q_1,\ldots,q_k),\sigma,q)$ by $\delta_k(q_1 \cdots q_k, \sigma,q)$. Moreover, we write $F_q$ instead of $F(q)$ for each $q\in Q$. 
\end{quote}

We say that $\A$ is \emph{deterministic} (and {\em crisp-deterministic}) if, for every $k \in \N$, $w \in Q^k$, and $\sigma \in \Sigma^{(k)}$ there exists at most one $q \in Q$ such that $\delta_k(w,\sigma,q)\ne \0$ (respectively, there exists a $q \in Q$ such that $\delta_k(w,\sigma,q)=\1$, and $\delta_k(w,\sigma,q')=\0$ for each $q' \in Q \setminus \{q\}$). Clearly, crisp-determinism implies determinism.

We define the run semantics for a  $(\Sigma,B)$-wta as follows. 
Let $\A=(Q,\delta,F)$ be a $(\Sigma,B)$-wta, $\zeta \in \T_\Sigma(\{\square\})$, and $\rho: \pos(\zeta) \to Q$. We call $\rho$ a {\em run of $\A$ on $\zeta$} if, for every $v \in \pos(\zeta)$ with $\zeta(v) \in \Sigma$, we have $\delta_{k}(\rho(v1) \cdots \rho(vk), \sigma, \rho(v)) \neq \0$ where $\sigma = \zeta(v)$ and $k= \rk(\sigma)$.
  If $\rho(\varepsilon)=q$ for some $q \in Q$, then we say that $\rho$ is a {\em $q$-run on $\zeta$}. We denote by $\R_\A(q,\zeta)$ the set of all $q$-runs on $\zeta$ and we let $\R_\A(\zeta)=\bigcup_{q\in Q}\R_\A(q,\zeta)$.  If $\A$ is deterministic, then $|\R_\A(\zeta)| \le 1$. Moreover, we let $\rR_\cA^{F \neq \bb0}(\zeta)$ denote the set of all $\rho \in \rR_\cA(\zeta)$ such that $F_{\rho(\varepsilon)} \neq \bb0$.  In particular, for $c \in \C_\Sigma$ with $\pos_\Box(c)=\{v\}$, we call each $\rho \in \R_\A(q,c)$ a \emph{$(q,\rho(v))$-run on 
$c$} and we denote the set of all $(q,p)$-runs on $c$ by $\R_\A(q,c,p)$.  We note that $\R_\A(q,c)=\bigcup_{p \in Q}\R_\A(q,c,p)$. Each element of $\R_\A(q,c,q)$ is called \emph{loop}.

Let $\zeta \in \T_\Sigma(\{\square\})$, $\rho \in \R_\A(\zeta)$, and $v \in \pos(\zeta)$. We define the mapping $\rho|_v: \pos(\zeta|_v) \to Q$ such that, for each $v' \in \pos(\zeta|_v)$, we have $\rho|_v(v')=\rho(vv')$. Clearly, $\rho|_v \in \R_\A(\zeta|_v)$, and hence we call it the {\em run induced by $\rho$ at $v$}.

We say that $\cA$ is {\em finitely ambiguous} if there exists $K \in \bbN$ such that,  for each $\xi \in \rT_\Sigma$, we have  $|\rR_\cA^{F \neq \bb0}(\xi)| \leq K$. Moreover, we call $\cA$ {\em unambiguous} if it is finitely ambiguous and $K=1$. We note that each deterministic wta is unambiguous, and there exist  easy examples of unambiguous wta  for which there does not exist an equivalent deterministic wta \cite{klilommaipri04}.

Now we define the {\em weight of a run $\rho \in \R_\A(\zeta)$} to be the element $\wt_\A(\zeta, \rho)$ of $B$ by induction as follows: (i) if $\zeta = \square$, then $\wt_\A(\zeta,\rho)=\1$ and (ii) if $\zeta=\sigma(\zeta_1,\ldots,\zeta_k)$ then $\wt_\A(\zeta,\rho)$ is defined by
\begin{equation} \label{eq:weight-of-a-run}
\wt_\A(\zeta,\rho) = \Big(\bigotimes_{i=1}^k \wt_\A(\zeta_i,\rho|_i)\Big) \otimes \delta_k\big(\rho(1)\cdots \rho(k),\sigma,\rho(\varepsilon)\big)\enspace.
\end{equation}
If there is no confusion, then we drop the index $\A$ from $\wt_\A$ and write just $\wt(\zeta, \rho)$ for the weight of $\rho$.

The {\em run semantics of $\A$} is the $(\Sigma,B)$-weighted tree language $\sem{\A}:\T_\Sigma \to B$ defined, for each $\xi \in \T_\Sigma$, by
\[
\sem{\A}(\xi)
= \bigoplus_{\rho \in \R_\A(\xi)} \wt(\xi, \rho) \otimes F_{\rho(\varepsilon)} = \bigoplus_{\rho\in\rR_\cA^{F \neq \bb0}(\xi)}\wt(\xi)\otimes F_\rho(\varepsilon)\enspace,
\]
where the second equality holds because $\wt(\xi,\rho)\otimes F_{\rho(\varepsilon)}=\0$ for each $\rho \in  \rR_\cA(\xi)\setminus \rR_\cA^{F \neq \bb0}(\xi)$. We will use the above equality without any reference.
Let $\A$ and $\B$ be $(\Sigma,B)$-wta. We say that $\A$ and $\B$ are \emph{equivalent} if $\sem{\A} = \sem{\B}$. A weighted tree language $r:\T_\Sigma \to B$ is {\em run-recognizable} if there exists a $(\Sigma,B)$-wta $\A$ such that $r=\sem{\A}$. The class of all run-recognizable $(\Sigma,B)$-weighted tree languages is denoted by $\mathrm{Rec}(\Sigma,B)$.

\begin{example} \rm
For the weight structure of Example~\ref{ex:plus-plus-sb}, the plus-plus-strong bimonoid of natural numbers, the weighted tree automaton along a run would sum up all weights (costs) of the transitions occurring
in the run, but to determine the weight of a tree it would also execute all possible runs and sum up their weights (costs).
This might be considered as the total sum of the weights of all transitions of all non-deterministic realizations (runs).
\end{example}

The following fact is well known and we will use it in the paper without any further reference (cf. e.g. \cite[Sect.~3.4]{fulvog09}).
 A $\Sigma$-tree language $L \subseteq \T_\Sigma$ is \emph{recognizable}, i.e., recognizable by a finite-state $\Sigma$-tree automaton if and only if there exists a $(\Sigma,\mathbb{B})$-wta $\A$ such that $L= \supp(\sem{\A})$ (recall that $\mathbb{B}$ is the Boolean semiring).
Moreover, for each $(\Sigma,\mathbb{B})$-wta $\A$, we can construct a finite-state $\Sigma$-tree automaton which recognizes $\supp(\sem{\A})$. Vice versa, for each finite-state $\Sigma$-tree automaton which recognizes $L$, we can construct a $(\Sigma,\mathbb{B})$-wta $\A$ such that $L= \supp(\sem{\A})$. \emph{Therefore, in order to avoid using several automata models, we identify finite state $\Sigma$-tree automata  with $(\Sigma,\bbB)$-wta.}

 We note that also another semantics, called {\em initial algebra semantics}, can be defined for $\A$ \cite{rad10,fulvog09,fulkosvog19}. In general, the two kinds of semantics are different \cite{drostuvog10}, however, if $B$ is a semiring or $\A$ is deterministic, then they coincide \cite[Lm.~4.1.13]{bor04b}, \cite[Thm.~4.1]{rad10}, and \cite[Thm.~3.10]{fulkosvog19}.

\begin{example} \label{ex:commutative-SB-not-crisp-determinizable} \rm
Let $\Sigma = \{\gamma^{(1)}, \alpha^{(0)}\}$. We consider the $(\Sigma,\mathrm{ASR}_\N)$-wta $\A=(\{q\},\delta,F)$ with $\delta_0(\varepsilon,\alpha,q)=F_q=0$ and $\delta_1(q,\gamma,q)=1$. Clearly, $\A$ is deterministic and not crisp-deterministic (because $1$ is not one of the unit elements of $\mathrm{ASR}_\N$). Moreover, $\sem{\A}(\gamma^n(\alpha))=n$ for each $n \in \N$. Hence, $\im(\sem{\A})$ is infinite.  \qed
\end{example}
Next we recall three results which we will need in this paper.
The first result is a straightforward generalization of \cite[Lm.~3]{bormalsestepvog06} from semirings to strong bimonoids, cf. also \cite[Thm.~3.9]{fulvog09}.

\begin{lemma}\label{lm:closure-homomorphism}\rm 
    Let $B$ and $C$ be strong bimonoids, $\A$ be $(\Sigma,B)$-wta, and $h: B \to C$ a strong bimonoid homomorphism. We can construct a $(\Sigma,C)$-wta $h(\A)$ such that $\sem{h(\A)}=h \circ \sem{\A}$.
  \end{lemma}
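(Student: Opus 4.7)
The plan is to construct $h(\A)$ by transporting every weight of $\A$ through $h$. Writing $\A = (Q,\delta,F)$, I would define $h(\A) = (Q,\delta',F')$ on the same state set by setting $\delta'_k(q_1\cdots q_k,\sigma,q) = h(\delta_k(q_1\cdots q_k,\sigma,q))$ for each $k, q_1,\ldots,q_k,\sigma,q$, and $F'_q = h(F_q)$ for each $q \in Q$. Since $h$ is a strong bimonoid homomorphism, it preserves $\0$, $\1$, $\oplus$ and $\otimes$, so these weights are well defined.

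Next I would relate the runs. Because $h(\0_B) = \0_C$, every run of $h(\A)$ on an input $\xi \in \T_\Sigma$ is also a run of $\A$, i.e., $\R_{h(\A)}(\xi) \subseteq \R_\A(\xi)$. Conversely, a run $\rho \in \R_\A(\xi)\setminus \R_{h(\A)}(\xi)$ must use some transition whose weight is nonzero in $B$ but is sent to $\0_C$ by $h$; hence $h(\wt_\A(\xi,\rho)) = \0_C$ for such $\rho$. I would then prove by straightforward induction on $\xi$, using the recursive definition in~\eqref{eq:weight-of-a-run} and the fact that $h$ is a monoid homomorphism for both $\oplus$ and $\otimes$, the key identity
\[
\wt_{h(\A)}(\xi,\rho) \;=\; h\bigl(\wt_\A(\xi,\rho)\bigr) \qquad \text{for every } \rho \in \R_{h(\A)}(\xi).
\]

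Finally, I would assemble the desired equality for each $\xi \in \T_\Sigma$:
\begin{align*}
\sem{h(\A)}(\xi)
&= \bigoplus_{\rho \in \R_{h(\A)}(\xi)} \wt_{h(\A)}(\xi,\rho) \otimes F'_{\rho(\varepsilon)}
= \bigoplus_{\rho \in \R_{h(\A)}(\xi)} h\bigl(\wt_\A(\xi,\rho)\bigr) \otimes h(F_{\rho(\varepsilon)}) \\
&= \bigoplus_{\rho \in \R_{\A}(\xi)} h\bigl(\wt_\A(\xi,\rho) \otimes F_{\rho(\varepsilon)}\bigr)
= h\Bigl(\bigoplus_{\rho \in \R_{\A}(\xi)} \wt_\A(\xi,\rho) \otimes F_{\rho(\varepsilon)}\Bigr)
= h\bigl(\sem{\A}(\xi)\bigr),
\end{align*}
where the extra runs in $\R_\A(\xi)\setminus \R_{h(\A)}(\xi)$ contribute $\0_C$ and can be added harmlessly, and the last equality uses that $h$ respects finite $\oplus$-sums. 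There is no real obstacle here; the only subtle point worth being explicit about is that $h$ need not be injective, so one must handle the possibility that runs of $\A$ are ``killed'' by $h$, which is precisely what the zero-annihilation argument above takes care of.
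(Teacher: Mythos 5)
Your construction of $h(\A)$ is exactly the one the paper uses ($\delta'_k = h\circ\delta_k$, $F' = h\circ F$ on the same state set), and your verification — inclusion of run sets, zero-annihilation of the runs killed by $h$, the inductive identity $\wt_{h(\A)}(\xi,\rho)=h(\wt_\A(\xi,\rho))$, and the final sum computation — correctly fills in what the paper dismisses as ``easy to show.'' The proof is correct and takes essentially the same approach as the paper.
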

  
\begin{proof} 
Let $\A=(Q,\delta,F)$ be a $(\Sigma,B)$-wta. We introduce the $(\Sigma,C)$-wta $h(\A)=(Q,\delta',F')$ by defining $\delta' = (\delta'_k \mid k \in \N)$ with $\delta'_k= h \circ \delta_k$ for each $k \in \N$
and $F' = h \circ F$. Then it is easy to show that $\sem{h(\A)}=h \circ \sem{\A}$.
\end{proof}

  The second result characterizes the class of weighted tree languages which can be run-recognized by crisp-deterministic wta.

\begin{lemma} \rm cf. \cite[Lm.~5.3]{fulkosvog19} \label{lm:cdwta_finite_image}
  Let $\A$ be a $(\Sigma,B)$-wta. Then the following statements are equivalent.
  \begin{compactitem}
  \item[(i)] There exists a crisp-deterministic $(\Sigma,B)$-wta $\B$ such that $\sem{\A}=\sem{\B}$.
  \item[(ii)] $\sem{\A}$ is a $(\Sigma,B)$-recognizable step mapping.
  \item[(iii)] $\A$ has the finite-image property and the preimage property.
  \end{compactitem}
\end{lemma}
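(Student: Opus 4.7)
The plan is to show the three implications $(i) \Rightarrow (ii) \Rightarrow (iii) \Rightarrow (i)$, which together establish all equivalences.

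For $(i) \Rightarrow (ii)$: Suppose $\sem{\mathcal{A}} = \sem{\mathcal{B}}$ with $\mathcal{B} = (Q,\delta,F)$ crisp-deterministic. Since $\mathcal{B}$ is crisp-deterministic, for each $\xi \in \T_\Sigma$ there is exactly one run $\rho_\xi$ and its weight is $\1$, so $\sem{\mathcal{B}}(\xi) = F_{\rho_\xi(\varepsilon)}$. For each $q \in Q$ let $L_q = \{\xi \in \T_\Sigma \mid \rho_\xi(\varepsilon) = q\}$. The underlying crisp transition structure is a deterministic finite-state $\Sigma$-tree automaton, so each $L_q$ is recognizable. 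Then $\sem{\mathcal{B}} = \bigoplus_{q \in Q} F_q \otimes \1_{(B,L_q)}$, which is a recognizable step mapping.

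For $(ii) \Rightarrow (iii)$: Write $\sem{\mathcal{A}} = \bigoplus_{i=1}^n b_i \otimes \1_{(B,L_i)}$ with each $L_i$ recognizable. Then $\im(\sem{\mathcal{A}}) \subseteq \{\bigoplus_{i \in I} b_i \mid I \subseteq [n]\}$, which is finite, establishing the finite-image property. For the preimage property, observe that for any $b \in B$,
\[
\sem{\mathcal{A}}^{-1}(b) = \bigcup_{\substack{I \subseteq [n] \\ \bigoplus_{i \in I} b_i = b}} \Bigl(\bigcap_{i \in I} L_i \ \cap\ \bigcap_{i \notin I} (\T_\Sigma \setminus L_i)\Bigr),
\]
which is a Boolean combination of recognizable tree languages, hence recognizable by standard closure results for recognizable tree languages.

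For $(iii) \Rightarrow (i)$: Let $\im(\sem{\mathcal{A}}) = \{b_1, \ldots, b_n\}$ be finite and let $L_i = \sem{\mathcal{A}}^{-1}(b_i)$, each recognizable. For each $i$ pick a deterministic total finite-state $\Sigma$-tree automaton $\mathcal{C}_i = (Q_i, \delta^{(i)}, F_i)$ recognizing $L_i$, and form the product automaton with state set $Q = Q_1 \times \cdots \times Q_n$ and the corresponding product transition structure, viewed as a crisp-deterministic $(\Sigma,B)$-wta $\mathcal{B}$. The $L_i$ are pairwise disjoint (they are preimages of distinct values), so for the unique run of $\mathcal{B}$ on any $\xi$ the root state $(q_1,\ldots,q_n)$ lies in $F_i$ for at most one $i$; define the root weight of $\mathcal{B}$ at such a state to be $b_i$ if $q_i \in F_i$, and $\0$ otherwise. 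Then $\sem{\mathcal{B}}(\xi) = b_i$ whenever $\xi \in L_i$, and $\sem{\mathcal{B}}(\xi) = \0$ otherwise, matching $\sem{\mathcal{A}}$ (noting that if $\0 \in \im(\sem{\mathcal{A}})$ then $\0 = b_i$ for some $i$, so this case is also covered).

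The only subtle point is the last step: making sure the root weight in $\mathcal{B}$ is well-defined and that assigning $\0$ to non-accepting states of the product matches $\sem{\mathcal{A}}$ on trees outside $\bigcup_i L_i$. Since the $L_i$ exhaust $\T_\Sigma$ (they partition it), this is immediate. All other steps are routine closure arguments for recognizable tree languages.
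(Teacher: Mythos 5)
Your proof is correct, and it follows the standard route: the paper does not prove this lemma itself but cites \cite[Lm.~5.3]{fulkosvog19}, whose argument is essentially the one you give (in particular, your direction (iii)~$\Rightarrow$~(i) is exactly the direct product construction over deterministic tree automata for the preimages that the paper later invokes in the proof of Theorem~\ref{thm:sufficient-conds-ensure-rec-step-map}(2)). The one point worth tightening is the well-definedness of the root weights in the product automaton: for unreachable product states several components may be accepting simultaneously, but since such states never occur as the root of the run on an actual tree (the $L_i$ being pairwise disjoint), any fixed choice there is harmless, as you note.
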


We remark that if  $\B = (Q,\delta,F)$ is a crisp-deterministic  $(\Sigma,B)$-wta,
then for each  $b \in B$  we can construct effectively a finite-state $\Sigma$-tree automaton
which recognizes  $\sem{\B}^{-1}(b)$. Indeed, if  $b \not\in \im(F)$, then  $\sem{\B}^{-1}(b) = \emptyset$.
Therefore let now  $b \in \im(F)$. From $\B$  we immediately obtain a crisp-deterministic
$(\Sigma,\mathbb{B})$-wta $\B_b$  with the same state set and same transitions with non-zero weight
such that the states of  $\B_b$  have final weight $1 \in \mathbb{B}$  iff they have final weight  $b$ in $\B$.
Then $\B_b$  recognizes the tree language $\sem{\B}^{-1}(b)$. This proves our remark.

The third result shows that each $(\Sigma,B)$-wta is crisp-determinizable if $B$ is bi-locally finite. Formally, a $(\Sigma,B)$-wta $\A$ is {\em crisp-determinizable (with respect to the run semantics)} if there exists a crisp-deterministic $(\Sigma,B)$-wta ${\cal C}$ such that $\sem{\A} = \sem{\cal C}$. Thus, $\A$ is crisp-determinizable if $\A$ satisfies the conditions of Lemma~\ref{lm:cdwta_finite_image}.

\begin{lemma} \rm \cite[Cor.~7.5]{fulkosvog19}  \label{lm:bi-loc-ensures-cdwta} 
  Let $\A$ be a $(\Sigma,B)$-wta. If $B$ is bi-locally finite, then $\A$ is crisp-determinizable.
\end{lemma}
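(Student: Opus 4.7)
The plan is to construct a crisp-deterministic $(\Sigma,B)$-wta $\cC$ with $\sem{\cC}=\sem{\A}$ via a weighted subset-like construction whose state set is forced to be finite by bi-local finiteness. I would start by extracting the finite weight repository $D\subseteq B$ consisting of $\0,\1$ together with every weight occurring in $\delta$ or $F$. By multiplicative local finiteness, the submonoid $M := \langle D\rangle_\otimes$ is finite, and every run weight $\wt_\A(\xi,\rho)$ lies in $M$. By additive local finiteness, the submonoid $V := \langle M\rangle_\oplus$ is then also finite, and $\sem{\A}(\xi)\in V$ for every $\xi\in \T_\Sigma$.

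The next step is to associate to each tree $\xi$ a ``state summary'' $h_\xi$ living in a finite set. The first attempt is the vector $h_\xi\in V^Q$ given by $h_\xi(q)=\bigoplus_{\rho\in \R_\A(q,\xi)}\wt_\A(\xi,\rho)$; if this works one takes $\cC$ to have state set $V^Q$, one deterministic transition of weight $\1$ per tuple $(h_{\xi_1},\ldots,h_{\xi_k},\sigma)$ going to the combined summary, and final weight $F'(h)=\bigoplus_{q} h(q)\otimes F_q$. A straightforward structural induction would then give $\sem{\cC}=\sem{\A}$.

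The main obstacle is that in a general strong bimonoid, distributivity fails, and with it both key ingredients above. The formula
\[
h_\xi(q)=\bigoplus_{q_1,\ldots,q_k}\bigoplus_{\rho_1,\ldots,\rho_k}\wt_\A(\xi_1,\rho_1)\otimes\cdots\otimes\wt_\A(\xi_k,\rho_k)\otimes\delta_k(q_1\cdots q_k,\sigma,q)
\]
only collapses to an expression in the $h_{\xi_i}(q_i)$ when $\otimes$ distributes over $\oplus$, and analogously $\sem{\A}(\xi)=\bigoplus_q\bigoplus_{\rho\in \R_\A(q,\xi)}\wt_\A(\xi,\rho)\otimes F_q$ reduces to $\bigoplus_q h_\xi(q)\otimes F_q$ only if right-distributivity holds.

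To circumvent this, one has to refine the summary, recording for each $q$ enough information about the multiset $\{\!\{\wt_\A(\xi,\rho) : \rho\in \R_\A(q,\xi)\}\!\}$ over the finite set $M$, with multiplicities normalized modulo the finite additive orders furnished by $(B,\oplus)$. Verifying (i) that this refined summary still takes only finitely many values, (ii) that it is inductively computable from the summaries of the children, and (iii) that it determines $\sem{\A}(\xi)$ is the technical heart of the argument, and is where both halves of bi-local finiteness are genuinely used together. With the refined summary in hand, $\cC$ is defined by taking these summaries as its (finite) state set, declaring the unique $\1$-valued transition prescribed by the inductive rule for each $\sigma$ and each tuple of children's summaries, and assigning to each state $h$ the final weight given by the recovery formula; a structural induction then yields $\sem{\cC}=\sem{\A}$.
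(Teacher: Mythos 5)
The paper gives no proof of this lemma; it is imported verbatim as \cite[Cor.~7.5]{fulkosvog19}. Your sketch reconstructs essentially the argument used there (and in the string-case predecessor \cite[Thm.~11]{drostuvog10}): run weights live in the finite monoid $M=\langle D\rangle_\otimes$, values of $\sem{\A}$ live in the finite monoid $\langle M\rangle_\oplus$, the naive Rabin--Scott-style vector $h_\xi\in V^Q$ fails without distributivity, and the correct state summary records, for each pair $(q,m)\in Q\times M$, the number of runs $\rho\in\R_\A(q,\xi)$ of weight $m$ up to a finite normalization. So the approach is the right one and is the standard one.

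The one place where your phrase ``multiplicities normalized modulo the finite additive orders'' needs to be made precise is your step (ii). If you normalize the multiplicity of each weight $m$ only modulo the additive order of that particular $m$, the inductive step does not obviously go through: for $\xi=\sigma(\xi_1,\dots,\xi_k)$ the multiplicity of a composite weight $m_1\otimes\cdots\otimes m_k\otimes\delta_k(q_1\cdots q_k,\sigma,q)$ is a sum over tuples of \emph{products} of the children's multiplicities, and to know this sum modulo the order of the composite weight you need the children's counts modulo something uniform, not merely modulo the orders of the $m_i$. The fix is to take a single congruence $\theta$ on $\N$ with index $\max\{i(m)\mid m\in M\}$ and period $\lcm\{p(m)\mid m\in M\}$ (finite because every $m\in M$ has finite additive order); $\theta$ is a congruence of the \emph{semiring} $(\N,+,\cdot)$, so sums and products of $\theta$-classes are well defined, the summaries live in the finite set $(\N/\theta)^{Q\times M}$, and $\theta$ refines each $\theta_m$, so the recovery formula $\sem{\A}(\xi)=\bigoplus_{q,m}\bigl(\mathrm{mult}_\xi(q,m)\bigr)\,(m\otimes F_q)$ is well defined (note $m\otimes F_q\in M$). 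With that adjustment your steps (i)--(iii) all go through and the construction is correct.
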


\begin{quote}
\em In the rest of this paper, we let $\A = (Q,\delta,F)$ be an arbitrary $(\Sigma,B)$-wta.
\end{quote}

\section{Trim wta}\label{sect:trim-wta}

In this section we define the concept of trim wta. Moreover, we show that, for each wta which satisfies certain simple properties, an equivalent trim wta can be constructed effectively. For this, to each $(\Sigma,B)$-wta $\A$, we associate a context-free grammar 
$\G(\A)$ and show that $\A$ is trim if and only if $\G(\A)$ is reduced. Then we exploit  the fact that for each context-free grammar one can construct effectively an equivalent reduced context-free grammar \cite[Thm.~3.2.3]{har78}.

A state $p \in Q$  is \emph{useful (in $\A$)} if there exist $\xi \in \T_\Sigma$ and  $\rho \in \R_\A(\xi)$ such that  $F_{\rho(\varepsilon)} \ne \0$ and $p \in \im (\rho)$.
The $(\Sigma,B)$-wta $\A$ is \emph{trim} if each of its states is useful.

Let $G=(N,\Delta,P,S)$  be a context-free grammar \cite{har78,hopmotull07}, with nonterminal set $N$, terminal set $\Delta$,  set $P$ of rules, and initial nonterminal $S \in N$. We  denote by $\mathrm{L}(G)$ the language generated by $G$.
A nonterminal $A \in N$ is \emph{useful (in $G$)} if there exist $\alpha,\beta \in (N\cup\Delta)^*$ and $w \in \Delta^*$ such that $S \Rightarrow^* \alpha A \beta \Rightarrow^* w$. Then $G$ is \emph{reduced} if each of its nonterminals is useful \cite[p.~78]{har78}.

To each  $(\Sigma,B)$-wta $\A$, we associate the context-free grammar $\G(\A) = (N,\Delta,P,S)$ where $S$ is a new symbol, $N=Q\cup\{S\}$, $\Delta = \Sigma \cup \Xi$ and $\Xi$ consists of the two parentheses ( and ) and the comma, 
and $P$ is defined as follows:
\begin{compactitem}
\item for each $q \in Q$, if $F_q \ne \0$, then $S \rightarrow q$ is in $P$ and
\item for every $k \in \N$, $\sigma \in \Sigma^{(k)}$, $q_1,\ldots,q_k,q \in Q$:
if  $\delta_k(q_1 \cdots q_k, \sigma,q) \ne \0$, then $q \rightarrow \sigma(q_1, \ldots, q_k)$ is in $P$.
\end{compactitem}
Then it can be shown that, for each $\xi \in \T_\Sigma$, there exists a bijection between the set of runs $\rho\in \R_\A(\xi)$ with $F_{\rho(\varepsilon)}\ne \0$
and the set of leftmost derivations of  $\G(\A)$ for $\xi$, cf. Figure \ref{fig:run-rule-tree}.
Hence a state $p\in Q$ is useful in $\A$ if and only if it is useful in  $\G(\A)$. This implies that $\A$ is trim if and only if $\G(\A)$ is reduced.

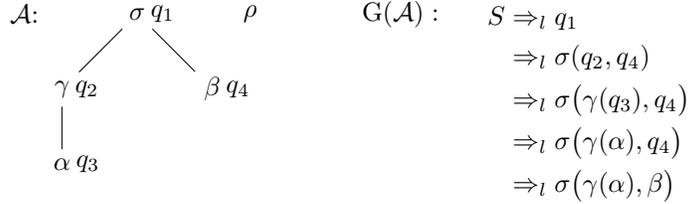
\begin{figure}[t]
\centering
\begin{tikzpicture}
\node at (-1.5,0) {$\A$:};
\node at (1.5,0) {$\rho$};
\node at (0,0) (eps) {$\sigma$} node[right = -5pt of eps] {$q_1$};
\node at (-1,-1) (1) {$\gamma$} node[right = -5pt of 1] {$q_2$};
\node at (1,-1) (2) {$\beta$} node[right = -5pt of 2] {$q_4$};
\node at (-1,-2) (11) {$\alpha$} node[right = -5pt of 11] {$q_3$};

\draw[-]
  (eps) -- (1)
  (eps) -- (2)
  (1) -- (11)
;

\node[anchor=north] at (6,0.75) {
\begin{minipage}{2em}
\begin{align*}
S &\Rightarrow_l q_1\\
&\Rightarrow_l \sigma(q_2,q_4)\\
&\Rightarrow_l \sigma\big(\gamma(q_3),q_4\big)\\
&\Rightarrow_l \sigma\big(\gamma(\alpha),q_4\big)\\
&\Rightarrow_l \sigma\big(\gamma(\alpha),\beta\big)
\end{align*}
\end{minipage}
};
\node at (3.5,0) {$\G(\A):$};

\end{tikzpicture}
\caption{\label{fig:run-rule-tree} A run $\rho \in \R_\A(q_1,\xi)$, where $F_{q_1}\ne \0$ and $\xi = \sigma(\gamma(\alpha),\beta)$ and the leftmost derivation of $\G(\A)$ for $\xi$ which corresponds to $\rho$.}
\end{figure}

We say that the strong bimonoid  $B$ \emph{has an effective test for $\0$}  if for each  $b \in B$  we can decide whether $b = \0$.

\begin{theorem} \label{thm:A'-equivalent-to-A} \cite[Lm.~5]{drofulkosvog20}
  Let $B$ have an effective test for $\0$ and $\A$ be a $(\Sigma,B)$-wta. If $\A$ is given effectively and has a useful state,
  then we can construct effectively a $(\Sigma,B)$-wta $\A'$  such that  $\A'$ is trim and $\sem{\A'}=\sem{\A}$. If $\A$ is finitely ambiguous, then $\A'$ is so.
\end{theorem}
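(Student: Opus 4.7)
The plan is to leverage the correspondence between $\A$ and the context-free grammar $\G(\A)$ introduced just before the statement, so that usefulness of states in $\A$ reduces to usefulness of nonterminals in $\G(\A)$. Since $B$ has an effective $\0$-test, the rules of $\G(\A)$ can be enumerated effectively (we need only decide, for each candidate transition $\delta_k(q_1\cdots q_k,\sigma,q)$, whether its weight is $\0$, and likewise for each $F_q$). Hence we can construct $\G(\A)$ effectively. The assumption that $\A$ has a useful state means that $\G(\A)$ has at least one useful nonterminal, equivalently $\mathrm{L}(\G(\A))\neq\emptyset$.

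Next I would invoke the classical effective procedure for reducing a context-free grammar (as in \cite[Thm.~3.2.3]{har78}): compute the set of nonterminals from which some terminal string is derivable, intersect with the set of nonterminals reachable from $S$, and call the result $N'\subseteq Q\cup\{S\}$. By the remark connecting runs of $\A$ with $F_{\rho(\varepsilon)}\ne\0$ to leftmost derivations of $\G(\A)$ (Figure \ref{fig:run-rule-tree}), the set $Q':=N'\setminus\{S\}$ is precisely the set of useful states of $\A$, and this set is nonempty by hypothesis.

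I would then define the trimmed wta $\A' = (Q',\delta',F')$ by restriction: $F'_q := F_q$ for $q\in Q'$, and $\delta'_k(q_1\cdots q_k,\sigma,q) := \delta_k(q_1\cdots q_k,\sigma,q)$ for every $q_1,\dots,q_k,q\in Q'$. Clearly every state of $\A'$ is useful in $\A'$, so $\A'$ is trim. To see $\sem{\A'}=\sem{\A}$, observe that if $\rho\in\rR_\A(\xi)$ satisfies $F_{\rho(\varepsilon)}\ne\0$, then the very existence of $\rho$ witnesses that every state in $\im(\rho)$ is useful in $\A$, so $\im(\rho)\subseteq Q'$ and hence $\rho\in\rR_{\A'}(\xi)$ with $F'_{\rho(\varepsilon)}=F_{\rho(\varepsilon)}\ne\0$. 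Conversely, every $\rho\in\rR_{\A'}^{F'\ne\0}(\xi)$ is, by construction, a run of $\A$ with the same weight and the same $F$-value. Therefore $\rR_{\A'}^{F'\ne\0}(\xi)=\rR_\A^{F\ne\0}(\xi)$ with matching weights, and summing yields $\sem{\A'}(\xi)=\sem{\A}(\xi)$.

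Finally, the equality $\rR_{\A'}^{F'\ne\0}(\xi)=\rR_\A^{F\ne\0}(\xi)$ immediately gives the ambiguity claim: if $|\rR_\A^{F\ne\0}(\xi)|\le K$ for every $\xi\in\T_\Sigma$, then the same bound holds for $\A'$. I do not expect any significant obstacle here; the only point that needs care is the bookkeeping argument that no run contributing to $\sem{\A}$ ever leaves $Q'$, which is exactly the content of the definition of useful state. The effectivity hinges entirely on the effective $\0$-test (to build $\G(\A)$) and the standard effective reduction of a context-free grammar (to identify $Q'$).
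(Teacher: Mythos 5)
Your proposal is correct and follows essentially the same route as the paper: build $\G(\A)$ via the effective $\0$-test, apply the effective reduction of context-free grammars to identify the useful states $Q'$, restrict $\A$ to $Q'$, and observe that every run with nonzero final weight stays inside $Q'$ (the paper phrases this contrapositively — a run leaving $Q'$ must have final weight $\0$ — but the content is the same). Your identification $\rR_{\A'}^{F'\ne\0}(\xi)=\rR_\A^{F\ne\0}(\xi)$ also cleanly yields the preservation of finite ambiguity, which the paper states without elaboration.
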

\begin{proof}
  Using the effective test for $\0$, we can construct effectively the context-free grammar $\G(\A)=(N,\Delta,P,S)$ (recall that $N = Q \cup \{S\}$). Due to our assumption on $\A$ we have $\mathrm{L}(\G(\A))\ne \emptyset$.  Thus, by \cite[Thm.~3.2.3]{har78}, a context-free grammar $G'=(N',\Delta,P',S)$ can be constructed effectively such that $G'$ is reduced and $\mathrm{L}(G')= \mathrm{L}(\G(\A))$. By the proof of that theorem, we know that $N' = Q' \cup \{S\}$, where $Q'$ is the set of all useful nonterminals in $Q$. Hence $Q'$ is the set of all useful states of $\A$. Moreover, $Q'\ne \emptyset$ by our assumption on $\A$.

Now let $\A' = (Q',\delta',F')$ be the $(\Sigma,B)$-wta such that, for each $k\in \N$, $\delta'_k=\delta_k|_{(Q')^k\times \Sigma^{(k)}\times Q'}$, and $F'=F|_{Q'}$.
It is obvious that $\A'$ is trim and $\A'$ is finitely ambiguous if $\A$ is so.

Lastly we prove that $\sem{\A} = \sem{\A'}$. Let $\xi \in \T_\Sigma$. Obviously, $\R_{\A'}(\xi) \subseteq \R_\A(\xi)$ and for each $\rho \in \R_{\A'}(\xi)$ we have $\wt_{\A'}(\xi,\rho) = \wt_\A(\xi,\rho)$. If $\rho \in \R_\A(\xi)\setminus \R_{\A'}(\xi)$, then there exist $p\in \im(\rho)$ such that $p$ is not useful. Then $F_{\rho(\varepsilon)}=\0$ and hence $\wt_\A(\xi,\rho) \otimes F_{\rho(\varepsilon)}=\0$.  
  Thus we can compute
  \begin{align*}
    \sem{\A}(\xi) = \bigoplus_{\rho \in \R_\A(\xi)} \wt_\A(\xi,\rho) \otimes F_{\rho(\varepsilon)}
                  =  \bigoplus_{\rho \in \R_{\A'}(\xi)} \wt_{\A'}(\xi,\rho) \otimes F'_{\rho(\varepsilon)}
    =\sem{\A'}(\xi).
    \end{align*}
  \end{proof}

\section{Pumping lemma} \label{sect:pumping-lemma}

In this section, we wish to prove a pumping lemma for runs of weighted tree automata. Pumping lemmas are used in order to achieve structural implications on small or particular large trees (cf. \cite[Lm.~2.10.1]{gecste84} and \cite[Lm.~5.5]{bor04}).
Essentially, we follow the classical approach for unweighted tree automata combined with an analysis
of Equality~\eqref{eq:weight-of-a-run}. Assume we are given a wta $\A$  with state set $Q$, a tree  $\xi \in \T_\Sigma$
with height greater than $|Q|$  and a run  $\kappa$  of  $\A$  on  $\xi$. As for unweighted tree automata,
choose a path, i.e., a linearly ordered subset of positions, in $\xi$  whose length equals  $\hgt(\xi)$.
Clearly, there are two positions  $u,v$  in this path with  $\kappa(u) = \kappa(v)$  in  $Q$; say, $u$  is  above $v$, i.e., there exists $w \in \bbN^*_+$ such that $v=uw$ (cf. Figure~\ref{fig:viz-pumping-intro}).
Now we consider the subtrees  $\xi|_u$ (respectively, $\xi|_v$)  comprising all positions of $\xi$  which are equal to or below $u$ (respectively, $v$). Clearly, we can cut out the context $(\xi|_u)|^w$, thus replacing the subtree $\xi|_u$ by  $\xi|_v$  and obtaining a smaller tree for which a restriction of the run $\kappa$ leads to the same state as $\kappa$. But, we can also substitute a copy of the context $(\xi|_u)|^w$ at position  $v$. We obtain a tree  $\xi'$, and we can copy the corresponding part of the mapping $\kappa$ to obtain a run $\kappa'$  on  $\xi'$ leading again to the same final state as $\kappa$.

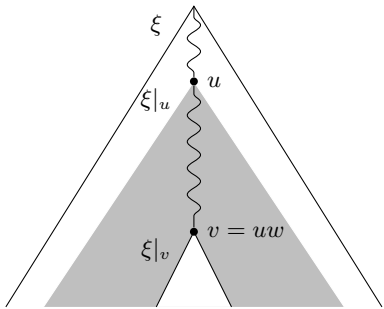
\begin{figure}[t]
	\centering
	\small
	\begin{tikzpicture}
	    \draw (0,0) -- (2.5,4) -- (5,0); 
	    \draw[fill=lightgray,draw=white] (0.5,0) -- (2.5,3) -- (4.5,0);
	    \draw[fill=white] (2,0) -- (2.5,1) -- (3,0);
	    %\node at (4.5,0.25) {$\xi|_u \setminus \xi|_v$};
	    \node at (2,3.75) {$\xi$};
	    \node at (2,2.75) {$\xi|_u$};
	    \node at (2,0.75) {$\xi|_v$};
	    \node[fill=black,circle,inner sep=0pt,minimum size=3pt,label=right:{$u$}] at (2.5,3) (u) {};
	    \node[fill=black,circle,inner sep=0pt,minimum size=3pt,label=right:{$v=uw$}] at (2.5,1) (v) {};
	    \draw[decorate, decoration={snake,pre length=4pt, post length=1pt}] (2.5,4) -- (u);
	    \draw[decorate, decoration={snake,pre length=3pt, post length=3pt}] (u) -- (v);
	\end{tikzpicture}
	\caption{\label{fig:viz-pumping-intro} Illustration of the tree $\xi$, the positions $u$ and $v$, and the subtrees $\xi|_u$ and $\xi|_v$. Moreover, the context  $(\xi|_u)|^w$ is shaded.}
\end{figure}

Now when we calculate the weight of $\kappa'$  on  $\xi'$  according to Equality~\eqref{eq:weight-of-a-run}, a careful analysis shows
that after an insertion process two factors of the product of weights originating from transitions at positions of the context $(\xi|_u)|^w$ get repeated. Hence, if we repeat the insertion process,
then the two factors are replaced by their powers.

To make this outline formally exact, we show that the product of weights of the run $\kappa$  on the context $(\xi|_u)|^w$ splits into two factors, a 'left one' and  'right one' (see Observation~\ref{obs:decomp-run-left-right}). This product of weights with the insertion process performed once is analyzed in Lemma~\ref{lm:combining-runs}; it turns out that splitting described before leads to
an additional 'left factor' and an additional 'right factor'. The consequence of the cutting respectively replacing process
in general for the product of weights is described in Lemma~\ref{lm:decomposition-of-a-run}; the multiple insertion process leads to
an additional 'left power' and an additional 'right power'. The whole pumping lemma is given in Theorem~\ref{thm:pumping-lemma-for-runs}. 

The question may arise why we present another pumping lemma and do not use an existing one (cf. \cite[Lm.~5.5]{bor04}). To answer this we note that Borchardt's setting deals with deterministic wta over semirings and employs initial algebra semantics, whereas in Theorem~\ref{thm:pumping-lemma-for-runs} we deal with (arbitrary) wta over strong bimonoids and employ run semantics. Nevertheless, if we consider the class of all deterministic wta over semirings, then the two settings coincide.

Now we introduce some notations.

Let $c \in \C_\Sigma$,
$\zeta \in \T_\Sigma$, $\{v\} = \pos_\square(c)$, $q',q \in Q$,
$\rho \in \R_\A(q',c,q)$, and $\theta \in \R_\A(q,\zeta)$.
The {\em combination of $\rho$ and $\theta$}, denoted by $\rho[\theta]$, is
the mapping $\rho[\theta]: \pos(c[\zeta]) \to Q$ defined for every $u \in \pos(c[\zeta])$
as follows: if $u=vw$ for some $w$, then we define $\rho[\theta](u)=\theta(w)$,
otherwise we define $\rho[\theta](u)=\rho(u)$. Clearly, $\rho[\theta] \in \R_\A(q',c[\zeta])$.
For every $\xi \in \T_\Sigma$, $\rho \in \R_\A(\xi)$, and $v \in \pos(\xi)$, we define the
run $\rho|^v$ on the context $\xi|^v$ such that for every $w \in \pos(\xi|^v)$ we set
$\rho|^v(w) =\rho(w)$. If $\rho \in \R_\A(\square)$ with $\rho(\varepsilon)=q$ for some $q \in Q$,
then sometimes we write $\widetilde{q}$ for $\rho$.

Let $c \in \C_\Sigma$, $\{v\} =\pos_\square(c)$, and $\rho \in \R_\A(c)$. We define two mappings $l_{c,\rho}: \prefix(v) \to B$ and $r_{c,\rho}: \prefix(v) \to B$ inductively on the length of their arguments (cf. \cite[p.~526]{bor04} for deterministic wta). Intuitively, the product \eqref{eq:weight-of-a-run}, which yields the element $\wt(c,\rho) \in B$, can be split into a left subproduct $l_{c,\rho}(\varepsilon)$ and a right subproduct $r_{c,\rho}(\varepsilon)$, where the border is given by the factor $\mathbb{1}$ coming from the weight of $\Box$. Figure \ref{fig:l-and-r} shows the illustration of mappings $l_{c,\rho}$ and $r_{c,\rho}$. Formally, let $w \in \prefix(v)$. Then, assuming that $c(w)=\sigma$ and $\rk(\sigma)=k$, we let
\[
  l_{c,\rho}(w) =
  \begin{cases}
 \1   & \text{ if $w=v$}\\
 \bigotimes_{j=1}^{i-1} \wt(c|_{wj},\rho|_{wj}) \otimes l_{c,\rho}(wi)  & \text{ if $wi \in \prefix(v)$ for some $i \in \N_+$}
    \end{cases}
  \]
\[   r_{c,\rho}(w) =
  \begin{cases}
 \1  & \hspace*{-14mm}\text{ if $w=v$}\\
 r_{c,\rho}(wi) \otimes \bigotimes_{j=i+1}^{k} \wt(c|_{wj},\rho|_{wj}) \otimes \delta_k(\rho(w1) \cdots \rho(wk),\sigma,\rho(w)) &\\
 &\hspace*{-54mm} \text{ if $wi \in \prefix(v)$ for some $i \in \N_+$}\enspace.
    \end{cases}
  \] 
In the sequel, we abbreviate $l_{c,\rho}(\varepsilon)$ and $r_{c,\rho}(\varepsilon)$ by $l_{c,\rho}$ and $r_{c,\rho}$, respectively.

\begin{sidewaysfigure}
\centering
\begin{tikzpicture}
\footnotesize
\node[anchor=north] at (-7.1,1.5) (t1) {$\wt(c|_{w1},\rho|_{w1})$};
\node[anchor=north] at (-7.1,0.75) {$c|_{w1}$};

\node at (-5.6,1.25) {$\otimes \ldots \otimes$};

\node[anchor=north] at (-3.625,1.5) (ti-1) {$\wt(c|_{w(i-1)},\rho|_{w(i-1)})$};
\node[anchor=north] at (-3.625,0.75) {$c|_{w(i-1)}$};

\node at (-2, 1.25) {$\otimes$};
\node[anchor=north] at (-1.1,1.5) {$l_{c,\rho}(wi)$};
\draw[rounded corners=7pt] (-0.475,1) rectangle (-1.725,1.5); 

\node[anchor=north] at (0,0.75) {$c|_{wi}$};
\draw (0,0) -- (0,-0.25) node [midway,right] {$v$};
\node at (0,-0.375) {$\square$};

\node at (2,1.25) {$\otimes$};
\node[anchor=north] at (1.1,1.5) {$r_{c,\rho}(wi)$};
\draw[rounded corners=7pt] (0.475,1) rectangle (1.725,1.5); 

\node[anchor=north] at (3.625,1.5) (ti+1) {$\wt(c|_{w(i+1)},\rho|_{w(i+1)})$};
\node[anchor=north] at (3.625,0.75) {$c|_{w(i+1)}$};

\node at (5.6,1.25) {$\otimes \ldots \otimes$};

\node[anchor=north] at (7.1,1.5) (tk) {$\wt(c|_{wk},\rho|_{wk})$};
\node[anchor=north] at (7.1,0.75) {$c|_{wk}$};

\node at (8.2,1.25) {$\otimes$};

\node at (0,4.5) (s) {$\sigma$};
\node at (-1.125,4.5) {$l_{c,\rho}(w)$};
\draw[rounded corners=7pt] (-0.5,4.25) rectangle (-1.75,4.75); 
\node at (1.125,4.5) {$r_{c,\rho}(w)$};
\draw[rounded corners=7pt] (0.5,4.25) rectangle (1.75,4.75);

\node at (-7.1,1.75) (da11) {};
\node at (-2,4.5) (da12) {};

\draw[-Implies,double distance=2pt] (da11) to [out=90,in=180] (da12);

\node at (6.5,3.25) (da21) {};
\node at (2,4.5) (da22) {};

\draw[-Implies,double distance=2pt] (da21) to [out=90,in=0] (da22);

\draw[-]
  (-8.1,0) -- (-6.1,0) -- (t1) -- (-8.1,0)
  (-2.625,0) -- (-4.625,0) -- (ti-1) -- (-2.625,0)
  (-1,0) -- (1,0) -- (0,1.5) -- (-1,0)
  (2.625,0) -- (4.625,0) -- (ti+1) -- (2.625,0)
  (6.1,0) -- (8.1,0) -- (tk) -- (6.1,0)
  (s) -- (t1) node[midway,xshift=-24pt,anchor=north] {$1$}
  (s) -- (ti-1) node[midway,xshift=-20pt,anchor=north] {$i-1$}
  (s) -- (0,1.5) node[midway,xshift=-8pt,anchor=north] {$i$}
  (s) -- (ti+1) node[midway,xshift=-8pt,anchor=north] {$i+1$}
  (s) -- (tk) node[midway,xshift=-6pt,anchor=north] {$k$}
;

\node at (-4,2) {$\cdots$};
\node at (4,2) {$\cdots$};

\node at (6.5,3) {$\delta_k(\rho(w1) \cdots \rho(wk),\sigma,\rho(w))$};

\draw[dashed,smooth,line cap=round, rounded corners=10pt] (-8.1,0.875) -- (-0.375,0.875) -- (-0.375,1.625) -- (-8.1,1.625) -- cycle;

\draw[dashed,smooth,line cap=round, rounded corners=10pt] (8.5,0.875) -- (0.375,0.875) -- (0.375,1.625) -- (4.5,1.625) -- (4.5,3.25) -- (8.5,3.25) -- cycle;

\draw[decorate, decoration=snake] (0,6.5) -- (s);
\node[anchor=west] at (0,5.5) {$w$};
\draw (0,6.5) -- (-7.75,4.5);
\node at (-4,5.75) {$c \in \C_\Sigma$};
\draw (0,6.5) -- (8.25,4.5);
\node at (4.5,5.75) {$\rho \in \R_\A(c)$};

\end{tikzpicture}
\caption{\label{fig:l-and-r} Illustration of mappings $l_{c,\rho}$ and $r_{c,\rho}$}
\end{sidewaysfigure}
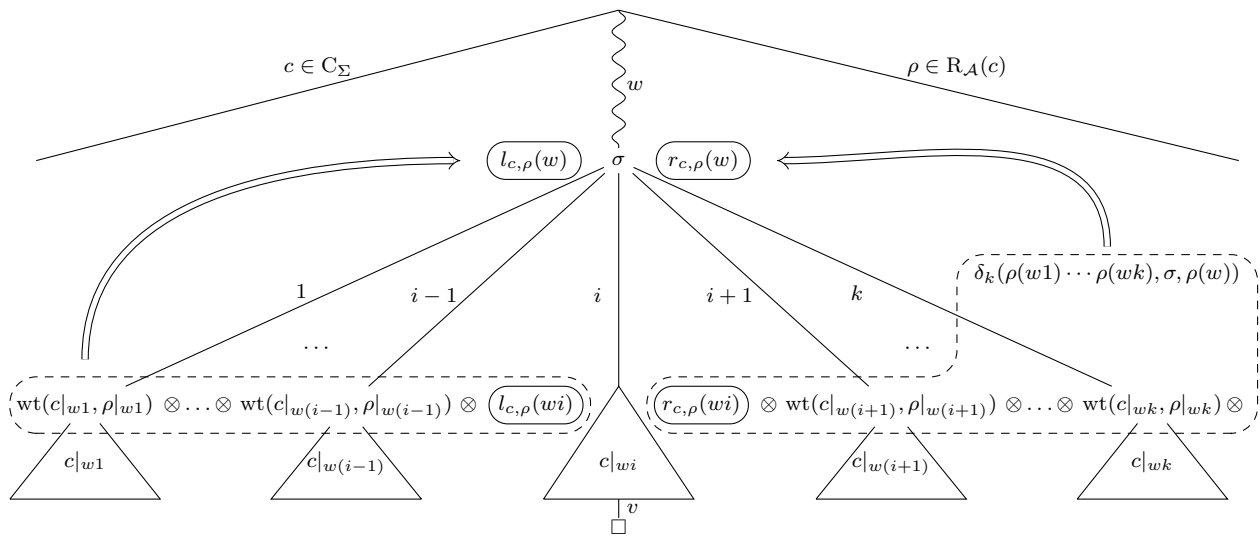

\begin{observation} \label{obs:decomp-run-left-right} \rm
Let $c \in \C_\Sigma$ and  $\rho \in \R_\A(c)$. Then $\wt(c,\rho) = l_{c,\rho} \otimes r_{c,\rho}$.
\end{observation}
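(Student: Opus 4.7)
The plan is to prove a slightly stronger claim by induction, namely that for every $w \in \prefix(v)$, where $\{v\}=\pos_\square(c)$, one has
\[
\wt(c|_w,\rho|_w) \;=\; l_{c,\rho}(w) \otimes r_{c,\rho}(w),
\]
and then specialize to $w = \varepsilon$ to obtain $\wt(c,\rho) = l_{c,\rho} \otimes r_{c,\rho}$. The induction goes from the bottom of the prefix path upward, i.e., on the length $|v|-|w|$, so that at each step I may appeal to the value of $l_{c,\rho}$ and $r_{c,\rho}$ at the (longer) prefix $wi \in \prefix(v)$.

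The base case is $w = v$. Then $c|_v = \square$, so $\rho|_v = \widetilde{\rho(v)}$ and by clause (i) of the definition of $\wt$ we have $\wt(c|_v,\rho|_v) = \1$. Simultaneously, by the base case of the definitions of $l_{c,\rho}$ and $r_{c,\rho}$, we have $l_{c,\rho}(v) = r_{c,\rho}(v) = \1$, so indeed $\wt(c|_v,\rho|_v) = \1 = \1 \otimes \1 = l_{c,\rho}(v) \otimes r_{c,\rho}(v)$.

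For the inductive step, let $w \in \prefix(v) \setminus \{v\}$ and let $i \in \N_+$ be the unique index with $wi \in \prefix(v)$; set $\sigma = c(w)$ and $k = \rk(\sigma)$. By Equality~\eqref{eq:weight-of-a-run},
\[
\wt(c|_w,\rho|_w) = \Bigl(\bigotimes_{j=1}^{i-1}\wt(c|_{wj},\rho|_{wj})\Bigr) \otimes \wt(c|_{wi},\rho|_{wi}) \otimes \Bigl(\bigotimes_{j=i+1}^{k}\wt(c|_{wj},\rho|_{wj})\Bigr) \otimes \delta_k(\rho(w1)\cdots\rho(wk),\sigma,\rho(w)).
\]
The induction hypothesis gives $\wt(c|_{wi},\rho|_{wi}) = l_{c,\rho}(wi) \otimes r_{c,\rho}(wi)$. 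Substituting and regrouping via associativity of $\otimes$ so that the left factors up to and including $l_{c,\rho}(wi)$ are multiplied first, and the right factors starting with $r_{c,\rho}(wi)$ are multiplied next, the first group collapses to $l_{c,\rho}(w)$ and the second to $r_{c,\rho}(w)$ by the recursive clauses of the definitions of $l$ and $r$. This yields $\wt(c|_w,\rho|_w) = l_{c,\rho}(w) \otimes r_{c,\rho}(w)$, completing the induction.

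This proof is essentially bookkeeping and the only subtlety is making sure the associative regrouping matches the two recursion clauses verbatim; no property beyond associativity of $\otimes$ is used. In particular, neither commutativity of $\otimes$ nor any distributivity is needed, so the statement holds over an arbitrary strong bimonoid, as claimed.
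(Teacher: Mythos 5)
Your proof is correct and is exactly the argument the paper leaves implicit by labelling this an Observation: a bottom-up induction along the path from the root to the $\square$-position, splitting the product in Equality~\eqref{eq:weight-of-a-run} at the child lying on that path so that the two recursion clauses for $l_{c,\rho}$ and $r_{c,\rho}$ absorb the left and right factors, using only associativity of $\otimes$. The base case and the regrouping both check out against the definitions, so nothing is missing.
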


The next lemma can be proved by an easy and straightforward induction on~$c$.
 
\begin{lemma} \rm \label{lm:combining-runs} (cf. \cite[Lm.~5.1]{bor04})
Let $c \in \C_\Sigma$, $\zeta \in \T_\Sigma$, $q', q \in Q$, $\rho \in \R_\A(q',c,q)$, and $\theta \in \R_\A(q,\zeta)$. Then $\wt(c[\zeta],\rho[\theta]) = l_{c,\rho} \otimes \wt(\zeta,\theta) \otimes  r_{c,\rho}$.
\end{lemma}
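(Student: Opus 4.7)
The plan is to prove the identity by structural induction on the context $c \in \C_\Sigma$, since that is exactly the natural recursive parameter on which $\wt$, $l_{c,\rho}$, and $r_{c,\rho}$ are all defined.

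For the base case $c = \square$, we have $\pos_\square(c) = \{\varepsilon\}$, so $l_{c,\rho} = r_{c,\rho} = \1$ by definition. Moreover $c[\zeta] = \zeta$, and $\rho[\theta] = \theta$ by the combination definition (since every position of $\zeta$ satisfies $v \cdot w = \varepsilon \cdot w$). Hence $\wt(c[\zeta],\rho[\theta]) = \wt(\zeta,\theta) = \1 \otimes \wt(\zeta,\theta) \otimes \1 = l_{c,\rho} \otimes \wt(\zeta,\theta) \otimes r_{c,\rho}$, which is the claim.

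For the inductive step, write $c = \sigma(c_1,\ldots,c_k)$ with $\sigma \in \Sigma^{(k)}$, where exactly one $c_i$ lies in $\C_\Sigma$ (all the others are in $\T_\Sigma$) and $\{iv'\} = \pos_\square(c)$ with $\{v'\} = \pos_\square(c_i)$. Then $c[\zeta] = \sigma(c_1,\ldots,c_{i-1},c_i[\zeta],c_{i+1},\ldots,c_k)$, and unfolding $\rho[\theta]$ at the children yields $(\rho[\theta])|_j = \rho|_j$ for $j \neq i$ and $(\rho[\theta])|_i = (\rho|_i)[\theta]$. Applying the recursive definition of $\wt$ gives
\[
\wt(c[\zeta],\rho[\theta]) \;=\; \Bigl(\bigotimes_{j=1}^{i-1} \wt(c_j,\rho|_j)\Bigr) \otimes \wt(c_i[\zeta],(\rho|_i)[\theta]) \otimes \Bigl(\bigotimes_{j=i+1}^{k} \wt(c_j,\rho|_j)\Bigr) \otimes \delta_k\bigl(\rho(1)\cdots\rho(k),\sigma,\rho(\varepsilon)\bigr),
\]
and the induction hypothesis applied to $c_i$ rewrites the middle factor as $l_{c_i,\rho|_i} \otimes \wt(\zeta,\theta) \otimes r_{c_i,\rho|_i}$.

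The only real bookkeeping — and the step I expect to require the most care — is to match the definitions of $l_{c,\rho}$ and $r_{c,\rho}$ at position $\varepsilon$ with those of $l_{c_i,\rho|_i}$ and $r_{c_i,\rho|_i}$ at position $\varepsilon$. A short side induction on the length of $w \in \prefix(iv')$ shows $l_{c,\rho}(w) = l_{c|_w,\rho|_w}(\varepsilon)$ (and analogously for $r$), because $(c|_w)|_j = c|_{wj}$ and $(\rho|_w)|_j = \rho|_{wj}$ make the two recursive defining equations identical. In particular $l_{c,\rho}(i) = l_{c_i,\rho|_i}$ and $r_{c,\rho}(i) = r_{c_i,\rho|_i}$. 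Substituting these into the defining equations of $l_{c,\rho}$ and $r_{c,\rho}$ at $\varepsilon$ yields
\[
l_{c,\rho} \;=\; \Bigl(\bigotimes_{j=1}^{i-1} \wt(c_j,\rho|_j)\Bigr) \otimes l_{c_i,\rho|_i}, \qquad r_{c,\rho} \;=\; r_{c_i,\rho|_i} \otimes \Bigl(\bigotimes_{j=i+1}^{k}\wt(c_j,\rho|_j)\Bigr) \otimes \delta_k\bigl(\rho(1)\cdots\rho(k),\sigma,\rho(\varepsilon)\bigr).
\]
Plugging these expressions back into the earlier identity and using only associativity of $\otimes$ collapses the result to $l_{c,\rho} \otimes \wt(\zeta,\theta) \otimes r_{c,\rho}$, completing the induction. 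No distributivity, commutativity, or monotonicity is used, so the lemma holds for arbitrary strong bimonoids as stated.
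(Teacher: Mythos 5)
Your proof is correct and follows exactly the route the paper intends: the paper gives no details, stating only that the lemma ``can be proved by an easy and straightforward induction on $c$,'' and your structural induction on $c$ — with the base case $c=\square$ and the inductive step matching the recursive definitions of $l_{c,\rho}$ and $r_{c,\rho}$ at the root against Equality~(1) via the side identity $l_{c,\rho}(i)=l_{c_i,\rho|_i}$ — fills in precisely those details using only associativity of $\otimes$.
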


Let $c \in \C_\Sigma$, $q \in Q$, and $\rho \in \R_\A(q,c,q)$ be a loop. For each $n \in \N$, the {\em nth power of $\rho$}, denoted by $\rho^n$, is the run on $c^n$ defined by induction as follows: $\rho^0=\sqrun{q}$ (note that $c^0=\square$) and $\rho^{n+1}=\rho[\rho^n]$. Next we apply the previous results to the weights of powers of loops.

\begin{lemma}\rm \label{lm:decomposition-of-a-run} (cf. \cite[Lm.~5.3]{bor04})
Let $c',c \in \C_\Sigma$ and $\zeta \in \T_\Sigma$, $q',q \in Q$, $\rho' \in \R_\A(q',c',q)$, $\rho \in \R_\A(q,c,q)$, and $\theta \in \R_\A(q,\zeta)$. Then, for each $n \in \N$,  
\[
  \wt(c'\big[c^n[\zeta]\big], \rho'\big[\rho^n[\theta]\big]) =
  l_{c',\rho'} \otimes (l_{c,\rho})^n \otimes \wt(\zeta,\theta) \otimes (r_{c,\rho})^n \otimes r_{c',\rho'}
  \enspace.
\]
\end{lemma}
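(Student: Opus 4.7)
The plan is to use two applications of Lemma~\ref{lm:combining-runs}. First I would peel off the outer context $c'$, reducing the claim to an analogous identity for the inner expression $c^n[\zeta]$. Second, I would prove that inner identity by induction on $n$, using Lemma~\ref{lm:combining-runs} once more in the inductive step to strip a single copy of $c$.

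Concretely, since $\rho$ is a $(q,q)$-loop on $c$, an easy induction on $n$ shows that $\rho^n$ is a $(q,q)$-run on $c^n$, hence $\rho^n[\theta]$ is a $q$-run on $c^n[\zeta]$. Therefore Lemma~\ref{lm:combining-runs} applies with $\rho'\in\R_\A(q',c',q)$ and $\rho^n[\theta]\in\R_\A(q,c^n[\zeta])$, giving
\[
\wt(c'[c^n[\zeta]],\,\rho'[\rho^n[\theta]]) \;=\; l_{c',\rho'} \otimes \wt(c^n[\zeta],\rho^n[\theta]) \otimes r_{c',\rho'}.
\]
It then suffices to establish, for every $n\in\N$, the auxiliary identity
\[
\wt(c^n[\zeta],\,\rho^n[\theta]) \;=\; (l_{c,\rho})^n \otimes \wt(\zeta,\theta) \otimes (r_{c,\rho})^n.
\]

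I would prove the auxiliary identity by induction on $n$. For $n=0$, recall $c^0=\square$ and $\rho^0=\sqrun{q}$, so $c^0[\zeta]=\zeta$, $\rho^0[\theta]=\theta$, and the right-hand side reduces to $\1\otimes\wt(\zeta,\theta)\otimes\1=\wt(\zeta,\theta)$. For the inductive step, using $c^{n+1}=c[c^n]$ and $\rho^{n+1}=\rho[\rho^n]$, we obtain $c^{n+1}[\zeta]=c[c^n[\zeta]]$ and $\rho^{n+1}[\theta]=\rho[\rho^n[\theta]]$. A second application of Lemma~\ref{lm:combining-runs}, now with $c$ as the outer context and $c^n[\zeta]$ as the inner tree, yields
\[
\wt(c^{n+1}[\zeta],\rho^{n+1}[\theta]) \;=\; l_{c,\rho} \otimes \wt(c^n[\zeta],\rho^n[\theta]) \otimes r_{c,\rho}.
\]
Plugging in the inductive hypothesis and using associativity of $\otimes$ in $(B,\otimes,\1)$ finishes the induction. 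Combining the two results gives the displayed formula.

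The main obstacle is essentially bookkeeping: I must keep track of the fact that the argument runs match up with the hypotheses of Lemma~\ref{lm:combining-runs}, i.e., that $\rho^n[\theta]$ really begins in state $q$ so it can serve as the inner run in the next application. Once this invariant is in place, the computation is a straightforward double induction and the placement of factors $l_{c,\rho}$ and $r_{c,\rho}$ on the two sides of $\wt(\zeta,\theta)$ follows directly from the form of Lemma~\ref{lm:combining-runs}; no properties of $B$ beyond monoid associativity of $\otimes$ are needed.
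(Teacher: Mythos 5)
Your proposal is correct and follows essentially the same route as the paper: the paper also first establishes $\wt(c^n[\zeta],\rho^n[\theta]) = (l_{c,\rho})^n \otimes \wt(\zeta,\theta) \otimes (r_{c,\rho})^n$ by induction on $n$ (via Lemma~\ref{lm:combining-runs}) and then applies Lemma~\ref{lm:combining-runs} once more to peel off the outer context $c'$. You merely present the two steps in the opposite order and spell out the induction in more detail.
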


\begin{proof} We can prove easily by induction on $n$ that
  \begin{equation}
\wt(c^n[\zeta],\rho^n[\theta]) = (l_{c,\rho})^n \otimes \wt(\zeta,\theta) \otimes  (r_{c,\rho})^n \text{  for each $n \in \N$}\enspace. \label{eq:n-context}
    \end{equation}

Then for each $n\in N$ we have
  \begin{align*}
    \wt(c'\big[c^n[\zeta]\big], \rho'\big[\rho^n[\theta]\big])
    &= l_{c',\rho'} \otimes \wt(c^n[\zeta],\rho^n[\theta]) \otimes r_{c',\rho'} \tag{\text{by Lemma \ref{lm:combining-runs}}}\\
    &= l_{c',\rho'} \otimes (l_{c,\rho})^n \otimes \wt(\zeta,\theta) \otimes (r_{c,\rho})^n \otimes r_{c',\rho'} \tag{\text{by \eqref{eq:n-context}}} 
  \end{align*} 
\end{proof}

Finally, we recall from \cite{bor04} the pumping lemma for runs of $\A$ on trees in $\T_\Sigma$ which are large enough. We note that $B$ need not be commutative.

\begin{theorem}  \label{thm:pumping-lemma-for-runs} (pumping lemma, cf. \cite[Lm.~5.5]{bor04})
Let $\xi \in \T_\Sigma$, $q' \in Q$, $\kappa \in \R_\A(q',\xi)$. If $\hgt(\xi) \geq |Q|$, then
there exist  $c',c \in \C_\Sigma$, $\zeta \in \T_\Sigma$, $q \in Q$, $\rho' \in \R_\A(q',c',q)$, $\rho \in \R_\A(q,c,q)$, and $\theta \in \R_\A(q,\zeta)$ such that $\xi=c'\big[c[\zeta]\big]$, $\kappa=\rho'\big[\rho[\theta]\big]$,
$\hgt(c) > 0$, $\hgt\big(c[\zeta]\big) < |Q|$, and, for each $n \in \N$,
\[
  \wt(c'\big[c^n[\zeta]\big], \rho'\big[\rho^n[\theta]\big]) =
  l_{c',\rho'} \otimes (l_{c,\rho})^n \otimes \wt(\zeta,\theta) \otimes (r_{c,\rho})^n \otimes r_{c',\rho'}
  \enspace.
\]
\end{theorem}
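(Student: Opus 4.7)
The plan is to isolate a purely combinatorial decomposition of $\xi$ and $\kappa$ and then invoke Lemma~\ref{lm:decomposition-of-a-run}, which already carries out the entire weight analysis. So the only real content of the theorem beyond what has been proved earlier in Section~\ref{sect:pumping-lemma} is the existence of the decomposition with $\hgt(c) > 0$ and $\hgt(c[\zeta]) < |Q|$.

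First I choose a path $\pi$ from $\varepsilon$ to a leaf $v^*$ of maximal depth $m := \hgt(\xi) \geq |Q|$, and write its positions as $\varepsilon = u_0, u_1, \ldots, u_m = v^*$ with $|u_i| = i$. Among the $|Q|+1$ deepest positions $u_{m-|Q|}, u_{m-|Q|+1}, \ldots, u_m$, the pigeonhole principle applied to $\kappa$ yields indices $m - |Q| \leq i < j \leq m$ with $\kappa(u_i) = \kappa(u_j) =: q$. A careful choice of the repetition — preferring a pair that lies strictly below the topmost of these $|Q|+1$ positions whenever one exists — gives the strict bound $\hgt(\xi|_{u_i}) < |Q|$.

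Writing $u_j = u_i w$ (so that $w \neq \varepsilon$), I set $c' = \xi|^{u_i}$, $c = (\xi|_{u_i})|^w$, and $\zeta = \xi|_{u_j}$, whence $\xi = c'[c[\zeta]]$, $\hgt(c) > 0$, and $\hgt(c[\zeta]) = \hgt(\xi|_{u_i}) < |Q|$. The corresponding decomposition of the run is then obtained by simple restriction: let $\rho'$ be the restriction of $\kappa$ to $\pos(c')$, let $\rho(u) = \kappa(u_i u)$ for $u \in \pos(c)$, and set $\theta = \kappa|_{u_j}$. A direct check against the definitions of a run and of the combination $\rho'[\rho[\theta]]$ shows $\rho' \in \R_\A(q',c',q)$, $\rho \in \R_\A(q,c,q)$, $\theta \in \R_\A(q,\zeta)$, and $\kappa = \rho'[\rho[\theta]]$.

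The weight identity for every $n \in \N$ is now precisely the conclusion of Lemma~\ref{lm:decomposition-of-a-run} applied to these objects, with no further work required. The main obstacle is the combinatorial step: the pigeonhole must be used with enough care to simultaneously secure $\hgt(c) > 0$ (immediate from $i < j$) and the strict bound $\hgt(c[\zeta]) < |Q|$, which forces the choice of repetition near the bottom of the path $\pi$ rather than an arbitrary one.
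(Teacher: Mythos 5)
Your proposal follows exactly the route of the paper's proof: locate a state repetition of $\kappa$ along a deepest root-to-leaf path, cut $\xi$ and $\kappa$ at the two repeating positions, and delegate the weight identity entirely to Lemma~\ref{lm:decomposition-of-a-run}. The paper compresses the combinatorial step into a single unproved sentence (``there exist $u,w$ with $uw\in\pos(\xi)$, $|w|>0$, $\hgt(\xi|_u)<|Q|$ and $\kappa(u)=\kappa(uw)$''), so your attempt to actually justify that step is the only place where the two texts can differ -- and it is exactly where a genuine gap appears.

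The gap is the strict inequality $\hgt(c[\zeta])<|Q|$. Your ``careful choice'' of the repeated pair is qualified by ``whenever one exists,'' and it need not exist: if the states $\kappa(u_{m-|Q|+1}),\dots,\kappa(u_m)$ at the $|Q|$ deepest positions of the path are pairwise distinct, they exhaust $Q$, so the only repetition among the $|Q|+1$ deepest positions involves $u_{m-|Q|}$, and $\hgt(\xi|_{u_{m-|Q|}})$ equals $|Q|$ exactly (it is $\le|Q|$ because $\hgt(\xi)=m$, and $\ge|Q|$ because the path continues to depth $m$). Then your construction gives $\hgt(c[\zeta])=|Q|$. Moreover, the defect cannot be repaired by a cleverer choice: for $|Q|=2$, $\xi=\gamma(\gamma(\alpha))$ and the run $\kappa$ with $\kappa(\varepsilon)=p$, $\kappa(1)=q$, $\kappa(11)=p$, the only pair $u,uw$ with $w\neq\varepsilon$ and $\kappa(u)=\kappa(uw)$ is $u=\varepsilon$, $uw=11$, which forces $c[\zeta]=\xi$ of height $2=|Q|$. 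So the pigeonhole argument only yields $\hgt(c[\zeta])\le|Q|$; the strict bound claimed in the statement (and asserted without proof in the paper) is an off-by-one, and fixing it consistently would also touch the definition of ``small loops'' and Lemma~\ref{lm:weight-of-reduced-tree}. Apart from this point, your restriction-based decomposition of the run and the appeal to Lemma~\ref{lm:decomposition-of-a-run} coincide with the paper's proof and are correct.
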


\begin{proof}
  Since $\hgt(\xi) \geq |Q|$ there exist  $u,w \in \N_+^*$ such that $uw \in \pos(\xi)$, $|w| > 0$, $\hgt(\xi|_u) < |Q|$, and $\kappa(u)=\kappa(uw)$. Then we let $c'=\xi|^u$, $c=(\xi|_u)|^w$, $\zeta=\xi|_{uw}$.  Clearly, \(\xi=c'\big[c[\zeta]\big]\).
  Moreover, we set $\rho'=\kappa|^u$, $\rho=(\kappa|_u)|^w$ and $\theta=\kappa|_{uw}$.
Then the statement follows from Lemma~\ref{lm:decomposition-of-a-run}. 
\end{proof}

We say that \emph{small loops of $\A$ have weight~$\1$} if, for every $q \in Q$, $c \in \C_\Sigma$, and loop $\rho \in \R_\A(q,c,q)$, if $\hgt(c) < |Q|$, then $\wt(c,\rho) = \1$.

\begin{lemma} \label{lm:weight-of-reduced-tree} \rm
  Let $B$ be commutative or one-product free. If small loops of $\A$ have weight $\1$, then, for every $\xi \in \T_\Sigma$, $q' \in Q$, and $\kappa \in \R_\A(q',\xi)$, there exist $\xi' \in \T_\Sigma$ and $\kappa' \in \R_\A(q',\xi')$ such that $\hgt(\xi') < |Q|$ and $\wt(\xi,\kappa) = \wt(\xi', \kappa')$. 
  \end{lemma}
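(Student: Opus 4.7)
The plan is to prove the statement by induction on $\size(\xi)$. In the base case $\hgt(\xi) < |Q|$, we simply take $\xi' = \xi$ and $\kappa' = \kappa$. For the inductive step, assume $\hgt(\xi) \geq |Q|$ so that the pumping lemma (Theorem~\ref{thm:pumping-lemma-for-runs}) applies to $\xi$ and $\kappa$. It yields contexts $c', c \in \C_\Sigma$, a tree $\zeta \in \T_\Sigma$, a state $q \in Q$, and runs $\rho' \in \R_\A(q', c', q)$, $\rho \in \R_\A(q, c, q)$, $\theta \in \R_\A(q, \zeta)$ with $\xi = c'[c[\zeta]]$, $\kappa = \rho'[\rho[\theta]]$, $\hgt(c) > 0$, and $\hgt(c[\zeta]) < |Q|$.

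The key observation is that $\rho$ is a small loop: indeed $\hgt(c) \leq \hgt(c[\zeta]) < |Q|$, so by the hypothesis on $\A$ we have $\wt(c, \rho) = \1$, which by Observation~\ref{obs:decomp-run-left-right} means $l_{c,\rho} \otimes r_{c,\rho} = \1$. Now we compute
\begin{align*}
  \wt(\xi, \kappa)
  &= l_{c',\rho'} \otimes l_{c,\rho} \otimes \wt(\zeta, \theta) \otimes r_{c,\rho} \otimes r_{c',\rho'}
\end{align*}
using Theorem~\ref{thm:pumping-lemma-for-runs} with $n=1$. If $B$ is commutative, we rearrange the middle factors to obtain $l_{c',\rho'} \otimes \wt(\zeta,\theta) \otimes (l_{c,\rho} \otimes r_{c,\rho}) \otimes r_{c',\rho'} = l_{c',\rho'} \otimes \wt(\zeta,\theta) \otimes r_{c',\rho'}$. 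If instead $B$ is one-product free, the equation $l_{c,\rho} \otimes r_{c,\rho} = \1$ forces $l_{c,\rho} = r_{c,\rho} = \1$, and the same simplification is achieved directly. In either case, Lemma~\ref{lm:combining-runs} identifies the simplified expression with $\wt(c'[\zeta], \rho'[\theta])$, so
\[
  \wt(\xi, \kappa) = \wt(c'[\zeta], \rho'[\theta]) \enspace.
\]

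It remains to apply the induction hypothesis to the tree $c'[\zeta]$ together with the run $\rho'[\theta] \in \R_\A(q', c'[\zeta])$. Since $\hgt(c) > 0$ we have $|\pos(c)| \geq 2$, and a direct count of positions gives $\size(c'[\zeta]) = \size(c') + \size(\zeta) - 1 < \size(c') + \size(c) + \size(\zeta) - 2 = \size(\xi)$, so the induction is well-founded. The hypothesis yields $\xi' \in \T_\Sigma$ and $\kappa' \in \R_\A(q', \xi')$ with $\hgt(\xi') < |Q|$ and $\wt(c'[\zeta], \rho'[\theta]) = \wt(\xi', \kappa')$, which combined with the identity above completes the proof.

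The only mildly delicate point is choosing the correct induction measure: induction on $\hgt(\xi)$ fails because cutting out a context of positive height need not strictly decrease the tree's height (the deepest branch may run beside the loop rather than through it), whereas $\size$ is guaranteed to drop since $\size(c) \geq 2$. Once this is settled, the argument is a straightforward assembly of the pumping lemma, the left/right decomposition, and the hypothesis that all small loops carry weight~$\1$.
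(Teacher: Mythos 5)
Your proof is correct and follows essentially the same route as the paper: apply the pumping lemma at $n=1$ and $n=0$, use $\wt(c,\rho)=l_{c,\rho}\otimes r_{c,\rho}=\1$ together with the commutative/one-product-free case split to conclude $\wt(\xi,\kappa)=\wt(c'[\zeta],\rho'[\theta])$, and repeat on the strictly smaller tree. The only (cosmetic) difference is that you package the repetition as an explicit induction on $\size(\xi)$, whereas the paper phrases it as an iteration terminating because $\size(c'[\zeta])<\size(\xi)$; your remark on why $\hgt$ would be the wrong measure is a nice touch but not needed.
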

  
  \begin{proof}    Let $\xi \in \T_\Sigma$, $q' \in Q$, and $\kappa \in \R_\A(q',\xi)$. We may assume that $\hgt(\xi) \geq |Q|$.
  Applying Theorem~\ref{thm:pumping-lemma-for-runs} (for $n=1$ and $n=0$),
  there exist  $c,c' \in \C_\Sigma$, $\zeta \in \T_\Sigma$, $q \in Q$, $\rho' \in \R_\A(q',c',q)$, $\rho \in \R_\A(q,c,q)$, and $\theta \in \R_\A(q,\zeta)$ 
  such that   $\xi=c'\big[c[\zeta]\big]$, $\kappa = \rho'\big[\rho[\theta]\big]$,  $\hgt(c) > 0$, $\hgt\big(c[\zeta]\big) < |Q|$, and
  \begin{align*}
  \wt(\xi,\kappa) = & \wt(c'\big[c[\zeta]\big], \rho'\big[\rho[\theta]\big]) =
    l_{c',\rho'} \otimes  l_{c,\rho} \otimes \wt(\zeta,\theta) \otimes  r_{c,\rho} \otimes r_{c',\rho'}\enspace,\\
  & \wt(c'[\zeta],\rho'[\theta]) = l_{c',\rho'} \otimes  \wt(\zeta,\theta) \otimes   r_{c',\rho'}\enspace.
  \end{align*}
  By our assumption $\wt(c,\rho) = \1$, and by Observation \ref{obs:decomp-run-left-right} we have 
  $\wt(c,\rho) = l_{c,\rho} \otimes r_{c,\rho}$. If $B$ is commutative, then $\wt(\xi,\kappa) = \wt(c'[\zeta],\rho'[\theta])\otimes l_{c,\rho} \otimes r_{c,\rho}$.
If $B$ is one-product free, then $l_{c,\rho} =  r_{c,\rho}= \1$. Hence in both cases we have $\wt(\xi,\kappa)=\wt(c'[\zeta],\rho'[\theta])$.
  
  Note that  $\rho'[\theta] \in \R_\A(q',c'[\zeta])$  and  $\size(c'[\zeta]) < \size (\xi)$. If $\hgt(c'[\zeta])<|Q|$, then we are ready. 
  Otherwise we continue with  $c'[\zeta]$, $q'$, and  $\rho'[\theta]$ as before.
  After finitely many steps, we obtain  $\xi' \in T_\Sigma$  and $\kappa' \in \R_\A(q',\xi')$
  with  $\hgt(\xi') < |Q|$  as required. 
  \end{proof}

\section{The preimage property and a sufficient condition for a wta to be crisp-determinizable}\label{sect:sufficient-condition}

An important result for recognizable weighted string languages $r$ over the semiring $\N$ is that, for each $n \in \N$, the string language $r^{-1}(n)$ is recognizable (preimage property, \cite[III. Cor. 2.5]{berreu88}). In this section we show some variants of the preimage property. Then, as a main result of this section, we give a sufficient condition for a wta $\A$ over an arbitrary strong bimonoid $B$ which guarantees that $\sem{\A}$ is a recognizable step mapping (cf. Theorem~\ref{thm:sufficient-conds-ensure-rec-step-map}) and thus, in particular, $\sem{\A}$ satisfies the preimage property and $\A$ is crisp-determinizable by Lemma~\ref{lm:cdwta_finite_image}. We also show that if  the strong bimonoid is computable and $\A$ is given effectively, then the crisp-deterministic wta equivalent to $\A$ can be constructed effectively. Finally, as an application of Theorem~\ref{thm:sufficient-conds-ensure-rec-step-map}, we extend the mentioned preimage property \cite[III. Cor. 2.5]{berreu88} to wta over past-finite monotonic strong bimonoids (cf. Theorem~\ref{thm:inverse_b_recognizable}).

We say that $(B,\oplus,\otimes,\0,\1)$ is {\em computable} if $B$ is a recursively enumerable set with tests for equality and the operations $\oplus$ and $\otimes$ are computable (e.g., by a Turing machine).

Subsequently, we will need the following concepts. We define the sets 
\begin{align*}
  \rH(\A) &= \{\wt(\xi,\rho) \mid \xi \in \T_\Sigma \text{ and } \rho \in \R_\A(\xi)\}\text{ and }\\  
  \rC(\A) &=\{\wt(\xi,\rho) \otimes F_{\rho(\varepsilon)} \mid \xi \in \T_\Sigma, \rho \in \R_\A(\xi)\}\enspace.	
\end{align*}
 We call elements of $\rC(\A)$  {\em complete run weights of $\A$}. Observe that if $\rH(\A)$ is a finite set, then $\rC(\A)$ is also finite because $\rC(\A)\subseteq \rH(\A)\otimes \im(F)$.
The following notions will be needed for our main result of this section. 

Let $b\in B$. If $b$ has finite additive order, then there exists a least number $i \in \N_+$ such that $ib=(i+k)b$ for some $k \in \N_+$, and there exists a least number $p \in \N_+$ such that $ib = (i+p)b$. We call $i$ the {\em index (of $b$)} and $p$ the {\em period (of $b$)}, and denote them by $i(b)$ and $p(b)$, respectively. Moreover, we call $i+p-1$, i.e., the number of elements of $\langle b \rangle_\oplus$, the {\em order of $b$}. 

Then for each $b\in \rC(\A)$, we define the mapping $f_{\A,b}: \T_\Sigma \to \N$, called  the {\em complete run number mapping of $b$}, by
\[f_{\A,b}(\xi) = |\{\rho \in \R_\A(\xi) \mid \wt(\xi,\rho) \otimes F_{\rho(\varepsilon)} = b\}|\]
for each $\xi\in\T_\Sigma$. The mapping $f_{\A,b}$ is {\em bounded}, if there exists $K \in \N$ such that $f_{\A,b}(\xi) \leq K$ for each $\xi \in \T_\Sigma$. 
Clearly, if $\A$ is finitely ambiguous, then there exists $K \in \bbN$ such that,  for each $\xi \in \rT_\Sigma$, we have  $|\rR_\cA^{F \neq \bb0}(\xi)| \leq K$, and thus, $f_{\A,b}$ is bounded by $K$ for each $b\in \rC(\A)\setminus\{\0\}$.

\begin{theorem} \label{thm:sufficient-conds-ensure-image-finite}
   Let $\A=(Q,\delta,F)$ be a $(\Sigma,B)$-wta such that $\rH(\A)$ is finite. If, for each $b \in \rC(\A)$, the mapping $f_{\A,b}$ is bounded or $b$ has finite additive order, then $\sem{\A}$ has the finite-image property. 
\end{theorem}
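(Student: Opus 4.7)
The plan is to decompose $\sem{\A}(\xi)$ by grouping runs of $\A$ on $\xi$ according to their complete weight, and then to show that each of the resulting summands can take only finitely many values across all $\xi \in \T_\Sigma$. First, I would note that $\rC(\A)$ is finite: since $\rH(\A)$ is finite by hypothesis, $\im(F)$ is finite (as $Q$ is finite), and $\rC(\A) \subseteq \rH(\A) \otimes \im(F)$ as already observed in the preamble to the theorem, $\rC(\A)$ is finite.

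Second, using commutativity of $\oplus$, I would group runs by the value $b = \wt(\xi,\rho) \otimes F_{\rho(\varepsilon)}$ they produce, obtaining
\[
\sem{\A}(\xi) = \bigoplus_{\rho \in \R_\A(\xi)} \wt(\xi,\rho) \otimes F_{\rho(\varepsilon)} = \bigoplus_{b \in \rC(\A)} f_{\A,b}(\xi)\, b,
\]
where $f_{\A,b}(\xi)\, b$ denotes the $f_{\A,b}(\xi)$-fold $\oplus$-sum of $b$ (with the empty sum equal to $\0$).

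Third, I would argue that, for each fixed $b \in \rC(\A)$, the set $S_b := \{f_{\A,b}(\xi)\, b \mid \xi \in \T_\Sigma\}$ is finite. If $f_{\A,b}$ is bounded by some $K_b \in \N$, then $S_b \subseteq \{0 \cdot b, 1 \cdot b, \ldots, K_b \cdot b\}$, which has at most $K_b + 1$ elements. If instead $b$ has finite additive order, then by definition $\langle b \rangle_\oplus$ is finite, and $S_b \subseteq \langle b \rangle_\oplus$. In either alternative, $S_b$ is finite.

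Finally, since $\rC(\A)$ is finite and each $S_b$ is finite, the image is contained in the finite set
\[
\left\{ \bigoplus_{b \in \rC(\A)} s_b \;\middle|\; s_b \in S_b \text{ for each } b \in \rC(\A) \right\},
\]
which gives the finite-image property. There is no real obstacle here: once the decomposition into the $f_{\A,b}(\xi)\, b$ summands is in place, finiteness follows immediately from the two hypotheses applied term-by-term and the finiteness of $\rC(\A)$. The only point to be careful about is that the rewriting of $\sem{\A}(\xi)$ as an indexed sum over $b \in \rC(\A)$ only uses commutativity of $\oplus$, which is available in any strong bimonoid, so no further assumption on $B$ is needed.
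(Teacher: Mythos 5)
Your proposal is correct and follows essentially the same route as the paper's proof: the decomposition $\sem{\A}(\xi)=\bigoplus_{b\in \rC(\A)} \big(f_{\A,b}(\xi)\big)b$ is exactly the paper's Equality~\eqref{eq:semantics=sum-f}, and the case analysis on each $b$ (bounded $f_{\A,b}$ versus finite additive order) matches the paper term for term. The only cosmetic difference is that you make the final collection of the summands into an explicit finite set of $\oplus$-combinations, which the paper leaves implicit.
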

    
\begin{proof} We note that $\rC(\A)$ is finite because $\rH(\A)$ is finite. Then for each $\xi \in \T_\Sigma$, we have
\begin{equation}\label{eq:semantics=sum-f}
\sem{\A}(\xi)=\bigoplus_{b\in \rC(\A)} \big(f_{\A,b}(\xi)\big)b\enspace.
\end{equation}
Let $b\in \rC(\A)$.
If the mapping $f_{\A,b}$ is bounded by $K$ for some $K\in \N$, then $\big(f_{\A,b}(\xi)\big)b \in \{jb \mid j \in [0,K]\}$.
Otherwise $b$ has finite additive order. Thus we have $\big(f_{\A,b}(\xi)\big)b \in \{0, b, 2b, \ldots, (i(b)+p(b)-1)b\}$. 
\end{proof}

Next, we wish to show that under the assumptions of Theorem~\ref{thm:sufficient-conds-ensure-image-finite},
$\A$  also has the preimage property. For the main result of this section,
we will need the following preparation.

\begin{lemma} \rm \label{lm:inverse_b_recognizable_with_B_finite_semiring}
     Let $B$ be a finite semiring and $\A$ be a $(\Sigma,B)$-wta. Then the following statements hold.
    \begin{compactenum}
        \item $\A$ has the preimage property.
        \item If $B$ is computable and $\A$ is given effectively, then, for each $b \in B$, we can construct effectively a finite-state $\Sigma$-tree automaton which recognizes $\sem{\A}^{-1}(b)$.
    \end{compactenum}
\end{lemma}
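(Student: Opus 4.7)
The plan is to apply the classical subset-style determinization with multiplicities. Distributivity, which holds because $B$ is a semiring, lets us replace the sum over runs by a vector-valued computation; finiteness of $B$ then makes the resulting deterministic tree automaton finite.

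First, for each $\xi \in \T_\Sigma$ I would introduce the vector $h_\xi : Q \to B$ defined by $h_\xi(q) = \bigoplus_{\rho \in \R_\A(q,\xi)} \wt_\A(\xi,\rho)$ and verify by induction on $\xi$ that
\[
h_{\sigma(\xi_1,\ldots,\xi_k)}(q) = \bigoplus_{q_1,\ldots,q_k \in Q} \Big(\bigotimes_{i=1}^k h_{\xi_i}(q_i)\Big) \otimes \delta_k(q_1 \cdots q_k, \sigma, q)
\]
and that $\sem{\A}(\xi) = \bigoplus_{q \in Q} h_\xi(q) \otimes F_q$. This is the standard coincidence of run semantics and initial algebra semantics over a semiring (already recorded in the paper), and its proof amounts to distributing $\otimes$ over $\oplus$ in the defining expression for the weight of a run.

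Next, fix $b \in B$ and build a crisp-deterministic $(\Sigma,\bbB)$-wta $\A_b = (Q',\delta',F')$ with state set $Q' = B^Q$, which is finite since $Q$ and $B$ are. For $v_1,\ldots,v_k,v \in Q'$ and $\sigma \in \Sigma^{(k)}$ I let $\delta'_k(v_1\cdots v_k,\sigma,v) = \1$ iff, for every $q \in Q$, the entry $v(q)$ equals the right-hand side of the recursion above with $h_{\xi_i}$ replaced by $v_i$, and set it to $\0$ otherwise; the root weight is $F'_v = \1$ iff $\bigoplus_{q \in Q} v(q) \otimes F_q = b$, and $\0$ otherwise. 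Since $\A_b$ is crisp-deterministic, each $\xi \in \T_\Sigma$ admits a unique run $\rho_\xi \in \R_{\A_b}(\xi)$, and an easy induction gives $\rho_\xi(\varepsilon) = h_\xi$. Hence $\supp(\sem{\A_b}) = \sem{\A}^{-1}(b)$, and via the identification of $(\Sigma,\bbB)$-wta with finite-state $\Sigma$-tree automata this proves~(1).

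For~(2) I would simply observe that each ingredient of the construction is effective when $B$ is computable and $\A$ is given effectively: the carrier $B^Q$ can be enumerated, the finite sums and products defining $\delta'$ and $F'$ can be evaluated using the algorithms for $\oplus$ and $\otimes$, and the equality tests needed to define both $\delta'$ and $F'$ are available. The only subtle point in the whole argument is verifying the recursion for $h_\xi$ and its matching with $\sem{\A}$; this is the well-known initial-algebra calculation on a semiring and uses distributivity essentially, while everything else is finite bookkeeping.
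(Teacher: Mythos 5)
Your proof is correct and is, in substance, the same argument the paper relies on: the paper establishes this lemma by citing an external result (\cite{drovog06}, Lm.~6.1) and remarking that its proof survives dropping commutativity because the order of factors in $\wt(\xi,\rho)$ is fixed, and your construction --- the vector semantics $h_\xi \in B^Q$ justified by distributivity, followed by a crisp-deterministic automaton on the finite state set $B^Q$ whose root weights select $b$ --- is exactly that cited proof made explicit. Nothing is missing; in particular your recursion for $h_\xi$ never permutes factors, so it works verbatim for non-commutative $B$, which is precisely the point the paper has to argue when adapting the citation.
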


\begin{proof}
  If, in addition, $B$ is commutative, then Statement 1 immediately follows from \cite[Lm.~6.1]{drovog06} and Statement 2 is also clear by the proof of that lemma. However, we can drop the condition that $B$ is commutative because of the following. In \cite{drovog06}, for each $\xi \in \T_\Sigma$, the weight of a run $\rho \in \R_\A(\xi)$  is defined as the product of the weight of the transitions determined by the run: 
  $\wt(\xi,\rho)=\bigotimes_{w \in\pos(\xi)}\delta_k(\rho(w1)\cdots\rho(wk),\xi(w),\rho(w))$, where $\xi(w)\in \Sigma^{(k)}$ and the factors are multiplied in an arbitrary order. However,  in the proof of \cite[Lm.~6.1]{drovog06}, we observe that the order of the factors in such products does not change. Hence that proof is also valid for an arbitrary, fixed order of elements in the products. In particular, it is valid for the depth-first left-to-right order with which we have defined $\wt(\xi,\rho)$ (cf. \eqref{eq:weight-of-a-run}).  Hence the proof of \cite[Lm.~6.1]{drovog06} is also valid for  our setting.
\end{proof}

\begin{lemma} \rm\label{cor:inverse_n_recognizable}
Let $\A$ be a $(\Sigma,\N)$-wta. Then the following statements hold.
    \begin{compactenum}
        \item $\A$ has the preimage property.
        \item If $\A$ is given effectively, then, for each $n \in \N$, we can construct effectively a finite-state $\Sigma$-tree automaton which recognizes $\sem{\A}^{-1}(n)$.
    \end{compactenum}
\end{lemma}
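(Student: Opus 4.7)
The plan is to reduce the problem to the already-handled case of a finite semiring (Lemma~\ref{lm:inverse_b_recognizable_with_B_finite_semiring}) via a suitable quotient homomorphism of $\N$. Fix $n \in \N$. Form the finite set $B_n = \{0,1,\ldots,n,\top\}$ and define a semiring structure on $B_n$ by capping: for $a,b \in B_n$, let $a \oplus' b = a + b$ if $a,b \leq n$ and $a+b \leq n$, and $a \oplus' b = \top$ otherwise; similarly, $a \otimes' b = ab$ if $a,b \leq n$ and $ab \leq n$, $a \otimes' b = 0$ if $a = 0$ or $b = 0$, and $a \otimes' b = \top$ otherwise. Let $h:\N \to B_n$ be the map $h(k) = k$ for $k \leq n$ and $h(k) = \top$ for $k > n$.

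First I would verify by a short case analysis that $(B_n,\oplus',\otimes',0,1)$ is a (finite, computable) semiring and that $h$ is a semiring homomorphism preserving $0$ and $1$; this is routine, because in each case where one side equals $\top$, the other side must also represent a value exceeding $n$. Applying Lemma~\ref{lm:closure-homomorphism}, I obtain a $(\Sigma,B_n)$-wta $h(\A)$ with $\sem{h(\A)} = h \circ \sem{\A}$, and if $\A$ is given effectively, then so is $h(\A)$ since $B_n$ is finite and $h$ is computable. Because $h^{-1}(n) = \{n\}$, we get the key identity
\[
\sem{\A}^{-1}(n) \;=\; (h \circ \sem{\A})^{-1}(n) \;=\; \sem{h(\A)}^{-1}(n)\enspace.
\]
Now Lemma~\ref{lm:inverse_b_recognizable_with_B_finite_semiring}(1) applied to $h(\A)$ over the finite semiring $B_n$ yields that $\sem{h(\A)}^{-1}(n)$ is a recognizable $\Sigma$-tree language, proving Statement~1. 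For Statement~2, $B_n$ is computable and $h(\A)$ is given effectively, so Lemma~\ref{lm:inverse_b_recognizable_with_B_finite_semiring}(2) produces effectively a finite-state $\Sigma$-tree automaton recognizing $\sem{h(\A)}^{-1}(n) = \sem{\A}^{-1}(n)$.

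The only genuine verification is that the capping map $h$ is compatible with both operations; the subtlety is that $\N$ is infinite while we want to land in a finite structure, and the ``overflow'' value $\top$ must absorb additions and (nonzero) multiplications in a manner consistent with $\N$. The rest is essentially bookkeeping. I expect no significant obstacle, because the preimage of a \emph{single} value $n$ (rather than an arbitrary subset) collapses all ``uninteresting'' weights into one class, which is precisely what makes the reduction to a finite semiring possible.
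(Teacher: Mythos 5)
Your proof is correct and follows essentially the same route as the paper: your capped semiring $B_n=\{0,1,\ldots,n,\top\}$ is exactly the quotient $\N/_\sim$ that the paper forms by collapsing all values greater than $n$ into one congruence class, and both arguments then combine Lemma~\ref{lm:closure-homomorphism} with Lemma~\ref{lm:inverse_b_recognizable_with_B_finite_semiring}. The only cosmetic difference is that you present the quotient concretely via representatives (and also derive Statement~1 from this reduction, where the paper cites an external result).
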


\begin{proof}
    Statement 1 was proved in \cite[Lm.~6.3(2)]{drovog06}. Now we prove Statement~2. It was stated already in \cite{drovog06}, but we include the proof for the sake of completeness.  Let $n \in \N$ and $M = \{k \in \N \mid n < k\}$. Moreover, let $\sim$ be the equivalence relation on the set $\N$ defined such that its classes are the singleton sets $\{k\}$ for each $k \in [0,n]$ and the set $M$. As is well known, $\sim$  is a congruence, which can be seen as follows.
 Let $k,k' \in M$ and $m \in \N$. Obviously, $k \leq k + m$, and hence $k + m \in M$ and similarly $k' + m \in M$. Moreover, if $m \neq 0$, then we have $k \cdot m \in M$ and $k' \cdot m \in M$. Thus $M$ is a congruence class and the relation $\sim$ is a congruence on $\N$. Then the quotient semiring $\N/_\sim$ is finite. Let $h: \N \to \N/_\sim$ be the canonical semiring homomorphism. Clearly, we can give effectively the congruence classes of $\sim$, i.e., the elements of $\N/_\sim$, by choosing only one representative for each congruence class. Due to this fact and that $\N$ is computable, the semiring $\N/_\sim$ is computable. By Lemma~\ref{lm:closure-homomorphism}, we can construct effectively the $(\Sigma,\N/_\sim)$-wta $h(\A)$ such that $\sem{h(\A)} = h \circ \sem{\A}$. Since $\N/_\sim$ is finite and computable, by Lemma \ref{lm:inverse_b_recognizable_with_B_finite_semiring}(2), we can construct effectively  a finite-state $\Sigma$-tree automaton which recognizes $(h \circ \sem{\A})^{-1}(\{n\})= \sem{\A}^{-1}(n)$. 
\end{proof}

For every $m\in \N$ and $n\in \N_+$, we define $m+n\cdot\N=\{m+n\cdot j \mid j\in \N\}$. Let us denote by $\N/n\N$ the semiring of nonnegative integers modulo $n$. Moreover, for $m \in \N$, we let  $\overline{m} = m+n\N$, the residue class of $m$ modulo $n$.

\begin{lemma} \rm\label{lm:inverse_m+nN_recognizable} (cf. \cite[III. Cor. 2.4]{berreu88})
    Let $\A$ be a $(\Sigma,\N)$-wta.  Then the following statements hold.
    \begin{compactenum}
        \item For every $m\in \N$ and $n\in \N_+$, the $\Sigma$-tree language $\sem{\A}^{-1}(m+n\cdot\N)$ is recognizable.
        \item If $\A$ is given effectively, then, for every $m\in \N$ and $n\in \N_+$, we can construct effectively a finite-state $\Sigma$-tree automaton which recognizes $\sem{\A}^{-1}(m+n\cdot\N)$.
    \end{compactenum}
\end{lemma}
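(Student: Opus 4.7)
The plan is to mimic the proof of Lemma~\ref{cor:inverse_n_recognizable}, replacing the ad hoc congruence that cuts off $\N$ at $n$ by the arithmetic congruence modulo $n$. Concretely, I would work with the quotient semiring $\N/n\N$ of nonnegative integers modulo $n$, which is finite and, since its elements can be represented by their canonical representatives in $[0,n-1]$ on which addition and multiplication are just the usual operations followed by reduction mod $n$, is computable. The canonical projection $h: \N \to \N/n\N$ sending $k$ to $\overline{k} = k + n\N$ is a semiring homomorphism. The crucial observation is then
\[
m + n\cdot \N \;=\; h^{-1}(\overline{m}),
\]
so that
\[
\sem{\A}^{-1}(m + n\cdot\N) \;=\; \sem{\A}^{-1}\bigl(h^{-1}(\overline{m})\bigr) \;=\; (h \circ \sem{\A})^{-1}(\overline{m}).
\]

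Next I would apply Lemma~\ref{lm:closure-homomorphism} to the homomorphism $h$ and the wta $\A$ in order to obtain a $(\Sigma, \N/n\N)$-wta $h(\A)$ satisfying $\sem{h(\A)} = h \circ \sem{\A}$; if $\A$ is given effectively then the construction of Lemma~\ref{lm:closure-homomorphism} produces $h(\A)$ effectively, because $h$ is computable. Since $\N/n\N$ is a finite (and computable) semiring, Lemma~\ref{lm:inverse_b_recognizable_with_B_finite_semiring} applies to $h(\A)$ and yields that $\sem{h(\A)}^{-1}(\overline{m})$ is a recognizable $\Sigma$-tree language, and moreover that a finite-state $\Sigma$-tree automaton recognizing it can be constructed effectively whenever $h(\A)$ is. Combining this with the identity displayed above gives both statements of the lemma.

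I do not expect a genuine obstacle here: the whole argument reduces to noticing that residues modulo $n$ form a finite semiring and that the preimage $m + n\cdot\N$ is precisely the preimage of one residue class, so that the work has already been done in Lemmas~\ref{lm:closure-homomorphism} and~\ref{lm:inverse_b_recognizable_with_B_finite_semiring}. The only minor point to check is computability of $\N/n\N$, which is immediate: its carrier set is $\{\overline{0}, \overline{1}, \ldots, \overline{n-1}\}$, equality of residues is trivially decidable, and the operations are computed by performing the corresponding operation in $\N$ and reducing modulo $n$.
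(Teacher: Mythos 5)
There is a genuine gap: your key identity $m + n\cdot\N = h^{-1}(\overline{m})$ is only true when $m < n$. By the paper's definition, $m + n\cdot\N = \{m, m+n, m+2n, \ldots\}$ contains only numbers $\geq m$, whereas $h^{-1}(\overline{m})$ is the full residue class $m' + n\cdot\N$, where $m' \in [0,n-1]$ is the remainder of $m$ modulo $n$. For $m \geq n$ these sets differ by the finite nonempty set $\{m',\, m'+n,\, \ldots,\, m-n\}$, so $(h\circ\sem{\A})^{-1}(\overline{m})$ is in general a strictly larger tree language than $\sem{\A}^{-1}(m+n\cdot\N)$, and your argument proves recognizability of the wrong set. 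This is precisely why the paper's proof splits into the cases $m<n$ and $m\geq n$.

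Your treatment of the case $m < n$ coincides with the paper's and is correct, including the computability remarks needed for Statement~2. To close the gap for $m \geq n$, write $m = m' + n\cdot k$ with $m' \in [0,n-1]$ and $k \in \N_+$, and observe that
\[
\sem{\A}^{-1}(m + n\cdot\N) \;=\; \sem{\A}^{-1}(m' + n\cdot\N) \;\setminus\; \bigcup_{j=0}^{k-1} \sem{\A}^{-1}(m' + n\cdot j)\enspace.
\]
The first set on the right is recognizable (effectively) by your argument applied to $m'<n$, each $\sem{\A}^{-1}(m'+n\cdot j)$ is recognizable (effectively) by Lemma~\ref{cor:inverse_n_recognizable}, and recognizable $\Sigma$-tree languages are (effectively) closed under union and set difference. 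With this additional step both statements follow.
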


\begin{proof} 
Let us abbreviate $\sem{\A}$ by $r$.

\underline{Proof of 1:} Let $m \in \N$. If $m < n$, then, by Lemma~\ref{lm:closure-homomorphism}, $(h \circ r) \in \mathrm{Rec}(\Sigma,\N/n\N)$, where
$h: \N \to \N/n\N$ is the canonical semiring homomorphism. Moreover, $r^{-1}(m + n \cdot \N) = r^{-1}\big(h^{-1}(\overline{m})\big) = (h \circ r)^{-1}(\overline{m})$. Since $\N/n\N$ is a finite semiring, by Lemma~\ref{lm:inverse_b_recognizable_with_B_finite_semiring}(1), the $\Sigma$-tree language $(h \circ r)^{-1}(\overline{m})$ is recognizable. Now assume that $m \geq n$. Then there exist  $m' \in [0,n-1]$ and $k \in \N_+$ such that $m = m' + n \cdot k$. Then
\[r^{-1}(m + n \cdot \N) = r^{-1}(m' + n \cdot \N) \setminus \bigcup_{j=0}^{k-1} r^{-1}(m' + n \cdot j)\enspace.\]
As we saw, the $\Sigma$-tree language $r^{-1}(m' + n \cdot \N)$ is recognizable because $m'<n$. Moreover, by Lemma~\ref{cor:inverse_n_recognizable}(1), for each $j \in [0,k-1]$, the $\Sigma$-tree language $r^{-1}(m' + n \cdot j)$ is also recognizable. Finally, $\Sigma$-tree languages are closed under union and subtraction. Thus, also in this case, $r^{-1}(m + n \cdot \N)$ is recognizable.

\underline{Proof of 2:} We follow the proof of Statement 1. Let $m \in \N$. Assume that $m < n$. Obviously, we can  give effectively the residue classes modulo $n$, i.e., the elements of $\N/n\N$, by choosing only one representative for each residue class. Because of this fact and that $\N$ is computable, the semiring $\N/n\N$ is also computable. Since $\A$ is  given effectively, by Lemma~\ref{lm:closure-homomorphism}, we can  construct effectively the $(\Sigma,\N/n\N)$-wta $h(\A)$ such that $\sem{h(\A)}=h \circ r$. Since $\N/n\N$ is a  computable finite semiring, by Lemma~\ref{lm:inverse_b_recognizable_with_B_finite_semiring}(2), we can construct effectively a finite-state $\Sigma$-tree automaton which recognizes $(h \circ r)^{-1}(\overline{m}) = r^{-1}(m + n \cdot \N)$. Now assume that $m \geq n$. Since $m' < n$, by the above, we can construct effectively a finite-state $\Sigma$-tree automaton which recognizes $r^{-1}(m' + n \cdot \N)$. Moreover, by Lemma~\ref{cor:inverse_n_recognizable}(2), for each $j \in [0,k-1]$, we can also construct effectively a finite-state $\Sigma$-tree automaton which recognizes $r^{-1}(m' + n \cdot j)$. Lastly, $\Sigma$-tree languages are  closed effectively under union and subtraction, and thus, we can construct effectively a finite-state $\Sigma$-tree automaton which recognizes $r^{-1}(m + n \cdot \N)$.
\end{proof}

\begin{lemma} \rm \label{lm:compute-c(A)}
Let $B$ be computable. If a $(\Sigma,B)$-wta $\A=(Q,\delta,F)$ is given effectively and $\rH(\A)$ is finite, then we can compute the sets $\rH(\A)$ and $\rC(\A)$.
\end{lemma}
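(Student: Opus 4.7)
The plan is to compute $\rH(\A)$ by a height-stratified saturation and then derive $\rC(\A)$ from it. For each state $q \in Q$ and each $n \in \N$, let
\[
\rH_n^q(\A) = \{\wt(\xi,\rho) \mid \xi \in \T_\Sigma,\ \hgt(\xi) \le n,\ \rho \in \R_\A(q,\xi)\},
\]
and let $\rH^q(\A) = \bigcup_{n \in \N} \rH_n^q(\A)$. Since every run on a tree ends in some state $q$, we have $\rH(\A) = \bigcup_{q \in Q} \rH^q(\A)$, so it suffices to compute each $\rH^q(\A)$.

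First I would compute the base case. Clearly $\rH_0^q(\A) = \{\delta_0(\varepsilon,\alpha,q) \mid \alpha \in \Sigma^{(0)},\ \delta_0(\varepsilon,\alpha,q) \ne \0\}$, which is effective because $\Sigma^{(0)}$ is finite and $B$ admits an equality test (hence a test for $\0$). For the step case, $\rH_{n+1}^q(\A)$ arises from the family $(\rH_n^{p}(\A))_{p \in Q}$ by the recurrence
\[
\rH_{n+1}^q(\A) = \rH_n^q(\A) \cup \Big\{\Big(\bigotimes_{i=1}^k b_i\Big) \otimes \delta_k(q_1 \cdots q_k,\sigma,q) \;\Big|\; k \ge 1,\ \sigma \in \Sigma^{(k)},\ q_i \in Q,\ b_i \in \rH_n^{q_i}(\A),\ \delta_k(q_1 \cdots q_k,\sigma,q) \ne \0\Big\}.
\]
Since each $\rH_n^p(\A)$ is finite, the ranges of $k$, $\sigma$, and $(q_1,\ldots,q_k)$ are finite, and $\otimes$ is computable, this step is effective.

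The algorithm iterates the step and halts at the first $n$ for which $\rH_{n+1}^q(\A) = \rH_n^q(\A)$ for every $q \in Q$; equality of two finite subsets of $B$ can be tested with finitely many equality queries, so halting is decidable. When this condition holds, the recurrence shows by induction that $\rH_m^q(\A) = \rH_n^q(\A)$ for every $m \ge n$, hence $\rH^q(\A) = \rH_n^q(\A)$. Termination of the loop is the one genuinely non-combinatorial point: the sequence $(\rH_n^q(\A))_{n}$ is monotone and contained in the finite set $\rH(\A)$ (by hypothesis), so $\bigcup_{q} \rH_n^q(\A)$ stabilizes at some $n \le |Q|\cdot|\rH(\A)|$, forcing detection. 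Once $\rH(\A) = \bigcup_{q \in Q} \rH^q(\A)$ has been produced, $\rC(\A) = \bigcup_{q \in Q}\{b \otimes F_q \mid b \in \rH^q(\A)\}$ is obtained by finitely many multiplications, completing the construction.
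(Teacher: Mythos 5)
Your proof is correct and follows essentially the same route as the paper: stratify the run weights by tree height and root state, saturate until the family $(\rH_n^q)_{q\in Q}$ stabilizes (which must happen since the monotone sequence lives inside the finite set $\rH(\A)$), and then obtain $\rC(\A)$ by multiplying with the root weights. The only cosmetic difference is that you compute $\rH_{n+1}^q$ by an explicit exact recurrence from $(\rH_n^p)_{p\in Q}$, which makes the implication ``stabilization at one step implies stabilization forever'' immediate, whereas the paper establishes that implication by a semantic subtree-replacement argument.
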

\begin{proof} First we prove that the set $\rH(\A)$ can be computed.
For every $i \in \N$ and $q \in Q$ let
\[H_{i, q} = \{\wt(\xi,\rho) \mid \xi \in \T_\Sigma, \hgt(\xi) \leq i, \rho \in \R_\A(q,\xi)\}\enspace.\]
Clearly, we have $H_{0,q} \subseteq H_{1,q} \subseteq \ldots \subseteq \rH(\A)$ for each $q \in Q$.
We claim that, for each $i \in \N$, 
\[\text{if $\forall q\in Q$: $H_{i,q} = H_{i+1,q}$, then $\forall q\in Q$: $H_{i+1,q} = H_{i+2,q}$}.
\]
To show this, let $i \in \N$, $q \in Q$, and $b \in H_{i+2,q}$. There exist  $\xi \in \T_\Sigma$ and $\rho \in \R_\A(q,\xi)$ such that $\hgt(\xi) \leq i+2$ and $\wt(\xi,\rho) = b$. We may assume that $\hgt(\xi) = i+2$. Hence $\xi=\sigma(\xi_1,\ldots,\xi_k)$  such that $\hgt(\xi_j)\leq i+1$ for each $j \in [k]$. Clearly, for each $j \in [k]$, we have $\wt(\xi_j,\rho|_j) \in H_{i+1,\rho(j)}$, so by our assumption there exist   $\zeta_j \in \T_\Sigma$ with $\hgt(\zeta_j)\leq i$ and  run $\theta_j \in \R_\A\big(\rho(j),\zeta_j\big)$
such that $\wt(\xi_j,\rho|_j) = \wt(\zeta_j,\theta_j)$.

Now let $\zeta=\sigma(\zeta_1,\ldots,\zeta_k)$. Obviously, $\hgt(\zeta) \le i+1$. Moreover, let $\theta \in \R_\A(q,\zeta)$ such that $\theta|_j = \theta_j$ for each $j \in [k]$. Clearly, $\wt(\zeta,\theta) \in H_{i+1,q}$, and we calculate
\begin{align*}
    \wt(\zeta,\theta)
    &= \Big( \bigotimes_{j = 1}^k \wt(\zeta_j,\theta|_j) \Big) \otimes \delta_k\big(\theta(1) \cdots \theta(j),\sigma,q\big)\\
    &= \Big( \bigotimes_{j = 1}^k \wt(\xi_j, \rho|_j) \Big) \otimes \delta_k\big(\rho(1) \cdots \rho(j),\sigma,q\big) = \wt(\xi,\rho) = b\enspace. 
\end{align*}
This shows that $b \in H_{i+1,q}$, proving our claim.

We recall that $H_{0,q} \subseteq H_{1,q} \subseteq \ldots \subseteq \rH(\A)$ for each $q \in Q$.
Since $B$ is computable,  we can compute $H_{i,q}$ for every $i\in \N$ and $q\in Q$. Then, 
since $\rH(\A)$ is finite, by computing $H_{0,q}$ for each $q\in Q$, $H_{1,q}$ for each $q\in Q$, and so on, we can find the least number $i_m\in \N$ such that
$H_{i_m,q}=H_{i_m+1,q}$ for each $q\in Q$ and thus by the implication shown above $H_{i_m,q}=H_{j,q}$ for every $q\in Q$ and $j\in \N$ with $j \geq i_m$.
We show that $\rH(\A) = \bigcup_{q \in Q} H_{i_m, q}$. For this, let $b\in \rH(\A)$,
i.e., $b=\wt(\xi,\rho)$ for some $\xi \in \T_\Sigma$ with $\hgt(\xi)=j$, $q\in Q$ and $\rho \in \R_\A(q,\xi)$. Then $b\in H_{j,q}=H_{i_m,q}$. The other inclusion is obvious. Since we can compute the set $\bigcup_{q \in Q} H_{i_m, q}$, the set $\rH(\A)$ can be computed.

Now we prove that the set $\rC(\A)$ can be computed. Let $i_m$ be the number as before.  It suffices to show that
\[\rC(\A)=\{\wt(\xi,\rho)\otimes F_{\rho(\varepsilon)}\mid \xi \in \T_\Sigma, \hgt(\xi) \leq i_m, \rho \in \R_\A(\xi)\}\enspace,\]
because we can compute the set on the right-hand side of the above equality. Let us denote this set by $C$. It is obvious that  $C\subseteq \rC(\A)$. For the proof of the other inclusion,   let $b\in \rC(\A)$,
i.e., $b=\wt(\xi,\rho)\otimes F_q$ for some $\xi \in \T_\Sigma$, $q\in Q$, and  $\rho \in \R_\A(q,\xi)$. Since $\wt(\xi,\rho)\in \rH(\A)$, by the proof of computing the set $\rH(\A)$, we have $\wt(\xi,\rho)\in H_{i_m, q}$, i.e., there exist $\xi' \in \T_\Sigma$ with  $\hgt(\xi') \leq i_m$, and  $\rho' \in \R_\A(q,\xi')$ such that $\wt(\xi,\rho)=\wt(\xi',\rho')$. Hence $b\in C$.
\end{proof}

Next we present the main result of this section. It gives a structural condition on a $(\Sigma,B)$-wta $\A$, for arbitrary strong bimonoid $B$, which is sufficient to imply that $\A$ has the finite-image property. Our result generalizes \cite[Thm.~6.2(a)]{drogoemaemei11} and \cite[Thm. 11]{drostuvog10} 
from bi-locally finite strong bimonoids to arbitrary strong bimonoids, in case of
\cite[Thm. 11]{drostuvog10} even from strings to trees.

 \begin{theorem} \label{thm:sufficient-conds-ensure-rec-step-map}
   Let $\A=(Q,\delta,F)$ be a $(\Sigma,B)$-wta such that $\rH(\A)$ is finite. If, for each $b \in \rC(\A)$, the mapping $f_{\A,b}$ is bounded or $b$ has finite additive order, then the following statements hold.
   \begin{compactenum}
        \item $\A$ has the finite-image property and the preimage property.
        \item If $B$ is computable and $\A$ is given effectively, then we can construct effectively  a crisp-deterministic $(\Sigma,B)$-wta $\B$ such that $\sem{\B} = \sem{\A}$. 
   \end{compactenum} 
 \end{theorem}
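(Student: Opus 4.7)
The plan is to build on Theorem~\ref{thm:sufficient-conds-ensure-image-finite}, which already yields the finite-image part of Statement~1. What remains is the preimage property and the effective construction of a crisp-deterministic wta equivalent to~$\A$. As a preliminary, for each $b \in \rC(\A)$ I would construct a $(\Sigma,\N)$-wta $\A_b$ whose semantics equals $f_{\A,b}$. This is a standard product construction on the finite state set $Q \times \rH(\A)$: a state $(q,h)$ records both the current $\A$-state and the accumulated subtree weight; transitions inherit weight $1 \in \N$ from the nonzero $\A$-transitions consistent with the recorded weights; and the final weight $F^b_{(q,h)}$ is set to $1$ if $h \otimes F_q = b$, and $0$ otherwise. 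A straightforward induction on $\xi$ shows $\sem{\A_b}(\xi) = f_{\A,b}(\xi)$. By Lemma~\ref{lm:compute-c(A)}, the sets $\rH(\A)$ and $\rC(\A)$ are computable, so each $\A_b$ is effectively constructible from $\A$.

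For the preimage property (Statement~1), fix $b^* \in B$ and invoke Equation~\eqref{eq:semantics=sum-f}. For each $b \in \rC(\A)$ the hypothesis implies that $n \mapsto (n) b$ has finite image on $\im(f_{\A,b})$: either because $f_{\A,b}$ is bounded by some $K_b$, or because $b$ has finite additive order with index $i(b)$ and period $p(b)$. In both cases, for each value $v$ in that finite image, the set $\{n \in \N : (n) b = v\}$ is a finite union of singletons and arithmetic progressions of the form $m + p(b) \cdot \N$, so its preimage under $f_{\A,b}$ is recognizable by Lemmas~\ref{cor:inverse_n_recognizable} and~\ref{lm:inverse_m+nN_recognizable}. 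Intersecting these preimages over $b \in \rC(\A)$ and taking the union over the finitely many profiles $(v_b)_{b \in \rC(\A)}$ satisfying $\bigoplus_b v_b = b^*$ exhibits $\sem{\A}^{-1}(b^*)$ as a Boolean combination of recognizable tree languages, hence recognizable.

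To make Statement~2 effective, for each $b \in \rC(\A)$ I would run two semi-decision procedures in parallel: (P1)~enumerate $b, 2b, 3b, \ldots$ looking for a repetition, which terminates, yielding $i(b)$ and $p(b)$, exactly when $b$ has finite additive order; and (P2)~for $K = 0, 1, 2, \ldots$, build the recognizable tree language $\T_\Sigma \setminus \bigcup_{j=0}^{K} f_{\A,b}^{-1}(j)$ via Lemma~\ref{cor:inverse_n_recognizable} and test its emptiness, which succeeds with bound $K$ exactly when $f_{\A,b}$ is bounded by $K$. By hypothesis, at least one of (P1), (P2) halts for every $b$, yielding the parameters needed to carry out the preimage construction effectively. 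The finitely many recognizable classes $\sem{\A}^{-1}(b^*)$, for $b^* \in \im(\sem{\A})$, can then be assembled into the desired crisp-deterministic $(\Sigma,B)$-wta $\B$ by taking a deterministic product tree automaton recognizing these classes and assigning to each accepting state the corresponding value $b^*$ as its final weight; alternatively, one invokes Lemma~\ref{lm:cdwta_finite_image}, whose proof in \cite{fulkosvog19} is effective given the data just produced.

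The main obstacle is the effectivity step: there is no direct effective test for boundedness of an $\N$-valued wta. My plan sidesteps this by only semi-deciding ``bounded by $K$'' for successive $K$ (using recognizability of each fibre $f_{\A,b}^{-1}(j)$ and effective closure of recognizable tree languages under finite union, complement, and emptiness), and relying on the dichotomy in the hypothesis to guarantee that at least one of the two parallel searches halts for each~$b$.
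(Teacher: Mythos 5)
Your proposal is correct and follows essentially the same route as the paper: the same product wta on $Q \times \rH(\A)$ computing $f_{\A,b}$, the same case split (bounded versus finite additive order) reducing to the $\N$-preimage Lemmas~\ref{cor:inverse_n_recognizable} and~\ref{lm:inverse_m+nN_recognizable}, and the same two parallel semi-decision procedures for the effectivity of Statement~2. The only cosmetic difference is that you assemble $\sem{\A}^{-1}(b^*)$ directly as a Boolean combination of fibres, whereas the paper packages the identical decomposition as ``each $r_b$ is a recognizable step mapping'' and then invokes Lemma~\ref{lm:cdwta_finite_image}.
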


 \begin{proof}
 \underline{Proof of 1:} We note that $\rC(\A)$ is finite because $\rH(\A)$ is finite. For each $b \in \rC(\A)$ we define the $(\Sigma,\N)$-wta  $\A'_b=(Q',\delta',F'_b)$ as follows: $Q' = Q \times \rH(\A)$ and for every $k\in \N$, $\sigma \in \Sigma^{(k)}$ and $(q_1,y_1),\ldots,(q_k,y_k), (q,y) \in Q'$, let 
  \[\delta'_k\big((q_1,y_1)\cdots(q_k,y_k),\sigma,(q,y)\big) = 
  \begin{cases}
      1 &\text{ if $\big(\bigotimes_{i=1}^k y_i\big) \otimes \delta_k(q_1 \cdots q_k, \sigma, q) = y$}\\
      0 &\text{ otherwise,} 
  \end{cases}\]
  and let
  \[(F'_b)_{(q,y)} = 
  \begin{cases}
      1 &\text{ if $y \otimes F_q = b$}\\
      0 &\text{ otherwise.} 
  \end{cases}
  \]
  Let $\xi \in \T_\Sigma$ and $b \in \rC(\A)$. 
 We observe that there exists a bijection between the two sets
  \[\{ \rho \in \R_\A(\xi) \mid \wt_\A(\xi,\rho)\otimes F_{\rho(\varepsilon)} = b \} \text{ and } \{ \rho' \in \R_{\A'_b}(\xi)\big)\mid \wt_{\A'_b}(\xi,\rho') \cdot (F'_b)_{\rho'(\varepsilon)} = 1\}.\]
  It follows that $\sem{\A'_b}(\xi) =  f_{\A,b}(\xi)$, and thus, by Equality~\eqref{eq:semantics=sum-f}, we have
  \begin{align*}
\sem{\A}(\xi) =  \bigoplus_{b \in \rC(\A)} \big(\sem{\A'_b}(\xi)\big)b.
\end{align*}
Let us define the mapping $r_b: \T_\Sigma \to B$ by $r_b(\xi) = \big(\sem{\A'_b}(\xi)\big)b$ for each $\xi\in \T_\Sigma$. Then $\sem{\A} = \bigoplus_{b \in \rC(\A)}r_b$ and it suffices to show that $r_b$ is a recognizable step mapping for each $b \in \rC(\A)$ because, obviously, recognizable step mappings are closed under the operation $\oplus$. 

To prove this latter, let $b \in \rC(\A)$. We distinguish the following two cases.

\underline{Case 1:} The mapping $f_{\A,b}$ is bounded, i.e.,  there exists $K \in \N$ such that $f_{\A,b}(\xi) \leq K$ for each $\xi \in \T_\Sigma$. Clearly, $\im(r_b) \subseteq \{jb \mid j \in [0,K]\}$. For each $j \in [0,K]$, let $L_{b,j} = \sem{\A'_b}^{-1}(j)$.
By Lemma~\ref{cor:inverse_n_recognizable}(1), $L_{b,j}$ is a  recognizable $\Sigma$-tree language. By our assumption,  we have $\bigcup_{j \in [0,K]} L_{b,j} = \T_\Sigma$. Hence $r_b = \bigoplus_{j \in [0,K]} (jb) \otimes \1_{(B,L_{b,j})}$, i.e., it is a recognizable step mapping. 

\underline{Case 2:} $b$ has finite additive order. Then we have $\langle b \rangle_\oplus = \{0, b, 2b, \ldots, (i(b)+p(b)-1)b\}$. So
  \[(\forall n \in \N)(\exists \text{ exactly one } j \in [0,i(b)+p(b)-1]) : nb = jb.\]
  Now let $L_{b,j} = \{\xi \in \T_\Sigma \mid \big(\sem{\A'_b}(\xi)\big)b = jb\ \}$ for each $j \in [0,i(b)+p(b)-1]$. Observe that $L_{\0,0}=\T_\Sigma$. We claim that $L_{b,j}$ is recognizable. We have
  \begin{itemize}
      \item  $L_{b,j} =\sem{\A'_b}^{-1}(j)$ if $0 \leq j < i(b)$, and
      \item  $L_{b,j} =\sem{\A'_b}^{-1}(j + p(b)\cdot\N)$ if $i(b) \leq j \leq i(b)+p(b)-1$.
  \end{itemize} 
Then $L_{b,j}$ is recognizable in both cases, by Lemmas~\ref{cor:inverse_n_recognizable}(1) and \ref{lm:inverse_m+nN_recognizable}(1), respectively.

  Let $\xi \in \T_\Sigma$. By the above, there exists a unique number $j \in [0,i(b)+p(b)-1]$ such that $r_b(\xi) = \big(\sem{\A'_b}(\xi)\big)b = jb$, and so $\xi \in L_{b,j}$.
  
  Hence,
  \[r_b = \bigoplus_{0 \leq j \leq i(b)+p(b)-1} (jb) \otimes \mathbb{1}_{(B,L_{b,j})}.\]
  i.e., it is a recognizable step mapping.

\underline{Proof of 2:} By Lemma~\ref{lm:compute-c(A)}, we can compute the set $\rH(\A)$, and thus the set $\rC(\A)$. Moreover, for each $b \in \rC(\A)$, we can construct effectively the $(\Sigma,\N)$-wta~$\A'_b$.

Next we decide, for each $b \in \rC(\A)$, whether the mapping $f_{\A,b}$ is bounded or $b$ has finite additive order.
Note that one of these conditions holds by our assumption.

For this, we run the following two algorithms in parallel for $i =0,1,2,\ldots$. In the first algorithm, we construct effectively the finite $\Sigma$-tree automaton which recognizes the $\Sigma$-tree language $L_{b,i}=\sem{\A'_b}^{-1}(i)$ (cf. Lemma \ref{cor:inverse_n_recognizable}(2)) and check whether $\bigcup_{j \in [0,i]} L_{b,j} = \T_\Sigma$ (cf. \cite[Thm.~2.10.3]{gecste84}).
If this is the case, then it means that $f_{\A,b}$ is bounded by $i$. We let $K_b=i$ and stop. 

In the second algorithm, we compute the sum $i b$ and check whether $i b=jb$ for some $j < i$. If this is the case, then $\langle b \rangle_\oplus = \{0,b,2b,\ldots,(i-1)b\}$ is a finite set. We let $K_b=i-1$ and stop. 

By our note above, the decision algorithm will stop for some $i \in\N$.

If the first algorithm stops, then  we can describe $r_b$ as in Case 1 of Statement 1.  If the second algorithm stops, then we
can compute $i(b)$ and $p(b)$ and can describe $r_b$ as in Case 2 of Statement 1. By Lemmas~\ref{cor:inverse_n_recognizable}(2) and \ref{lm:inverse_m+nN_recognizable}(2), we can construct effectively a finite $\Sigma$-tree automaton which recognizes $L_{b,j}$. If both algorithms stop, then we can proceed in either way. 

Now we have
\[\sem{\A} = \bigoplus_{b \in \rC(\A)} \bigoplus_{0 \leq j \leq K_b} (jb) \otimes \1_{(B,L_{b,j})}\enspace.\]
By applying the direct product construction in the proof of (iv) $\Rightarrow$ (i) of \cite[Lm.~5.3]{fulkosvog19} we can construct effectively the crisp-deterministic $(\Sigma,B)$-wta $\B$ such that $\sem{\B} = \sem{\A}$. 
\end{proof}

Now we can give a simple condition on the strong bimonoid  $B$  and
a structural condition on the wta  $\A$ ensuring  that  $\A$  has the finite image property and the preimage property.

\begin{corollary}\rm \label{cor:small-loops->cA-finite}
    Let $B$  be commutative or one-product free and let small loops of  $\A$ have weight~$\1$. 
 \begin{compactenum}
 \item     Then  $\rH(\A)$ is finite. 
 \item If, in addition, for each $b \in \rC(\A)$, the mapping $f_{\A,b}$ is bounded or $b$ has finite additive order, then $\A$ has the finite-image property and the preimage property.
  \end{compactenum}
\end{corollary}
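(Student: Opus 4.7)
The plan is to prove the two statements separately, with Statement 1 providing the hypothesis needed to apply Theorem~\ref{thm:sufficient-conds-ensure-rec-step-map} in Statement 2.

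For Statement 1, I would invoke Lemma~\ref{lm:weight-of-reduced-tree} directly. That lemma shows that, under the hypotheses that $B$ is commutative or one-product free and small loops of $\A$ have weight $\1$, every pair $(\xi,\kappa)$ consisting of a tree and a run on it can be replaced by a pair $(\xi',\kappa')$ of the same total weight where $\hgt(\xi') < |Q|$. Hence every element of $\rH(\A)$ is of the form $\wt(\xi',\kappa')$ for some $\xi' \in \T_\Sigma$ with $\hgt(\xi') < |Q|$ and some $\kappa' \in \R_\A(\xi')$. Since $\Sigma$ is a (finite) ranked alphabet and $Q$ is finite, there are only finitely many trees in $\T_\Sigma$ of height less than $|Q|$ and, on each such tree, only finitely many mappings $\pos(\xi') \to Q$. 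Therefore the set of admissible pairs $(\xi',\kappa')$ is finite, so $\rH(\A)$ is finite.

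For Statement 2, I would simply combine Statement 1 with Theorem~\ref{thm:sufficient-conds-ensure-rec-step-map}(1): the finiteness of $\rH(\A)$ is exactly the structural hypothesis of that theorem, and the remaining hypothesis (for each $b \in \rC(\A)$, the mapping $f_{\A,b}$ is bounded or $b$ has finite additive order) is assumed here. Theorem~\ref{thm:sufficient-conds-ensure-rec-step-map}(1) then yields both the finite-image property and the preimage property of $\A$.

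There is essentially no obstacle; the corollary is a packaging of Lemma~\ref{lm:weight-of-reduced-tree} (as a source of finiteness of $\rH(\A)$) with Theorem~\ref{thm:sufficient-conds-ensure-rec-step-map}(1). The only point requiring mild care is making explicit that $\Sigma$ is finite so that the set of trees of height less than $|Q|$ is finite, and that the finitely many runs on each such tree give only finitely many weights in $B$.
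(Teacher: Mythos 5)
Your proof is correct and follows exactly the paper's own route: Statement~1 is obtained by applying Lemma~\ref{lm:weight-of-reduced-tree} to conclude that $\rH(\A)$ coincides with the (finite) set of weights of runs on trees of height less than $|Q|$, and Statement~2 is then an immediate application of Theorem~\ref{thm:sufficient-conds-ensure-rec-step-map}(1). Your added remark about the finiteness of $\Sigma$ and of the set of runs on each small tree only makes explicit what the paper leaves implicit.
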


\begin{proof} First we prove Statement 1. If small loops of $\A$ have weight  $\1$, then by Lemma \ref{lm:weight-of-reduced-tree} we have
    \[\rH(\A) = \{\wt(\xi,\rho) \mid \xi \in \T_\Sigma, \hgt(\xi) < |Q| \text{ and } \rho \in \R_\A(\xi)\}\enspace.\]
     Hence $\rH(\A)$ is finite.  Then Statement 2 follows from Theorem \ref{thm:sufficient-conds-ensure-rec-step-map}(1).
\end{proof}

Next, we compare Theorem \ref{thm:sufficient-conds-ensure-rec-step-map} and \cite[Thm.~7.3]{fulkosvog19}. This makes sense because both results show sufficient conditions for a wta $\A$ such that $\sem{\A}$ is a recognizable step mapping. Let $\A=(Q,\delta,F)$ be a $(\Sigma,B)$-wta. Theorem 7.3 of \cite{fulkosvog19} requires that $\A$ has finite order property, i.e., (a) the set $\langle \im(\delta)\rangle_{\otimes}$ is finite and (b) each element $b \in \langle \im(\delta)\rangle_{\otimes} \otimes \im(F)$ has finite additive order. Condition (a) implies that $\rH(\A)$ is finite. And Condition (b) implies that each element $b \in \rC(\A)$ has finite additive order. Hence Theorem \ref{thm:sufficient-conds-ensure-rec-step-map} is at least as strong as \cite[Thm.~7.3]{fulkosvog19}. The next example shows a scenario in which Theorem \ref{thm:sufficient-conds-ensure-rec-step-map} is applicable but not \cite[Thm.~7.3]{fulkosvog19}.

\begin{example}  \label{ex:reachable-values-versus-closure} \rm 
We consider the ranked alphabet $\Sigma = \{\gamma^{(1)}, \nu^{(1)}, \alpha^{(0)}\}$ and the arctic semiring $\mathrm{ASR}_\N=(\N_{-\infty},\max,+,-\infty,0)$. Moreover, we let $\A = (Q,\delta,F)$ be the trim $(\Sigma,\mathrm{ASR}_\N)$-wta where $Q =\{q_1,q_2\}$, $\delta_0(\varepsilon,\alpha,q_1)=\delta_1(q_1,\gamma,q_1)=0$, 
and $\delta_1(q_1,\nu,q_2)=1$; and $F(q_1)=F(q_2)=0$. 

Since  $\langle \im(\delta)\rangle_{+} = \langle \{0,1\} \rangle_{+} = \mathbb{N}$ is infinite, we cannot apply \cite[Thm.~7.3]{fulkosvog19}. Moreover, since $\mathrm{ASR}_\N$ is one-product free, small loops of $\A$ have weight $0$ and each $n \in \rC(\A)$ has finite additive order, by  Corollary \ref{cor:small-loops->cA-finite}(2) we obtain that $\sem{\A}$ is a recognizable step mapping.
  \qed
  \end{example}

As a consequence of Theorem \ref{thm:sufficient-conds-ensure-rec-step-map}, we can extend Lemma \ref{lm:inverse_b_recognizable_with_B_finite_semiring} from finite semirings to finite strong bimonoids.

\begin{corollary} \rm \label{cor:inverse_b_recognizable_with_B_finite}
     Let $B$ be finite and $\A$ be a $(\Sigma,B)$-wta. Then the following statements hold.
    \begin{compactenum}
        \item $\A$ has the preimage property.
        \item If $B$ is computable and $\A$ is given effectively, then, for each $b \in B$, we can construct effectively a finite-state $\Sigma$-tree automaton which recognizes $\sem{\A}^{-1}(b)$.
    \end{compactenum}
\end{corollary}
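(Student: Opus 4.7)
The plan is to derive Corollary \ref{cor:inverse_b_recognizable_with_B_finite} as a direct consequence of Theorem \ref{thm:sufficient-conds-ensure-rec-step-map}, by verifying that the hypotheses of that theorem are automatically satisfied whenever the strong bimonoid $B$ is finite.

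First I would observe that if $B$ is finite, then $\rH(\A) \subseteq B$ is trivially finite. Moreover, for every $b \in B$, the submonoid $\langle b \rangle_\oplus \subseteq B$ is finite, so every element $b \in B$ (and in particular every $b \in \rC(\A)$) has finite additive order. Consequently, the two hypotheses of Theorem \ref{thm:sufficient-conds-ensure-rec-step-map}, namely that $\rH(\A)$ is finite and that each $b \in \rC(\A)$ either satisfies the boundedness condition on $f_{\A,b}$ or has finite additive order, are both fulfilled in our setting.

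Given this, Statement 1 follows immediately from Theorem \ref{thm:sufficient-conds-ensure-rec-step-map}(1), which yields the preimage property for $\A$. For Statement 2, assume in addition that $B$ is computable and $\A$ is given effectively. By Theorem \ref{thm:sufficient-conds-ensure-rec-step-map}(2) we can then construct effectively a crisp-deterministic $(\Sigma,B)$-wta $\B$ with $\sem{\B} = \sem{\A}$. Fix $b \in B$; by the remark following Lemma \ref{lm:cdwta_finite_image}, from the crisp-deterministic wta $\B$ one can construct effectively a finite-state $\Sigma$-tree automaton recognizing $\sem{\B}^{-1}(b) = \sem{\A}^{-1}(b)$.

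Since all intermediate constructions (computing $\rH(\A)$ and $\rC(\A)$ via Lemma \ref{lm:compute-c(A)}, the decision procedure to detect boundedness or finite additive order inside Theorem \ref{thm:sufficient-conds-ensure-rec-step-map}(2), and the final step to a finite-state tree automaton) are already known to be effective, no genuine obstacle arises; the whole statement is essentially a specialization. The only thing to double-check is that the computability assumption on $B$ in Statement 2 is enough to run the machinery of Theorem \ref{thm:sufficient-conds-ensure-rec-step-map}(2), which it is, since a finite and computable $B$ trivially supports equality tests and enumerations of its operations.
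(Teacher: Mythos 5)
Your proof is correct and follows essentially the same route as the paper: both derive Statement 1 from Theorem \ref{thm:sufficient-conds-ensure-rec-step-map}(1) after noting that finiteness of $B$ makes $\rH(\A)$ finite and gives every element finite additive order, and both obtain Statement 2 by constructing the crisp-deterministic wta via Theorem \ref{thm:sufficient-conds-ensure-rec-step-map}(2) and invoking the remark after Lemma \ref{lm:cdwta_finite_image}. No gaps.
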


\begin{proof} Let $\A=(Q,\delta,F)$ and we abbreviate $\sem{\A}$ by $r$. Since $B$ is finite, so is the set $\rH(\A)$. Moreover, $b$ has finite additive order for each $b \in \rC(\A)$. Then Statement 1 follows from Theorem \ref{thm:sufficient-conds-ensure-rec-step-map}(1). 

Now we prove Statement 2. By Theorem \ref{thm:sufficient-conds-ensure-rec-step-map}(2), we can construct effectively a crisp-deterministic $(\Sigma,B)$-wta $\B=(Q',\delta',F')$ such that $\sem{\B}=\sem{\A}$. 
Note that $\im(\sem{\A})\subseteq \im (F')$ because $\B$ is crisp-deterministic. Then Statement~2 follows from the remark after Lemma~\ref{lm:cdwta_finite_image}.
\end{proof}

Finally, as an application of Corollary \ref{cor:inverse_b_recognizable_with_B_finite}, we show that for arbitrary past-finite monotonic strong bimonoid $B$, every $(\Sigma,B)$-wta has the finite-image property. This generalizes the preimage property \cite[III. Cor.~2.5]{berreu88} from strings to trees and from the semiring $\N$ to  past-finite monotonic strong bimonoids.

\begin{theorem}  \label{thm:inverse_b_recognizable}
    Let $B$ be past-finite monotonic and $\A=(Q,\delta,F)$ be a $(\Sigma,B)$-wta. Then the following statements hold.
    \begin{compactenum}
        \item $\A$ has the preimage property.
        \item  If $B$ is computable, 
$\A$ is given effectively, and $b \in B$ such that the set $\past(b)$ is computable, then we can construct effectively  a finite-state $\Sigma$-tree automaton which recognizes $\sem{\A}^{-1}(b)$.
    \end{compactenum}
\end{theorem}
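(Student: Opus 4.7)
My plan is to reduce to the finite-strong-bimonoid case already handled in Corollary~\ref{cor:inverse_b_recognizable_with_B_finite}. Since $B$ is past-finite, the set $\past(b)$ is finite, so I define an equivalence relation $\sim$ on $B$ by declaring $x \sim y$ iff either $x = y$ or $x, y \notin \past(b)$. The quotient $B/\sim$ then has exactly $|\past(b)| + 1$ elements, and the equivalence class of $b$ is the singleton $\{b\}$ because $b \in \past(b)$.

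The heart of the argument is to show that $\sim$ is a congruence with respect to both $\oplus$ and $\otimes$. For $\oplus$, if $x \notin \past(b)$ and $y \in B$ is arbitrary, then condition~(i) of monotonicity gives $x \preceq x \oplus y$, so transitivity forces $x \oplus y \notin \past(b)$; combined with the symmetric case for $y$ and the trivial case $x = x', y = y'$, this yields compatibility of $\oplus$ with $\sim$. For $\otimes$, I distinguish cases according to whether each factor equals $\0$, equals $\1$, or lies outside $\past(b)$. Cases involving $\0$ collapse by zero-absorption and zero-divisor-freeness, and cases in which one factor equals $\1$ are handled by unitality. In the remaining case both factors $x, y$ satisfy $x, y \in B \setminus \{\0,\1\}$, and at least one lies outside $\past(b)$; instantiating condition~(ii) of monotonicity with the appropriate choice of the ``middle'' factor yields $x \otimes y \succ x$ (or $\succ y$), so transitivity again forces $x \otimes y \notin \past(b)$. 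Here one uses throughout that in any monotonic strong bimonoid $\0 \prec \1 \preceq c$ holds for every $c \in B \setminus \{\0\}$, so an element outside $\past(b)$ is automatically distinct from $\0$, and (if $b \neq \0$) also from $\1$.

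Once $\sim$ is a congruence, $(B/\sim, \oplus, \otimes, [\0], [\1])$ is a finite strong bimonoid and the canonical projection $h : B \to B/\sim$ is a strong bimonoid homomorphism satisfying $h^{-1}(h(b)) = \{b\}$. By Lemma~\ref{lm:closure-homomorphism} the wta $h(\A)$ over $B/\sim$ satisfies $\sem{h(\A)} = h \circ \sem{\A}$, and hence
\[
\sem{\A}^{-1}(b) \;=\; (h \circ \sem{\A})^{-1}(h(b)) \;=\; \sem{h(\A)}^{-1}(h(b)).
\]
Statement~1 now follows at once from Corollary~\ref{cor:inverse_b_recognizable_with_B_finite}(1). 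For Statement~2, computability of $B$ and of $\past(b)$ lets me represent $B/\sim$ concretely as $\past(b) \cup \{\top\}$, with a fresh symbol $\top$ standing for the class of all elements outside $\past(b)$, and compute its induced operations; so $B/\sim$ is a computable finite strong bimonoid. Lemma~\ref{lm:closure-homomorphism} then yields $h(\A)$ effectively, and Corollary~\ref{cor:inverse_b_recognizable_with_B_finite}(2) produces a finite-state tree automaton recognizing $\sem{h(\A)}^{-1}(h(b)) = \sem{\A}^{-1}(b)$ effectively.

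The main obstacle I expect is the $\otimes$-congruence check: condition~(ii) of monotonicity only applies when all three involved factors are non-zero and the middle one differs from $\1$, so the proof must dissect all combinations of the arguments being $\0$, $\1$, or outside $\past(b)$, with the three regimes $b = \0$, $b = \1$, $b \notin \{\0,\1\}$ each slightly changing which of $\0$ and $\1$ lie in $\past(b)$ and therefore which cases are non-trivial.
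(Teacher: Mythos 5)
Your proposal is correct and follows essentially the same route as the paper: collapse $B\setminus\past(b)$ into a single class, use monotonicity to verify this yields a congruence, pass to the finite (computable) quotient strong bimonoid via Lemma~\ref{lm:closure-homomorphism}, and invoke Corollary~\ref{cor:inverse_b_recognizable_with_B_finite}. The only difference is presentational: the paper dispatches the $\otimes$-compatibility in one line (for $d\ne\0$ one has $c\preceq c\otimes d$ and $c\preceq d\otimes c$), where you carry out the case analysis on $\0$, $\1$ explicitly.
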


\begin{proof}
    \underline{Proof of 1:} Let $b \in B$ and put  $C = B \setminus \past(b) = \{a \in B\mid a \npreceq b\}$. Moreover, let $\sim$ be the equivalence relation on the set $B$ defined such that its classes are the singleton sets $\{a\}$ for each $a \in \past(b)$ and the set $C$. We claim that $\sim$  is a congruence. To show that $C$ is a congruence class, let  $c,c' \in C$  and  $d \in B$. Since  $B$  is monotonic, we have  $c \preceq c \oplus d$, hence  $c \oplus d \in C$  and similarly  $c' \oplus d \in C$.
     Also, if $d \ne \0$, again we obtain $c \preceq c \otimes d$  and  $c \preceq d\otimes c$, showing $c\otimes d,d\otimes c \in C$  and similarly  $c'\otimes d,d\otimes c' \in C$. Hence $C$ is a congruence class and the relation $\sim$ is a congruence on the strong bimonoid $B$. Then the quotient strong bimonoid $B/_\sim$ is finite. Let $h: B \to B/_\sim$ be the canonical strong bimonoid homomorphism. Let us abbreviate $\sem{\A}$ by $r$. Then, by Lemma \ref{lm:closure-homomorphism}, $(h \circ r) \in \mathrm{Rec}(\Sigma,B/_\sim)$. Moreover $r^{-1}(b) = (h \circ r)^{-1}(\{b\})$.  Since $B/_\sim$ is finite, by Corollary \ref{cor:inverse_b_recognizable_with_B_finite}(1), the $\Sigma$-tree language $(h\circ r)^{-1}(\{b\})$ is recognizable.

    \underline{Proof of 2:} Let $\sim$ be the congruence  defined as in the proof of Statement 1 and $h: B \to B/_\sim$ be the canonical strong bimonoid homomorphism. Since $B$ is computable and also $\past(b)$ is computable, we can give effectively  the congruence classes of $\sim$, i.e., the elements of $B/_\sim$, by choosing only one representative for each congruence class. By Lemma~\ref{lm:closure-homomorphism}, we can construct effectively the $(\Sigma,B/_\sim)$-wta $h(\A)$ such that $\sem{h(\A)}= h \circ r$. Since $B/_\sim$ is computable and finite, and $\A$ is given effectively, by Corollary \ref{cor:inverse_b_recognizable_with_B_finite}(2), we can construct effectively  a finite-state $\Sigma$-tree automaton which recognizes $(h\circ r)^{-1}(\{b\})$.
\end{proof}

\begin{corollary} \rm Let $B$ be past-finite monotonic and $\A=(Q,\delta,F)$ be a $(\Sigma,B)$-wta. Then $\supp(\sem{\A})$ is a recognizable $\Sigma$-tree language.
\end{corollary}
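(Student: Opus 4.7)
The plan is immediate from Theorem~\ref{thm:inverse_b_recognizable}(1). Observe that
\[
\supp(\sem{\A}) \;=\; \{\xi \in \T_\Sigma \mid \sem{\A}(\xi) \neq \0\} \;=\; \T_\Sigma \setminus \sem{\A}^{-1}(\0),
\]
so $\supp(\sem{\A})$ is obtained from $\sem{\A}^{-1}(\0)$ by complementation within $\T_\Sigma$.

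First, I would apply Theorem~\ref{thm:inverse_b_recognizable}(1) with $b = \0$: since $B$ is past-finite monotonic, $\A$ has the preimage property, and hence $\sem{\A}^{-1}(\0)$ is a recognizable $\Sigma$-tree language. Next, I would invoke the classical fact that the class of recognizable $\Sigma$-tree languages is closed under complement (see, e.g., \cite{gecste84,tata08}); in particular, $\T_\Sigma \setminus \sem{\A}^{-1}(\0) = \supp(\sem{\A})$ is recognizable.

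No genuine obstacle is anticipated: the corollary is a one-line consequence of the just-established preimage property together with the standard Boolean closure of recognizable tree languages. The only thing to be a little careful about is that we do not need any computability or effectiveness assumption here, so we rely solely on part~(1) of Theorem~\ref{thm:inverse_b_recognizable} and not on part~(2).
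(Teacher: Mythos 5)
Your proposal is correct and matches the paper's own proof exactly: both apply Theorem~\ref{thm:inverse_b_recognizable}(1) with $b=\0$ to get recognizability of $\sem{\A}^{-1}(\0)$, then use closure of recognizable $\Sigma$-tree languages under complement together with $\supp(\sem{\A}) = \T_\Sigma \setminus \sem{\A}^{-1}(\0)$. Your remark that only part~(1), with no effectiveness assumptions, is needed is also accurate.
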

\begin{proof} By Theorem \ref{thm:inverse_b_recognizable}, the $\Sigma$-tree language $\sem{\A}^{-1}(\0)$ is recognizable. Since the class of recognizable $\Sigma$-tree languages is closed under complement and $\supp(\sem{\A}) = \T_\Sigma \setminus  \sem{\A}^{-1}(\0)$, we obtain the statement of the corollary.
  \end{proof}

\section{Characterization and decidability of the finite-image property}

In most of this section,  $B$  will be a past-finite strong bimonoid. Our main results of this section will describe when a $(\Sigma,B)$-wta $\A$ has the finite-image property and when it can decided whether $\A$ has the finite-image property. First we characterize when an arbitrary $(\Sigma,B)$-wta $\A$  has the finite-image property by structural properties of the wta $\A$. 

\begin{theorem}  \label{cor:characterization-of-crisp-determinizability} Let $B$ be a past-finite monotonic strong bimonoid and 
$\A$ be a trim $(\Sigma,B)$-wta. Then the following statements are equivalent.
\begin{compactenum}
\item $\A$  has the finite-image property.
\item Small loops of $\A$ have weight $\1$ and, for each $b \in \rC(\A)$, the mapping  $f_{\A,b}$  is bounded or $b$ has finite additive order.
\end{compactenum} 
\end{theorem}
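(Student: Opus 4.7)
The direction (2) $\Rightarrow$ (1) will be a direct application of the earlier structural results. Every past-finite monotonic strong bimonoid is one-product free, so the hypotheses of Corollary~\ref{cor:small-loops->cA-finite} are met, giving $\rH(\A)$ finite; the remaining conjunct of (2) on $\rC(\A)$ is then exactly what Theorem~\ref{thm:sufficient-conds-ensure-rec-step-map}(1) needs to conclude that $\A$ has the finite-image property. The body of the proof is therefore the reverse implication (1) $\Rightarrow$ (2), which I would establish by contrapositive on each conjunct.

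For the contrapositive of (2a), suppose some small loop $\rho \in \R_\A(q,c,q)$ with $\hgt(c) < |Q|$ satisfies $\wt(c,\rho) \ne \1$. Since $\A$ is trim, I can fix $\xi \in \T_\Sigma$ and $\kappa \in \R_\A(\xi)$ with $F_{\kappa(\varepsilon)} \ne \0$ and some $v \in \pos(\xi)$ with $\kappa(v) = q$. Writing $c' = \xi|^v$, $\zeta = \xi|_v$, $\rho' = \kappa|^v$, $\theta = \kappa|_v$, I would pump the loop: for each $n \in \N$, form $\xi_n = c'[c^n[\zeta]]$ and $\kappa_n = \rho'[\rho^n[\theta]] \in \R_\A(\kappa(\varepsilon),\xi_n)$. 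Lemma~\ref{lm:decomposition-of-a-run} then yields
\[
W_n := \wt(\xi_n,\kappa_n) \otimes F_{\kappa(\varepsilon)} = l_{c',\rho'} \otimes (l_{c,\rho})^n \otimes \wt(\zeta,\theta) \otimes (r_{c,\rho})^n \otimes r_{c',\rho'} \otimes F_{\kappa(\varepsilon)}.
\]
All displayed factors are nonzero (run transitions are nonzero and $B$ is zero-divisor free), and by Observation~\ref{obs:decomp-run-left-right} together with the choice of $\rho$ at least one of $l_{c,\rho}, r_{c,\rho}$ differs from $\1$. One or two applications of monotonicity~(ii), inserting the extra copy of $l_{c,\rho}$ and/or $r_{c,\rho}$, therefore give $W_n \prec W_{n+1}$, so $\{W_n : n \in \N\}$ is infinite. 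Since $\kappa_n$ is only one of the runs contributing to $\sem{\A}(\xi_n)$, I can write $\sem{\A}(\xi_n) = W_n \oplus S_n$ for some $S_n \in B$, and monotonicity~(i) forces $W_n \preceq \sem{\A}(\xi_n)$. If $\im(\sem{\A}) = \{v_1,\ldots,v_k\}$ were finite, every $W_n$ would be trapped in $\past(v_1) \cup \cdots \cup \past(v_k)$, a finite set by past-finiteness---contradicting infiniteness of $\{W_n\}$.

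For the contrapositive of (2b), suppose some $b \in \rC(\A)$ has $f_{\A,b}$ unbounded and infinite additive order. A short check shows that in this case $n \mapsto n b$ must be injective on $\N$, for otherwise $\langle b \rangle_\oplus$ would become eventually periodic and hence finite. Picking trees $\xi_K$ with $f_{\A,b}(\xi_K) \to \infty$, injectivity yields infinitely many distinct values $f_{\A,b}(\xi_K)\cdot b$. Equation~\eqref{eq:semantics=sum-f} lets me decompose $\sem{\A}(\xi_K) = f_{\A,b}(\xi_K)\cdot b \oplus S(\xi_K)$, and monotonicity~(i) again gives $f_{\A,b}(\xi_K) \cdot b \preceq \sem{\A}(\xi_K)$; past-finiteness then contradicts the finite-image property exactly as before. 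The main obstacle in both halves is identical: we directly control only a single summand of the sum-of-runs $\sem{\A}(\xi)$, while other contributions could in principle perturb the value. What rescues the argument is the coupling of monotonicity~(i) (which bounds the controlled summand above by the whole sum in $\preceq$) with past-finiteness (which forces any $\preceq$-bounded sequence to be trapped in a finite set); the slightly delicate point is the small case analysis on which of $l_{c,\rho}, r_{c,\rho}$ is nontrivial, since monotonicity~(ii) only fires on factors different from $\1$.
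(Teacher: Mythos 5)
Your proof is correct and follows essentially the same route as the paper: (2)$\Rightarrow$(1) via Corollary~\ref{cor:small-loops->cA-finite}, and (1)$\Rightarrow$(2) by pumping a non-trivial small loop with Lemma~\ref{lm:decomposition-of-a-run} and Observation~\ref{obs:decomp-run-left-right}, then combining monotonicity~(i) with past-finiteness, and treating the additive-order conjunct via Equality~\eqref{eq:semantics=sum-f}. The only (harmless) difference is cosmetic: the paper explicitly extracts a subsequence of trees with pairwise distinct $\sem{\A}$-values by escaping accumulated pasts, whereas you argue directly that the infinite strictly increasing set $\{W_n\}$ cannot be trapped in the finite union of pasts of a finite image.
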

\begin{proof} \underline{(1) $\Rightarrow$ (2):} First we show that small loops of $\A$  have weight $\1$. We proceed by contraposition. Suppose there exist  $q \in Q$, $c \in \C_\Sigma$, and $\rho \in \R_\A(q,c,q)$ such that $\hgt(c) < |Q|$ and  $\1 \prec \wt(c,\rho)$. Since $\A$ is trim, the state $q$ is useful and thus there exist  $\xi \in \T_\Sigma, \theta \in \R_\A(q,\xi)$ and $c' \in \C_\Sigma$, $q' \in Q$ with $F_{q'}\ne\0$, and $\rho' \in \R_\A(q',c',q)$. By Lemma~\ref{lm:decomposition-of-a-run}, for each $n \in \N$, we have
\[
 \wt(c'\big[c^{n}[\xi]\big], \rho'\big[\rho^{n}[\theta]\big]) =
  l_{c',\rho'} \otimes (l_{c,\rho})^n \otimes \wt(\xi,\theta) \otimes (r_{c,\rho})^n \otimes r_{c',\rho'}\enspace.
\]
Since $\1 \prec \wt(c,\rho) = l_{c,\rho}\otimes r_{c,\rho}$ , we have $\1 \prec l_{c,\rho}$  or $\1 \prec r_{c,\rho}$  and thus by monotonicity we obtain
\begin{equation}\label{eq:prec-sequence}
  \wt  (c'\big[c^0[\xi]\big], \rho'\big[\rho^0[\theta]\big]  ) \prec \wt  (c'\big[c^1[\xi]\big], \rho'\big[\rho^1[\theta]\big]  ) \prec \ldots \enspace.
\end{equation}

We define a sequence $\xi_1, \xi_2, \xi_3, \ldots$ 
of trees in $\T_\Sigma$ such that the elements $\sem{\A}(\xi_1)$, $\sem{\A}(\xi_2)$, $\sem{\A}(\xi_3)$, $\ldots$ are pairwise different as follows.
We let $\xi_1 = c'\big[c[\xi]\big]$. Then $P_1 = \past(\sem{\A}(\xi_1))$ is finite. By \eqref{eq:prec-sequence} we choose $n_2$ such that $\wt(c'\big[c^{n_2}[\xi]\big], \rho'\big[\rho^{n_2}[\theta]\big]) \not\in P_1$ and let $\xi_2=c'\big[c^{n_2}[\xi]\big]$. Since $\rho'\big[\rho^{n_2}[\theta]\big] \in \R_\A(q',\xi_2)$ and $B$ is monotonic, we have
\[\wt(\xi_2, \rho'\big[\rho^{n_2}[\theta]\big]) \preceq \wt(\xi_2, \rho'\big[\rho^{n_2}[\theta]\big])\otimes F_{q'} \preceq \sem{\A}(\xi_2).
\]
(Note that $F_{q'}$ may be $\1$.) Hence $\sem{\A}(\xi_2) \notin P_1$. Put $P_2 = \past(\sem{\A}(\xi_2))$. 
Then we choose $n_3 \in \N$ such that  $\wt(c'\big[c^{n_3}[\xi]\big], \rho'\big[\rho^{n_3}[\theta]\big]) \not\in P_1 \cup P_2$ and let $\xi_3=c'\big[c^{n_3}[\xi]\big]$. As before,  we have
\(\sem{\A}(\xi_3) \notin P_1 \cup P_2\).
Continuing this process, we obtain the desired sequence of trees. It means that $\A$ does not have the finite-image property.

Now let $b\in \rC(\A)$. If the mapping $f_{\A,b}$ is not bounded, then there exists an infinite sequence $\xi_1,\xi_2,\ldots$ of trees in $\T_\Sigma$ such that $f_{\A,b}(\xi_1) < f_{\A,b}(\xi_2) < \ldots$. By Equality \eqref{eq:semantics=sum-f}, we have $\big(f_{\A,b}(\xi_i)\big)b \preceq \sem{\A}(\xi_i)$ for each $i\in \N$. Thus $\big(f_{\A,b}(\xi_i)\big)b \in P$, where
$P=\bigcup_{a\in \im(\sem{\A})}\past(a)$. Since  $\im(\sem{\A})$ is finite and $B$ is past-finite, the set $P$ is also finite. 
Hence $\big(f_{\A,b}(\xi_i)\big)b=\big(f_{\A,b}(\xi_j)\big)b$ for some $i,j\in \N$ with $i<j$, which implies that $b$ has finite additive order. 

\underline{(2) $\Rightarrow$ (1):} It follows from Corollary \ref{cor:small-loops->cA-finite}(2). 
\end{proof}

\begin{figure}[t]
\centering
\begin{tikzpicture}
\node at (0,0) (b1) {$\0$};
\node at (0,0.65) (b2) {$\1$};
\node[anchor=south] at (0,1) (b3) {$1$};
\node[anchor=south] at (0,1.7) (b4) {$2$};
\node[anchor=south] at (0,2.4) (b5) {$\vdots$};
\node[anchor=south] at (0,3.2) (b6) {$1'$};
\node[anchor=south] at (0,3.9) (b7) {$2'$};
\node[anchor=south] at (0,4.6) (b8) {$\vdots$};

\draw[-]
  (b1) -- (b2)
  (b2) -- (b3)
  (b3) -- (b4)
  (b4) -- (b5)
  (0,3) -- (b6)
  (b6) -- (b7)
  (b7) -- (b8)
;
\draw[decorate,decoration={brace,amplitude=2.5pt,mirror},xshift=10pt]
  (0.15,1.1) -- (0.15,2.9) node[midway,xshift=12.5pt] {$\N_+$};
\draw[decorate,decoration={brace,amplitude=2.5pt,mirror},xshift=10pt]
  (0.15,3.2) -- (0.15,5.1) node[midway,xshift=12.5pt] {$\N'_+$};
\end{tikzpicture}
\caption{\label{fig:ordering-B} The Hasse diagram of the ordering $(B,\leq_B)$ in Example \ref{ex:cA-infin-imA-fin}.}
\end{figure}

The following example shows that in Theorem \ref{cor:characterization-of-crisp-determinizability} even for commutative semirings we cannot replace the assumption that $B$  is past-finite by being idempotent. The wta $\A$ given below can actually be considered as a weighted string automaton (cf. Section~\ref{sect:string-case}).

\begin{example} \label{ex:cA-infin-imA-fin} \rm
In this example we give an idempotent and monotonic semiring $B$ which is not past-finite, and a $(\Sigma,B)$-wta $\A$ such that $\A$ has the finite-image property.

For this, let $B=\N_+ \cup \N'_+ \cup \{\0,\1\}$ where $\N'_+$ is an isomorphic copy of $\N_+$, and furthermore, $\0$ and $\1$ are new elements such that $\{\0,\1\} \cap (\N_+ \cup \N'_+) = \emptyset$. For each $C \in \{\N_+,\N'_+\}$ we denote by $\leq_C$ the usual linear order of $C$ and $+_C$ the usual binary addition operation on $C$. 

We define a partial ordering $\leq_B$ on $B$ such that $\leq_B$ extends each of the linear orders of $\N_+$ and $\N'_+$ (i.e., $\leq_B \cap (C \times C) = \leq_C$ for each $C \in \{\N_+,\N'_+\}$) and such that $\0 <_B \1 <_B a <_B b'$ for every $a \in \N_+$ and $b' \in \N'_+$. Note that $(B,\leq_B)$ is a partial order. Figure \ref{fig:ordering-B} shows the Hasse diagram of the ordering $(B,\leq_B)$.
Moreover, letting $\vee$ be the usual binary supremum operation, $(B,\vee)$ is a join-semilattice. Note that we have $a \vee b' = b'$ for each $a \in \N_+$ and $b' \in \N'_+$.

Now we define a commutative multiplication $\otimes$ on $B$ as follows. For each $C \in \{\N_+,\N'_+\}$ we let $\otimes|_{C \times C} = +_C$. Furthermore, for every $a \in \N_+$ and $b' \in \N'_+$ we let $a \otimes b' = a' +_{\N'_+} b'$. Finally, let $\0 \otimes b = \0$ and $\1 \otimes b = b$ for each $b \in B$.
Then $(B,\vee,\otimes,\0,\1)$ is a monotonic strong bimonoid. Clearly, $B$ is not past-finite, e.g., $\past(1') = (\{\0,\1\} \cup \N_+)$, which is infinite. However, $B$ is idempotent because it is a join-semilattice. Obviously, $B$ is monotonic.

Let $\Sigma=\{\gamma^{(1)},e^{(0)}\}$. Next we consider the $(\Sigma,B)$-wta $\A=(\{p,q\},\delta,F)$ with $\delta_0(\varepsilon,e,p) = \1$, $\delta_0(\varepsilon,e,q)=1'$ (in $\N'_+$), $\delta_1(p,\gamma,p)=1$ (in $\N_+$), $\delta_1(q,\gamma,q) = \1$, $\delta_1(p,\gamma,q)=\0=\delta_1(q,\gamma,p)$,  and $F_p=\1=F_q$. Let $n \in \N$ and $\xi=\gamma^n(e) \in \T_\Sigma$. Clearly, there exist  two runs on $\xi$: Let us denote them by $\rho^{\xi,p}$ and $\rho^{\xi,q}$. Obviously, $\wt(\xi,\rho^{\xi,p})=n$ if $n \geq 1$, and otherwise $\wt(\xi,\rho^{\xi,p})=\1$, and furthermore, $\wt(\xi,\rho^{\xi,q})=1'$. Thus, we have $\sem{\A}(\xi) = \wt(\xi,\rho^{\xi,p}) \vee \wt(\xi,\rho^{\xi,q}) = 1'$. Hence, $\A$ has the finite-image property.

Observe that  $\A$  has two small loops with weight $1$ (in $\N_+$) respectively $\1$.
 \qed
\end{example}

We say that the strong bimonoid  $B$ \emph{has effective tests for $\0$ and $\1$}  if for each  $b \in B$  we can decide whether  $b = \0$  and whether  $b = \1$.

\begin{theorem} \label{cor:image-A-finite-decidable} \cite[Cor.~14]{drofulkosvog20}
  Let $B$ be past-finite monotonic and have effective tests for $\0$ and $\1$. Moreover, let $\A$ be given effectively.   If $B$ is additively locally finite or $\A$ is finitely ambiguous, then it is decidable whether $\A$ has the finite-image property.
\end{theorem}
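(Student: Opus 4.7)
The plan is to reduce the problem to the characterization provided by Theorem~\ref{cor:characterization-of-crisp-determinizability}: a trim $(\Sigma,B)$-wta $\A$ has the finite-image property if and only if (i) small loops of $\A$ have weight $\1$ and (ii) for each $b \in \rC(\A)$, the mapping $f_{\A,b}$ is bounded or $b$ has finite additive order. So the first step is to replace $\A$ by an effectively constructed trim equivalent. Using the effective test for $\0$, I would first construct the context-free grammar $\G(\A)$ and check whether $\mathrm{L}(\G(\A)) \ne \emptyset$ (i.e., whether $\A$ has a useful state). If not, then $\sem{\A}$ is identically $\0$, so $\A$ trivially has the finite-image property. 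Otherwise, Theorem~\ref{thm:A'-equivalent-to-A} gives a trim $\A' = (Q',\delta',F')$ effectively with $\sem{\A'} = \sem{\A}$, and $\A'$ is finitely ambiguous whenever $\A$ is.

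The next step is to observe that under either of the two alternative hypotheses, condition (ii) of the characterization is automatically satisfied by $\A'$, so that only condition (i) remains to be checked. If $B$ is additively locally finite, then $\langle b \rangle_\oplus$ is finite for every $b \in B$, hence $b$ has finite additive order. If $\A$ (and therefore $\A'$) is finitely ambiguous, with bound $K$ on the number of runs whose root weight is nonzero, then for each $b \in \rC(\A') \setminus \{\0\}$, any run $\rho$ contributing to $f_{\A',b}(\xi)$ satisfies $\wt(\xi,\rho) \otimes F'_{\rho(\varepsilon)} = b \neq \0$, which forces $F'_{\rho(\varepsilon)} \neq \0$ (using that monotonic strong bimonoids are zero-divisor free), so $f_{\A',b}(\xi) \leq K$; and $\0$ has finite additive order trivially.

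Thus it suffices to decide condition (i): that every small loop of $\A'$ has weight $\1$. Since $\Sigma$ is a finite ranked alphabet and the number of contexts $c \in \C_\Sigma$ with $\hgt(c) < |Q'|$ is finite, and for each such $c$ and each $q \in Q'$ the set $\R_{\A'}(q,c,q)$ is also finite, I would simply enumerate all triples $(q,c,\rho)$ with $\rho \in \R_{\A'}(q,c,q)$ and $\hgt(c) < |Q'|$, compute $\wt(c,\rho)$ by iterated use of $\oplus$ and $\otimes$ on weights in $\im(\delta')$ according to \eqref{eq:weight-of-a-run}, and test $\wt(c,\rho) = \1$ using the effective test for $\1$. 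The algorithm answers YES iff every such weight is $\1$.

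The only delicate point in the proof is not an algorithmic one but a conceptual one: verifying that the two alternative hypotheses in the theorem each render clause (ii) of Theorem~\ref{cor:characterization-of-crisp-determinizability} superfluous, so that the problem really does collapse to the finite check of clause (i). Once this reduction is in place, the rest is bookkeeping: enumerating finitely many loops, computing finitely many weights in $B$, and applying the effective equality tests.
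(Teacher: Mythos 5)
Your overall structure is exactly the paper's: reduce to a trim equivalent via Theorem~\ref{thm:A'-equivalent-to-A}, invoke the characterization of Theorem~\ref{cor:characterization-of-crisp-determinizability}, and observe that under either hypothesis (additive local finiteness, resp.\ finite ambiguity) clause (ii) of that characterization holds automatically, so only the small-loop condition remains to be tested. Your arguments for these reductions are correct, and your explicit treatment of the no-useful-state case is if anything more careful than the paper's.

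There is, however, one genuine gap in your final step. To decide whether every small loop has weight $\1$, you propose to ``compute $\wt(c,\rho)$ by iterated use of $\oplus$ and $\otimes$'' and then test the result against $\1$. But the theorem does not assume that $B$ is computable: the only effectivity hypotheses are tests for $\0$ and $\1$. (The paper explicitly flags this later, remarking that Theorem~\ref{cor:image-A-finite-decidable} ``does not require $B$ to be computable,'' in contrast to Theorem~\ref{thm:characterization}.) So you are not entitled to evaluate the product $\wt(c,\rho)$ at all. The paper sidesteps this by exploiting monotonicity: a monotonic strong bimonoid is one-product free, so $\wt(c,\rho) = \1$ holds if and only if \emph{every individual transition weight} $\delta_k(\rho(v1)\cdots\rho(vk),\sigma,\rho(v))$ occurring along the loop equals $\1$. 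Each of these finitely many factors can be tested directly with the effective test for $\1$, with no need to multiply anything. With that substitution your proof goes through; without it, the algorithm as written is not justified under the stated hypotheses.
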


\begin{proof} 
By Theorem~\ref{thm:A'-equivalent-to-A}, we may assume that $\A$ is trim.
  By Theorem~\ref{cor:characterization-of-crisp-determinizability}, $\A$ has the finite-image property if and only if small loops of $\A$ have weight $\1$.
The latter property is decidable
because (a) there exist  only finitely many $c \in \C_\Sigma$  such that  $\hgt(c) < |Q|$,
and (b) since  $B$ is monotonic, for all  $c \in \C_\Sigma$, $q \in Q$, and  $\rho \in \R_\A(q,c,q)$  we have $\wt(c,\rho) = \1$
if and only if for each  $v \in \pos(c)$  we have
$\delta_k(\rho(v1)\cdots\rho(vk),\sigma,\rho(v)) = \1$  where
$\sigma = c(v)$  and  $k = \rk(\sigma)$, and (c) this is decidable because $B$ has an effective test for $\1$.
\end{proof}

The decidability problem addressed in Theorem~\ref{cor:image-A-finite-decidable} is meaningful, because in Example~\ref{ex:commutative-SB-not-crisp-determinizable} we considered the additively locally finite and past-finite monotonic semiring $\mathrm{ASR}_\N$ and a deterministic $(\Sigma,\mathrm{ASR}_\N)$-wta $\A$ for which 
%does not have the finite-image property, due to Lemma \ref{lm:cdwta_finite_image} and the fact that
$\im(\sem{\A})$ is infinite. 

As an immediate consequence of Theorems~\ref{thm:inverse_b_recognizable}(1)~and~\ref{cor:image-A-finite-decidable} and Lemma~\ref{lm:cdwta_finite_image}, under the assumptions of Theorem~\ref{cor:image-A-finite-decidable}, it is decidable whether $\A$  is crisp-determinizable.

Next, we compare Theorem~\ref{thm:sufficient-conds-ensure-image-finite} with 
Theorem~\ref{cor:image-A-finite-decidable} in the following sense: we show an example of a wta $\A$ such that (a) by applying Theorem~\ref{thm:sufficient-conds-ensure-image-finite} we know that $\A$ has the finite-image property and (b) we cannot apply Theorem~\ref{cor:image-A-finite-decidable} to decide whether $\A$ has the finite-image property.

\begin{example} \label{ex:suffient-but-not-decidable} \rm
We consider the ranked alphabet $\Sigma = \{\gamma^{(1)}, \nu^{(1)}, \alpha^{(0)}\}$ and the tropical semiring $\mathrm{TSR}_\N=(\N_{\infty},\min,+,\infty,0)$. Moreover, we let $\A = (Q,\delta,F)$ be the trim $(\Sigma,\mathrm{TSR}_\N)$-wta (as in Example~\ref{ex:reachable-values-versus-closure}) where $Q =\{q_1,q_2\}$, $\delta_0(\varepsilon,\alpha,q_1)=\delta_1(q_1,\gamma,q_1)=0$, and $\delta_1(q_1,\nu,q_2)=1$; and $F_{q_1}=F_{q_2}=0$.

Then $\A$ satisfies the assumptions of Theorem~\ref{thm:sufficient-conds-ensure-image-finite}. In fact, $\im(\sem{\A}) = \{0,1\}$.
  
  The tropical semiring $\mathrm{TSR}_\N$ cannot be extended into a past-finite monotonic semiring $(\N_{\infty},\min,+,\infty,0,\preceq)$. To see this, assume that $\preceq$ is a monotonic partial order. Then  $a \preceq \min\{a,b\}$  for every $a,b \in \mathbb{N}_{\infty}$ (by Condition (i) of monotonicity). Thus  $a \geq b$ (i.e., $b = \min\{a,b\}$) implies  $a \preceq b$.   However, $\geq$  is not past-finite. Hence, we cannot use Theorem~\ref{cor:image-A-finite-decidable} to decide whether $\A$ has the finite-image property. \qed
  \end{example}

  \begin{theorem}
    Let $B$ be monotonic and have effective tests for $\0$ and $\1$. It is decidable, for arbitrary unambiguous $(\Sigma,B)$-wta $\A$, whether $\A$ has the finite-image property.  
\end{theorem}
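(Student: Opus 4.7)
The plan is to reduce, as in Theorem~\ref{cor:image-A-finite-decidable}, the decision problem to checking whether every ``small'' loop of (a trimmed version of) $\A$ has weight $\1$, and then to show that this characterization remains valid when past-finiteness is replaced by unambiguity. The crucial observation is that unambiguity collapses the sum defining $\sem{\A}(\xi)$ to a single term, so a strictly increasing chain of run-weights produced by monotonicity propagates directly to a strictly increasing chain of values in $\im(\sem{\A})$.

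First I would treat the degenerate case: using the effective test for $\0$, one can decide (via non-emptiness of the context-free grammar $\G(\A)$ of Section~\ref{sect:trim-wta}) whether $\A$ has any useful state; if not, $\sem{\A} = \widetilde{\0}$ has finite image. Otherwise, apply Theorem~\ref{thm:A'-equivalent-to-A} to construct an equivalent trim wta $\A' = (Q', \delta', F')$; since unambiguity is inherited by the restriction to useful states, $\A'$ is again unambiguous. Finally, since only finitely many contexts $c \in \C_\Sigma$ with $\hgt(c) < |Q'|$ and only finitely many corresponding loops $\rho$ of $\A'$ need be examined, decide whether every such loop satisfies $\wt(c,\rho) = \1$; because $B$ is monotonic and therefore one-product free, this reduces to checking (using the effective test for $\1$) that each transition weight traversed by $\rho$ equals~$\1$.

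It remains to justify that $\A$ has the finite-image property if and only if all small loops of $\A'$ have weight $\1$. The ``if'' direction follows by Lemma~\ref{lm:weight-of-reduced-tree} (applicable because $B$ monotonic implies $B$ one-product free): every run-weight of $\A'$ is already attained on a tree of height less than $|Q'|$, so $\rH(\A')$ is finite, and unambiguity gives $\im(\sem{\A}) \subseteq (\rH(\A') \otimes \im(F')) \cup \{\0\}$, which is finite. For the ``only if'' direction, suppose some loop $\rho \in \R_{\A'}(q,c,q)$ with $\hgt(c) < |Q'|$ satisfies $\wt(c,\rho) \ne \1$. Trimness yields $\xi \in \T_\Sigma$, $\theta \in \R_{\A'}(q,\xi)$, a context $c' \in \C_\Sigma$, a state $q' \in Q'$ with $F'_{q'} \ne \0$, and $\rho' \in \R_{\A'}(q',c',q)$. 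Set $\xi_n = c'[c^n[\xi]]$ and $\kappa_n = \rho'[\rho^n[\theta]]$. By Lemma~\ref{lm:decomposition-of-a-run}, the $n$-dependent factors of $\wt(\xi_n,\kappa_n)$ are exactly $(l_{c,\rho})^n$ and $(r_{c,\rho})^n$. From $l_{c,\rho} \otimes r_{c,\rho} = \wt(c,\rho) \ne \1$ it follows that at least one of $l_{c,\rho}, r_{c,\rho}$ differs from $\1$; since all other factors involved are nonzero, a two-step application of monotonicity condition~(ii) yields $\wt(\xi_n,\kappa_n) \otimes F'_{q'} \prec \wt(\xi_{n+1},\kappa_{n+1}) \otimes F'_{q'}$ for every $n \in \N$. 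By unambiguity, $\sem{\A}(\xi_n) = \wt(\xi_n,\kappa_n) \otimes F'_{q'}$, so $\sem{\A}(\xi_0), \sem{\A}(\xi_1), \ldots$ is strictly increasing and hence injective, witnessing that $\A$ lacks the finite-image property.

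The main obstacle is the absence of past-finiteness, on which the proof of Theorem~\ref{cor:characterization-of-crisp-determinizability} relied to ensure that pumped run-weights induced pairwise distinct values of $\sem{\A}$; here this is circumvented by unambiguity, which collapses the $\bigoplus$ defining $\sem{\A}$ to a single product so that the strict chain of weights is directly inherited by the semantic image.
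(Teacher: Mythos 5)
Your proposal is correct and follows essentially the same route as the paper: trim $\A$ (Theorem~\ref{thm:A'-equivalent-to-A}), decide whether all small loops have weight $\1$ as in Theorem~\ref{cor:image-A-finite-decidable}, and in the negative case run the pumping argument of Theorem~\ref{cor:characterization-of-crisp-determinizability} up to the strictly increasing chain of run weights, which unambiguity transfers directly to pairwise distinct values of $\sem{\A}$. The only (harmless) divergence is in the positive branch, where you argue directly that unambiguity collapses the defining sum so that $\im(\sem{\A}) \subseteq (\rH(\A')\otimes\im(F'))\cup\{\0\}$ is finite, whereas the paper invokes Corollary~\ref{cor:small-loops->cA-finite}(2), whose hypotheses hold because unambiguity bounds each $f_{\A,b}$ with $b\neq\0$ by $1$ and $\0$ has finite additive order.
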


\begin{proof}
    By Theorem \ref{thm:A'-equivalent-to-A}, we may assume that $\A$ is trim. Then, by the proof of Theorem~\ref{cor:image-A-finite-decidable}, we can decide whether small loops of $\A$ have weight $\1$. 
    
    If this is not the case, then we follow the proof (1 $\Rightarrow$ 2) of Theorem~\ref{cor:characterization-of-crisp-determinizability} up to \eqref{eq:prec-sequence}, and as there, we can produce an infinite set of weights of runs on trees. Due to unambiguity, these are also the weights of the corresponding trees. Thus, $\im(\sem{\A})$ is infinite, i.e., $\A$ does not have the finite-image property.
    
    Otherwise, by Corollary~\ref{cor:small-loops->cA-finite}, $\A$ has the finite-image property.
\end{proof}

As a side-result, we show that the well-known decidability of finiteness of context-free languages \cite[Thm. 8.2.2]{har78} can be formally derived from Theorem~\ref{cor:image-A-finite-decidable}.

\begin{corollary} \label{cor:finiteness-of-cf-languages}\rm
  The finiteness of context-free languages is decidable. 
\end{corollary}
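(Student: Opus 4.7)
The plan is to reduce the finiteness problem for context-free languages to Theorem~\ref{cor:image-A-finite-decidable}. From a given context-free grammar $G = (N, \Delta, P, S)$ I will construct effectively a wta $\A$ over a past-finite monotonic semiring whose image under the semantics tracks the words of $L(G)$, so that $\im(\sem{\A})$ is finite precisely when $L(G)$ is.

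First, I would without loss of generality assume $G$ has no $\varepsilon$-productions (this transformation is standard and effective; membership of $\varepsilon$ in $L(G)$ is separately decidable and irrelevant to finiteness). I would then encode parse trees of $G$ as trees over a ranked alphabet $\Sigma$ with $\Sigma^{(0)} = \Delta$ and, for each production $p = (A \to X_1 \cdots X_k) \in P$ with $k \geq 1$, a distinct symbol $\sigma_p \in \Sigma^{(k)}$; a parse tree is then encoded in the obvious way, with terminals at the leaves.

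As weight algebra I would take the semiring $\mathrm{FLang}_\Delta$ of finite formal languages from Example~2.1(vi), which is past-finite monotonic, idempotent (hence additively locally finite), computable, and admits effective tests for $\0 = \emptyset$ and $\1 = \{\varepsilon\}$. I would build the $(\Sigma,\mathrm{FLang}_\Delta)$-wta $\A = (Q,\delta,F)$ with $Q = N \cup \Delta$ by setting $\delta_0(\varepsilon,a,a) = \{a\}$ and $\delta_0(\varepsilon,a,q) = \emptyset$ for $q \neq a$; $\delta_k(X_1 \cdots X_k, \sigma_p, A) = \{\varepsilon\}$ for each production $p = (A \to X_1 \cdots X_k)$ and $\emptyset$ for every other tuple; and $F_S = \{\varepsilon\}$, $F_q = \emptyset$ for $q \neq S$. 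A routine induction on $\xi \in \T_\Sigma$ using~\eqref{eq:weight-of-a-run} would show that when $\xi$ encodes a parse tree of $G$ rooted at $S$ there is exactly one valid run, whose weight is $\{\mathrm{yield}(\xi)\}$, and otherwise $\sem{\A}(\xi) = \emptyset$. Hence $\im(\sem{\A}) = \{\emptyset\} \cup \{\{w\} \mid w \in L(G)\}$, which is finite iff $L(G)$ is.

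Finally, since $\A$ is effectively constructible from $G$ and $\mathrm{FLang}_\Delta$ meets the hypotheses of Theorem~\ref{cor:image-A-finite-decidable}, the finiteness of $\im(\sem{\A})$, and therefore of $L(G)$, is decidable. The only delicate point is the inductive verification that the weight of the unique valid run on a parse-tree encoding equals exactly the singleton of its yield; this boils down to the associativity of concatenation in $\mathrm{FLang}_\Delta$ together with the fact that the internal transition weights $\{\varepsilon\}$ act as multiplicative units, so it should pose no real obstacle.
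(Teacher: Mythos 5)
Your proof is correct, but it takes a genuinely different route from the paper's. The paper builds its wta $\rA(G)$ over the ranked alphabet of \emph{productions} only, with $\rk(p)$ equal to the number of nonterminal occurrences on the right-hand side of $p$, so its trees are derivation trees with the terminal blocks absorbed into the node labels; it weights each transition in the arctic semiring $\mathrm{ASR}_\N$ by the total length of those terminal blocks, so that $\im(\sem{\rA(G)})$ becomes the set of \emph{lengths} of words of $\mathrm{L}(G)$, and it then recovers finiteness of $\mathrm{L}(G)$ via the auxiliary observation that a subset of $\Delta^*$ is finite iff its set of lengths is finite. You instead keep the terminals as nullary symbols, work with full parse trees, and use the heavier semiring $\mathrm{FLang}_\Delta$ so that the image is (up to $\emptyset$) literally $\{\{w\}\mid w\in \mathrm{L}(G)\}$ and no length argument is needed. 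Both reductions land in Theorem~\ref{cor:image-A-finite-decidable}, and both apply legitimately: the two weight structures are idempotent, hence additively locally finite, and both wta are unambiguous since the run on any tree is forced by the node labels. What the paper's encoding buys is a minimal weight structure and no preprocessing of $\varepsilon$-productions (a rule $A\to w_0$ with $w_0\in\Delta^*$, including $w_0=\varepsilon$, simply becomes a nullary symbol of weight $|w_0|$); what yours buys is a more transparent correctness argument, since the image enumerates the language directly. The only additional care your version needs is the degenerate case $\Delta=\emptyset$, because a ranked alphabet must satisfy $\rk^{-1}(0)\neq\emptyset$; this is harmless since then $\mathrm{L}(G)\subseteq\{\varepsilon\}$ is finite anyway.
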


\begin{proof} Let $G = (N,\Delta,P,S)$ be a context-free grammar. By \cite[Thm.~3.2.3]{har78} we may assume that $G$ is reduced. 
  Then we have
  \begin{equation}
        \text{for each set $U \subseteq \Delta^*$:} \ U \text{ is finite if and only if  } \{|w| \mid w \in U\} \text{ is finite.} \label{eq:fin-fin}
          \end{equation}

 We construct effectively  the ranked alphabet $(P,\rk)$ such that if $p \in P$ has the form
  \[
    A \to w_0 A_1 w_1 \cdots A_k w_k
  \]
  for some $k \in \mathbb{N}$, $w_0,w_1,\ldots,w_k \in \Delta^*$, and $A,A_1,\ldots,A_k \in N$, then $\rk(p)=k$.
  
  Finally, we consider the arctic semiring $\mathrm{ASR}_\N= (\N_{-\infty},\max,+,-\infty,0)$ and construct effectively  the $(P,\mathrm{ASR}_\N)$-wta $\rA(G)=(N,\delta,F)$ by $F_S=0$ and $F_A=-\infty$  for each $A \in N\setminus\{S\}$ and if $p \in P$ has the form as above, then
  \[
\delta_k(A_1\cdots A_k,p,A)= |w_0w_1\cdots w_k|\enspace.
\]
Using \eqref{eq:fin-fin}, it is easy to see that 
\begin{equation}
  \im(\sem{\rA(G)}) \ \text{ is finite if and only if  } \mathrm{L}(G) \text{ is finite.} \label{eq:Afinite-Gfinite}
  \end{equation}

  Since $\mathrm{ASR}_\N$  is  past-finite monotonic and idempotent, and $\rA(G)$ is unambiguous and trim, we obtain from \eqref{eq:Afinite-Gfinite} and Theorem~\ref{cor:image-A-finite-decidable}  that it is decidable whether $\mathrm{L}(G)$ is finite or not.
  \end{proof}

To prove that we can decide whether an arbitrary $(\Sigma,\bbN)$-wta has the finite-image property (cf. Theorem~\ref{thm:bbN-wta-bounded-dec}), we need the following preparation.

To each $(\Sigma,B)$-wta $\cA=(Q,\delta,F)$ we associate the 
 $(\Sigma,\bbB)$-wta   $\fta(\cA)=(Q,\delta_\bbB,F_\bbB)$ defined as follows:
\begin{compactitem}
\item for every $k\in \bbN$, $w \in Q^k$, $\sigma \in \Sigma^{(k)}$, and $q\in Q$, let
$(\delta_\bbB)_k(w,\sigma,q)=1$ if and only if $\delta_k(w,\sigma,q)\ne \0$, and
\item for each $q\in Q$, let
$(F_\bbB)_q= 1$ if and only if $F_q\ne \0$.
\end{compactitem}

\begin{observation} \label{obs:finitely-ambiguous} \rm
For each $\xi\in  \rT_\Sigma$, we have $|\rR_\cA^{F \neq \bb0}(\xi)|= |\rR_{\fta(\cA)}^{F_\bbB \neq \bb0}(\xi)|$ .
\end{observation}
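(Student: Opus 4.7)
The plan is to observe that the two sets are actually equal as sets of mappings $\pos(\xi) \to Q$, from which equality of cardinalities follows trivially. Since both $\cA$ and $\fta(\cA)$ share the same state set $Q$, a run on $\xi$ in either automaton is a mapping $\rho: \pos(\xi) \to Q$, and the only content of the run condition concerns which transitions are admissible.

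First I would unfold the definition of $\rR_\cA(\xi)$: a mapping $\rho: \pos(\xi) \to Q$ lies in $\rR_\cA(\xi)$ iff for every $v \in \pos(\xi)$ with $\xi(v) = \sigma \in \Sigma^{(k)}$ we have $\delta_k(\rho(v1)\cdots\rho(vk), \sigma, \rho(v)) \ne \0$. By the defining clause of $\fta(\cA)$, this inequality holds if and only if $(\delta_\bbB)_k(\rho(v1)\cdots\rho(vk), \sigma, \rho(v)) = 1$, which is precisely the admissibility condition for $\rho$ to be a run of $\fta(\cA)$ on $\xi$ (since in $\bbB$, being nonzero means being equal to $1$). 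Hence $\rR_\cA(\xi) = \rR_{\fta(\cA)}(\xi)$ as sets.

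Next I would compare the root-weight filters. By construction of $\fta(\cA)$ we have $(F_\bbB)_q \ne 0$ iff $(F_\bbB)_q = 1$ iff $F_q \ne \0$. Consequently, for any $\rho$ in the common run set, the condition $F_{\rho(\varepsilon)} \ne \0$ is equivalent to $(F_\bbB)_{\rho(\varepsilon)} \ne 0$. Combining this with the previous step gives the set equality
\[
\rR_\cA^{F \ne \bb0}(\xi) = \rR_{\fta(\cA)}^{F_\bbB \ne \bb0}(\xi),
\]
and hence in particular their cardinalities coincide, which is the claim.

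There is no real obstacle here: the observation is a direct unpacking of the definition of $\fta(\cA)$, and the result even strengthens the stated equinumerosity to an actual set equality. The only minor care needed is to use that $\bbB = \{0,1\}$ so that $\ne 0$ and $= 1$ coincide, which is immediate.
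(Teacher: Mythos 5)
Your proof is correct and is exactly the argument the paper leaves implicit (the statement is given as an Observation without proof, as it follows directly from the definition of $\fta(\cA)$). Your strengthening to an actual set equality $\rR_\cA^{F \neq \bb0}(\xi) = \rR_{\fta(\cA)}^{F_\bbB \neq \bb0}(\xi)$ is valid, since both automata share the state set $Q$ and the admissibility conditions coincide pointwise.
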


\begin{lemma}\rm \label{lm:wta-bounded-iff-fta-ambiguous} 
Let $\cA$ be a trim $(\Sigma,\bbN)$-wta such that all small loops of $\cA$ have weight $1$. Then $\cA$ has the finite-image property if and only if 
 $\fta(\cA)$ is finitely ambiguous.
\end{lemma}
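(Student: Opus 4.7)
My plan is to prove both directions by exploiting a very specific feature of $\bbN$: every nonzero summand in the definition of $\sem{\cA}(\xi)$ is a positive integer, so it contributes at least $1$. This tightly connects the size of $\im(\sem{\cA})$ to the ambiguity of $\cA$.

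For the forward direction (finite image $\Rightarrow$ $\fta(\cA)$ finitely ambiguous), I would note that for every $\xi \in \rT_\Sigma$ and every $\rho \in \rR_\cA^{F \neq \bb0}(\xi)$, both $\wt(\xi,\rho)$ and $F_{\rho(\varepsilon)}$ are positive natural numbers: the former because all transition weights encountered along a run are nonzero (by definition of run), and the latter by the definition of $\rR_\cA^{F \neq \bb0}$. Hence each summand in
\[
\sem{\cA}(\xi) \;=\; \sum_{\rho \in \rR_\cA^{F \neq \bb0}(\xi)} \wt(\xi,\rho) \cdot F_{\rho(\varepsilon)}
\]
is at least $1$, giving $\sem{\cA}(\xi) \geq |\rR_\cA^{F \neq \bb0}(\xi)|$. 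If $\im(\sem{\cA})$ is finite, it is bounded by some $M \in \bbN$, so $|\rR_\cA^{F \neq \bb0}(\xi)| \leq M$ for every $\xi$. By Observation~\ref{obs:finitely-ambiguous} this yields $|\rR_{\fta(\cA)}^{F_\bbB \neq \bb0}(\xi)| \leq M$, i.e., $\fta(\cA)$ is finitely ambiguous.

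For the reverse direction, suppose $\fta(\cA)$ is finitely ambiguous with bound $K$. By Observation~\ref{obs:finitely-ambiguous}, $|\rR_\cA^{F \neq \bb0}(\xi)| \leq K$ for every $\xi$. Since $\bbN$ is commutative (and one-product free) and all small loops of $\cA$ have weight~$1$, Corollary~\ref{cor:small-loops->cA-finite}(1) tells us that $\rH(\cA)$ is finite, hence so is $\rC(\cA) \subseteq \rH(\cA) \otimes \im(F)$. Now I would verify the hypotheses of Theorem~\ref{thm:sufficient-conds-ensure-image-finite} by a case split on $b \in \rC(\cA)$: if $b = 0$, then $\langle 0 \rangle_\oplus = \{0\}$ is finite, so $0$ has finite additive order; if $b \neq 0$, then $f_{\cA,b}(\xi)$ counts runs satisfying $\wt(\xi,\rho) \cdot F_{\rho(\varepsilon)} = b \neq 0$, and any such run lies in $\rR_\cA^{F \neq \bb0}(\xi)$, so $f_{\cA,b}(\xi) \leq K$, i.e., $f_{\cA,b}$ is bounded. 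Theorem~\ref{thm:sufficient-conds-ensure-image-finite} then delivers the finite-image property of $\cA$.

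There is no real obstacle here: the proof is essentially a packaging of the preparatory machinery. The only conceptual point worth emphasizing is the use of the positivity of elements of $\bbN \setminus \{0\}$, which simultaneously (i) gives the lower bound $\sem{\cA}(\xi) \geq |\rR_\cA^{F \neq \bb0}(\xi)|$ in the forward direction and (ii) ensures in the reverse direction that for $b \neq 0$ the counting function $f_{\cA,b}$ is dominated by the ambiguity of $\fta(\cA)$.
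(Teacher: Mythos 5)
Your proof is correct. The direction ``finite image $\Rightarrow$ $\fta(\cA)$ finitely ambiguous'' is exactly the paper's argument: positivity of the nonzero elements of $\bbN$ gives $\sem{\cA}(\xi)\geq |\rR_\cA^{F\neq\bb0}(\xi)|$, and Observation~\ref{obs:finitely-ambiguous} transfers the resulting bound to $\fta(\cA)$. For the converse you take a slightly different route: the paper argues directly, letting $C=\max\rC(\cA)$ (finite by Corollary~\ref{cor:small-loops->cA-finite}(1)) and bounding $\sem{\cA}(\xi)\leq K\cdot C$ where $K$ is the ambiguity bound, whereas you verify the hypotheses of Theorem~\ref{thm:sufficient-conds-ensure-image-finite} (for $b=0$ finite additive order, for $b\neq 0$ boundedness of $f_{\cA,b}$ by $K$, using zero-divisor freeness of $\bbN$ to place the relevant runs in $\rR_\cA^{F\neq\bb0}(\xi)$) and invoke that theorem. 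Both are sound and rest on the same underlying count; the paper's version is more self-contained and yields the explicit bound $K\cdot C$ on $\im(\sem{\cA})$, while yours buys brevity by reusing the general sufficient-condition machinery already established in Section~6. No gap in either direction.
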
    

\begin{proof} 
By Corollary~\ref{cor:small-loops->cA-finite}(1), the set $\rH(\cA)$ is finite, and thus $\rC(\cA)$ is also finite. 

Let $C = \max\{b \mid b \in \rC(\cA)\}$, i.e., $C$ is the maximum of all possible complete run weights of $\cA$.

First assume that $\fta(\cA)$ is finitely ambiguous under the uniform bound $K \in \bbN$ and let $\xi\in \rT_\Sigma$.  Then, by Observation~\ref{obs:finitely-ambiguous}, $|\rR_\cA^{F \neq \bb0}(\xi)| \le K$, and thus, we have
\[\sem{\cA}(\xi)=\sum_{\rho\in\rR_\cA^{F \neq \bb0}(\xi)}\wt(\xi)\cdot F_\rho(\varepsilon) \le \sum_{\rho\in\rR_\cA^{F \neq \bb0}(\xi)} C\le K\cdot C.\] Hence, $\cA$ has the finite-image property.

Now we prove the other direction. For this, assume that $\cA$ has the finite-image property. Let $K = \max\{\sem{\cA}(\xi) \mid \xi \in \rT_\Sigma\}$. Moreover, let $\xi \in \rT_\Sigma$. Then \[K \geq \sem{\cA}(\xi) = \sum_{\rho\in\rR_\cA^{F \neq \bb0}(\xi)}\wt(\xi)\cdot F_\rho(\varepsilon) \geq \sum_{\rho\in\rR_\cA^{F \neq \bb0}(\xi)} 1 = |\rR_{\cA}^{F \neq 0}(\xi)|\enspace,\]
and Observation~\ref{obs:finitely-ambiguous} shows that $\fta(\cA)$ is finitely ambiguous.
\end{proof}

\begin{lemma} \label{lm:decide-useful-state} \rm
    Let $B$ have an effective test for $\0$. It is decidable, for arbitrary $(\Sigma,B)$-wta $\A=(Q,\delta,F)$ given effectively, whether $\A$ has a useful state. 
\end{lemma}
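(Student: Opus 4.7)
The plan is to reduce the problem to the well-known decidability of emptiness for context-free grammars, exploiting the translation $\A \mapsto \G(\A)$ introduced in Section \ref{sect:trim-wta}. Since $B$ has an effective test for $\0$, for every $k \in \N$, states $q_1,\ldots,q_k,q \in Q$, and $\sigma \in \Sigma^{(k)}$ we can decide whether $\delta_k(q_1\cdots q_k,\sigma,q) \neq \0$, and we can decide whether $F_q \neq \0$ for each $q \in Q$. Since $Q$ is finite and $\Sigma$ is a (finite) ranked alphabet, we can therefore construct the associated context-free grammar $\G(\A) = (N,\Delta,P,S)$ effectively from $\A$.

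Recall from Section \ref{sect:trim-wta} that for every $\xi \in \T_\Sigma$ there is a bijection between the runs $\rho \in \R_\A(\xi)$ with $F_{\rho(\varepsilon)} \neq \0$ and the leftmost derivations of $\G(\A)$ for $\xi$, and consequently a state $q \in Q$ is useful in $\A$ iff it is a useful nonterminal in $\G(\A)$. Thus $\A$ has a useful state iff $\G(\A)$ has a useful nonterminal in $Q$. It now suffices to observe that this last condition is equivalent to $\mathrm{L}(\G(\A)) \neq \emptyset$: if $\mathrm{L}(\G(\A)) \neq \emptyset$, then any derivation $S \Rightarrow^* w$ with $w \in \Delta^*$ must begin with a step $S \Rightarrow q$ for some $q \in Q$, and that $q$ is by definition useful in $\G(\A)$; conversely, a useful nonterminal in $Q$ immediately yields a derivation of some $w \in \Delta^*$, so $\mathrm{L}(\G(\A)) \neq \emptyset$.

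It is classical that emptiness of a context-free language is decidable given the grammar (see, e.g., \cite[Thm.~3.2.1]{har78}). Applying this decision procedure to the effectively constructed $\G(\A)$ yields the desired algorithm, completing the proof. No substantive obstacle arises; the only delicate point is the need for the effective $\0$-test in order to construct $\G(\A)$ in the first place, which is exactly the hypothesis we have made on $B$.
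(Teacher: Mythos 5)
Your proof is correct, but it takes a genuinely different route from the paper's. You reduce the question to emptiness of the context-free language $\mathrm{L}(\G(\A))$: the effective $\0$-test lets you build $\G(\A)$ effectively, the run/derivation correspondence from Section~\ref{sect:trim-wta} shows that $\A$ has a useful state iff some $q\in Q$ is useful in $\G(\A)$, and your observation that this holds iff $\mathrm{L}(\G(\A))\neq\emptyset$ is sound (every terminal derivation must start with a rule $S\Rightarrow q$, which witnesses usefulness of $q$; conversely a useful nonterminal yields a terminal word). The paper instead argues directly: $\A$ has a useful state iff there is a $q$ with $F_q\neq\0$ admitting a $q$-run on some tree, and by a standard pumping argument such a run exists iff one exists on a tree of height less than $|Q|$, which can be checked by exhaustive enumeration. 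Your version buys a shorter argument by delegating the combinatorial work to the classical CFG emptiness algorithm, and it dovetails nicely with Theorem~\ref{thm:A'-equivalent-to-A}, whose proof already needs $\mathrm{L}(\G(\A))\neq\emptyset$; the paper's version is self-contained and makes the height bound $|Q|$ explicit, which is reused elsewhere (e.g.\ in Algorithm~\ref{alg:decides-if-bounded-by-k}). Either argument establishes the lemma.
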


\begin{proof}
In fact, $\cA$ has a useful state
if and only if there exists a state  $q \in Q$  with $F_q \ne \0$  for which there exist  a tree  $\xi \in \rT_\Sigma$ and a run  $\rho \in \rR_\cA(q,\xi)$. By standard pumping arguments,
the latter is the case if and only if there exist  a tree  $\xi'$ with  $\hgt(\xi') < |Q|$ and run  $\rho' \in \rR_\cA(q,\xi')$. This property is easily decidable.
\end{proof}

    The following result will be crucial for Theorem~\ref{thm:bbN-wta-bounded-dec}.

    \begin{theorem}\label{thm:seidl-theorem} \cite[Thm~2.5(2)]{sei89} Let  $\cA$ be a $(\Sigma,\bbB)$-wta. We can decide whether $\cA$ is finitely ambiguous.
\end{theorem}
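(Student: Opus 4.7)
The plan is to decide finite ambiguity of a Boolean $(\Sigma,\bbB)$-wta $\cA = (Q,\delta,F)$ via a pumping-based structural characterization combined with reachability analysis in a product construction. The first step is to establish necessary and sufficient conditions on $\cA$ for infinite ambiguity. In the string case, Weber and Seidl showed that an NFA has infinite ambiguity iff there exists either an \emph{exponential} witness (a useful state with two distinct self-loops on a common word) or a \emph{polynomial} witness (distinct useful states $p,q$ with runs $p \to p$, $p \to q$, $q \to q$ on a common word). I would lift both conditions to the tree setting, replacing words by $\Sigma$-contexts and self-loops by runs in $\rR_\cA(q,c,q)$ on a context $c \in \C_\Sigma$ with $\hgt(c) > 0$.

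For the forward directions (witness implies infinite ambiguity), I would use direct pumping. Given a useful state $q$ admitting two distinct loops $\rho_1,\rho_2 \in \rR_\cA(q,c,q)$, fix $q' \in Q$ with $F_{q'}=1$, a context $c' \in \C_\Sigma$, a tree $\xi \in \rT_\Sigma$, a run $\rho' \in \rR_\cA(q',c',q)$, and a run $\theta \in \rR_\cA(q,\xi)$. Then for each $n \in \bbN$ the tree $c'[c^n[\xi]]$ carries at least $2^n$ distinct accepting runs, obtained as $\rho'[\rho_{i_1}[\rho_{i_2}[\cdots \rho_{i_n}[\theta]\cdots ]]]$ for $(i_1,\dots,i_n) \in \{1,2\}^n$; the polynomial witness is handled analogously, yielding $\Omega(n)$ runs. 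For the converse, I would apply Theorem~\ref{thm:pumping-lemma-for-runs} inside the product automaton $\cA \times \cA$ with state set $Q \times Q$, whose runs on a tree $\xi$ correspond to ordered pairs of runs of $\cA$ on $\xi$: if the ambiguity degree of $\cA$ is unbounded, then sufficiently tall trees must produce two distinct accepting runs whose joint trajectory in $\cA \times \cA$ contains a pumpable off-diagonal configuration, which, projected back to $\cA$, realizes one of the two witness types.

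The decision procedure then searches for a witness of bounded size. Usefulness of each $q \in Q$ is decidable by Lemma~\ref{lm:decide-useful-state} specialized to $\bbB$, which trivially has an effective test for $\0$. Both witness types translate into reachability properties on the finite state set $Q \times Q$: an exponential witness corresponds to a loop at a diagonal state $(q,q)$ whose run in $\cA \times \cA$ visits at least one off-diagonal pair $(p_1,p_2)$ with $p_1 \neq p_2$, while a polynomial witness corresponds to compatible loops at $(p,p)$ and $(q,q)$ linked by a run through $(p,q)$ on a common context. These are standard finite-state reachability problems and are therefore decidable. The main obstacle is the converse direction of the characterization: proving that unbounded ambiguity necessarily forces one of the two canonical witnesses, rather than some exotic mixed configuration, requires a careful case analysis inside $\cA \times \cA$ in the spirit of the original Weber--Seidl argument, with the linear pumping of the string case replaced by the context-based pumping of Theorem~\ref{thm:pumping-lemma-for-runs}.
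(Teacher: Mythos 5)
The paper does not prove this statement at all: Theorem~\ref{thm:seidl-theorem} is imported verbatim from Seidl's 1989 paper on the finite degree of ambiguity of finite tree automata, so the ``proof'' in the paper is the citation \cite[Thm~2.5(2)]{sei89}. Your proposal is therefore an attempt to reconstruct Seidl's theorem itself, and it must carry the full weight of that result. The forward direction of your characterization (either witness forces unbounded ambiguity, by combining loops as in Lemma~\ref{lm:decomposition-of-a-run}) is fine, and reducing witness-existence to finite-state reachability is essentially right --- although for the polynomial witness the three runs $p\to p$, $q\to p$, $q\to q$ must live on a \emph{common} context, which is a reachability question in the triple product $\cA\times\cA\times\cA$ (from hole state $(p,p,q)$ to root state $(p,q,q)$ with $p\neq q$), not in $\cA\times\cA$ as written.

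The genuine gap is the converse: that unbounded ambiguity forces one of the two witnesses. This is the entire mathematical content of Seidl's theorem, and your sketch replaces it with the assertion that it ``requires a careful case analysis.'' Moreover, the mechanism you indicate does not obviously deliver it: applying Theorem~\ref{thm:pumping-lemma-for-runs} to one pair of distinct accepting runs, viewed as a single run of $\cA\times\cA$, yields a pumpable loop at some pair state, but a loop at an off-diagonal state $(p_1,p_2)$ only produces two loops $p_1\to p_1$ and $p_2\to p_2$ on a common context --- it gives neither the two \emph{distinct} loops at a single state needed for the exponential witness nor the connecting run with hole state $p$ and root state $q$ needed for the polynomial one. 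The standard route is the contrapositive: show that if neither witness exists, then $|\rR_\cA^{F\neq\0}(\xi)|$ is bounded by an explicit function of $|Q|$ uniformly in $\xi$; that is a genuine combinatorial induction over trees (complicated in the tree case by the fact that two runs may differ only in parallel subtrees), not a corollary of a single pumping step. Until that direction is supplied the proposal is a plan rather than a proof, and since the paper deliberately outsources exactly this step to \cite[Thm~2.5(2)]{sei89}, citing the result remains the economical choice.
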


    \begin{theorem} \label{thm:bbN-wta-bounded-dec}
        It is decidable, for arbitrary $(\Sigma,\bbN)$-wta $\cA$, whether $\cA$ has the finite-image property.
    \end{theorem}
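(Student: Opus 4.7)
The plan is to reduce the question to three previously established decidability facts: decidability of having a useful state, decidability of the unit weight of all small loops, and Seidl's decidability of finite ambiguity for $(\Sigma,\bbB)$-wta.

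First, using Lemma~\ref{lm:decide-useful-state}, we decide whether $\cA$ has a useful state; note $\bbN$ has an effective test for $0$. If it does not, then $\sem{\cA}$ is identically $0$, hence trivially has finite image. Otherwise, we apply Theorem~\ref{thm:A'-equivalent-to-A} to compute a trim $(\Sigma,\bbN)$-wta $\cA' = (Q',\delta',F')$ with $\sem{\cA'} = \sem{\cA}$; again this construction needs only the effective test for~$0$. Henceforth it suffices to decide whether the trim wta $\cA'$ has the finite-image property.

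Next, we decide whether all small loops of $\cA'$ have weight $1$, proceeding exactly as in the proof of Theorem~\ref{cor:image-A-finite-decidable}: there are only finitely many contexts $c \in \C_\Sigma$ with $\hgt(c) < |Q'|$, and for each such $c$ and each loop $\rho \in \R_{\cA'}(q,c,q)$ the weight $\wt(c,\rho)$ is a finite product of transition weights in $\bbN$. Since $\bbN$ is one-product free, this product equals $1$ iff every factor equals~$1$, which is decidable because $\bbN$ has an effective test for $1$.

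If some small loop of $\cA'$ has weight different from $1$, then, since $\bbN$ is past-finite monotonic, the implication (1)~$\Rightarrow$~(2) of Theorem~\ref{cor:characterization-of-crisp-determinizability} implies that $\cA'$ does not have the finite-image property, so we output \emph{no}. Otherwise all small loops have weight $1$ and Lemma~\ref{lm:wta-bounded-iff-fta-ambiguous} applies: $\cA'$ has the finite-image property iff the $(\Sigma,\bbB)$-wta $\fta(\cA')$ is finitely ambiguous. By Theorem~\ref{thm:seidl-theorem}, finite ambiguity of $\fta(\cA')$ is decidable, and we output the corresponding answer.

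The main subtlety is that $\bbN$ is not additively locally finite and each $b \in \bbN \setminus \{0\}$ has infinite additive order, so Theorem~\ref{cor:image-A-finite-decidable} does not apply directly; instead one must handle the unboundedness of the complete-run-number mappings $f_{\cA',b}$. This is precisely what Lemma~\ref{lm:wta-bounded-iff-fta-ambiguous} does, by translating boundedness of $f_{\cA',b}$ (once small loops are known to have weight $1$) into finite ambiguity of the associated Boolean tree automaton, which Seidl's theorem makes effectively checkable.
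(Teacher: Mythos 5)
Your proof is correct and follows essentially the same route as the paper: first decide usefulness and trim, then decide whether all small loops have weight $1$ (rejecting via the contrapositive of Theorem~\ref{cor:characterization-of-crisp-determinizability} if not), and finally reduce to finite ambiguity of $\fta(\cA')$ via Lemma~\ref{lm:wta-bounded-iff-fta-ambiguous} and Seidl's theorem. Your closing remark correctly identifies why Theorem~\ref{cor:image-A-finite-decidable} does not apply directly to $\bbN$ and why the detour through finite ambiguity is needed.
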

    
    \begin{proof} 
    First, by Lemma~\ref{lm:decide-useful-state}, we decide whether $\cA$ has a useful state. If $\cA$ does not have a useful state, then $\sem{\cA}=\widetilde{\0}$ and thus $\sem{\cA}$ has the finite-image property. Otherwise, by Theorem~\ref{thm:A'-equivalent-to-A} we may assume that $\cA$ is trim.
    
By the proof of Theorem~\ref{cor:image-A-finite-decidable} we can decide whether 
        small loops of $\cA$ have weight $1$. If the answer is no, then $\cA$ has a loop $\rho$ on a context $c$ with $1 < \wt(c,\rho)$. 
        Then, by Theorem~\ref{cor:characterization-of-crisp-determinizability},  $\im(\sem{\cA})$ is infinite, i.e., $\cA$ does not have the finite-image property. (Note that $\bbN$ is past-finite monotonic.)
        
        Otherwise, all small loops of $\cA$ have weight $1$. Then by Lemma~\ref{lm:wta-bounded-iff-fta-ambiguous}, $\cA$ has the finite-image property if and only if $\fta(\cA)$ is finitely ambiguous. Finally, by Theorem~\ref{thm:seidl-theorem}, we can decide whether $\fta(\cA)$ is finitely ambiguous. This completes the proof.
    \end{proof}
    
    Next we can characterize those past-finite monotonic computable strong bimonoids $B$
for which for each $(\Sigma,B)$-wta  $\A$  it is decidable whether $\A$ has the finite-image property.

    \begin{theorem} \label{thm:characterization}
        Let $B$ be past-finite monotonic and computable. Then the following statements are equivalent:
        \begin{compactenum}
            \item It is decidable, for each ranked alphabet $\Sigma$, whether
            an arbitrary $(\Sigma,B)$-wta given effectively has the finite-image property.
            \item It is decidable, for each $b \in B$, whether $b$ has finite additive order.
        \end{compactenum}
    \end{theorem}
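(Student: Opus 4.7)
The plan is to reduce each direction to results proved earlier in the paper; computability of~$B$ provides the effective tests for $\0$ and $\1$ that those results need.

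For \textbf{(2)}~$\Rightarrow$~\textbf{(1)}, I would first apply Lemma~\ref{lm:decide-useful-state} to decide whether $\cA$ has a useful state; if not, $\sem{\cA}=\widetilde{\0}$ trivially has finite image. Otherwise, Theorem~\ref{thm:A'-equivalent-to-A} lets me replace $\cA$ by an equivalent trim wta, and then Theorem~\ref{cor:characterization-of-crisp-determinizability} reduces the question to checking (i)~that small loops of $\cA$ have weight $\1$, and (ii)~that for every $b\in\rC(\cA)$ either $f_{\cA,b}$ is bounded or $b$ has finite additive order. Condition~(i) is decidable by the short argument at the end of the proof of Theorem~\ref{cor:image-A-finite-decidable}. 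If (i)~fails, we conclude that $\cA$ does not have the finite-image property; if it holds, then Corollary~\ref{cor:small-loops->cA-finite}(1) yields that $\rH(\cA)$ is finite, so Lemma~\ref{lm:compute-c(A)} lets us compute $\rC(\cA)$ explicitly.

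It then remains, for each $b\in\rC(\cA)$, to decide the disjunction in (ii). The second disjunct is decidable directly by hypothesis~(2). For the first, I would build the $(\Sigma,\N)$-wta $\cA'_b$ from the proof of Theorem~\ref{thm:sufficient-conds-ensure-rec-step-map}, which satisfies $\sem{\cA'_b}(\xi)=f_{\cA,b}(\xi)$; then $f_{\cA,b}$ is bounded iff $\cA'_b$ has the finite-image property, and this is decidable by Theorem~\ref{thm:bbN-wta-bounded-dec}. Running the two checks in parallel for each $b\in\rC(\cA)$ settles~(ii), and so the finite-image property of $\cA$ is decidable.

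For \textbf{(1)}~$\Rightarrow$~\textbf{(2)}, given $b\in B$ I would exhibit an effectively described $(\Sigma,B)$-wta $\cA_b$ whose image is exactly $\langle b\rangle_\oplus$; then (1) applied to $\cA_b$ decides whether $b$ has finite additive order. Take $\Sigma=\{\gamma^{(1)},\alpha^{(0)}\}$ and let $\cA_b=(\{q_1,q_2\},\delta,F)$ with $\delta_0(\varepsilon,\alpha,q_1)=b$, $\delta_1(q_1,\gamma,q_1)=\delta_1(q_1,\gamma,q_2)=\delta_1(q_2,\gamma,q_2)=\1$, $F_{q_2}=\1$, and all remaining values equal to~$\0$. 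On $\gamma^n(\alpha)$ the leaf is forced to $q_1$, and since one cannot transition from $q_2$ (child) to $q_1$ (parent), a run ending at $q_2$ is determined uniquely by the position at which the state switches from $q_1$ to $q_2$; this gives $n$ accepting runs, each of weight $b$, so $\sem{\cA_b}(\gamma^n(\alpha))=nb$ for $n\ge 1$ and $\sem{\cA_b}(\alpha)=\0$. Hence $\im(\sem{\cA_b})=\langle b\rangle_\oplus$, which is finite iff $b$ has finite additive order.

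The main obstacle I anticipate is the boundedness test in the (2)~$\Rightarrow$~(1) direction: it rests on the nontrivial chain Corollary~\ref{cor:small-loops->cA-finite}(1) $\to$ Lemma~\ref{lm:compute-c(A)} $\to$ construction of~$\cA'_b$ $\to$ Theorem~\ref{thm:bbN-wta-bounded-dec}, and one must verify that these pieces compose correctly once small loops have weight~$\1$. Once that is in place, the remaining steps follow routinely from the structural characterization in Theorem~\ref{cor:characterization-of-crisp-determinizability} and the simplicity of the wta~$\cA_b$ used in the converse direction.
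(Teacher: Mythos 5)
Your proof is correct and follows essentially the same route as the paper's: the $(1)\Rightarrow(2)$ direction uses an equivalent two-state gadget with $\im(\sem{\cA_b})=\langle b\rangle_\oplus$ (you put the weight $b$ on the leaf rather than on the switching $\gamma$-transition, which changes nothing), and the $(2)\Rightarrow(1)$ direction uses exactly the paper's chain: trim, test small loops via Theorem~\ref{cor:image-A-finite-decidable}, compute $\rC(\cA)$ via Corollary~\ref{cor:small-loops->cA-finite}(1) and Lemma~\ref{lm:compute-c(A)}, and decide the disjunction in Theorem~\ref{cor:characterization-of-crisp-determinizability}(2) using the wta $\cA'_b$ together with Theorem~\ref{thm:bbN-wta-bounded-dec} and hypothesis~(2). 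Your explicit preliminary check for a useful state (via Lemma~\ref{lm:decide-useful-state}) before invoking Theorem~\ref{thm:A'-equivalent-to-A} is a small point of added care that the paper's proof of this theorem glosses over.
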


    \begin{proof}
        (1 $\Rightarrow$ 2).  Let $b \in B$. We construct the $(\{\gamma^{(1)}, \alpha^{(0)}\},B)$-wta $\cA=(\{p,q\},\delta,F)$ such that
\begin{compactitem}     
\item    $\delta_1(p,\gamma,q) = b$, $\delta_0(\varepsilon,\alpha,p) = \delta_1(p,\gamma,p) = \delta_1(q,\gamma,q) = F_q = \bb1$, and 
\item $\delta_0(\varepsilon,\alpha,q) = \delta_1(q,\gamma,p) = F_p = \bb0$. 
\end{compactitem}
Clearly, for each $n \in \bbN$, we have $\sem{\cA}(\gamma^n\alpha) = nb$ and thus we have $\im(\sem{\cA}) = \langle b \rangle_\oplus$.
            
         Therefore $\cA$ has the finite-image property if and only if $b$ has finite additive order. By Statement~1, the former is decidable. Hence, it is decidable whether $b$ has finite additive order.

        (2 $\Rightarrow$ 1). Let $\cA$ be an arbitrary $(\Sigma,B)$-wta given effectively. By Theorem \ref{thm:A'-equivalent-to-A}, we may assume that $\cA$ is trim. By the proof of Theorem~\ref{cor:image-A-finite-decidable}
 we can decide whether 
        small loops of $\cA$ have weight $\1$.  If the answer is no, then by Theorem~\ref{cor:characterization-of-crisp-determinizability},  $\im(\sem{\cA})$ is infinite, i.e., $\cA$ does not have the finite-image property. Otherwise, all small loops of $\cA$ have weight $\1$. Then, by Corollary~\ref{cor:small-loops->cA-finite}(1),
the set $\rH(\cA)$ is finite, which implies that the set $\rC(\cA)$ is also finite. In addition, by Lemma~\ref{lm:compute-c(A)} and the fact that $B$ is computable, we can compute the set $\rC(\cA)$.
        Due to our assumption on small loops of $\cA$ and by Theorem~\ref{cor:characterization-of-crisp-determinizability}, $\cA$ has the finite-image property if and only if, for each $b \in \rC(\cA)$, the mapping $f_{\cA,b}$ is bounded or $b$ has finite additive order. Since  $B$  is computable, by the proof of Theorem~\ref{thm:sufficient-conds-ensure-rec-step-map}, we can construct effectively, for each $b \in \rC(\cA)$, the $(\Sigma,\bbN)$-wta  $\A'_b$  described in the proof of Theorem~\ref{thm:sufficient-conds-ensure-rec-step-map}. By Theorem \ref{thm:bbN-wta-bounded-dec}, it is decidable, for each $b \in \rC(\cA)$, whether $\sem{\cA'_b}$ has finite image. Moreover, by Statement~2, it is decidable, for each $b \in \rC(\cA)$, whether $b$ has finite additive order. Therefore, it is decidable, whether $\cA$ has the finite-image property. 
    \end{proof}
    
    \begin{lemma} \label{lm:monotonic-semiring} \rm
        Let $B$ be a monotonic strong bimonoid such that it is left or right distributive. Then either $B$ is idempotent or else each $b \in B \setminus \{\0\}$ has infinite additive order.
    \end{lemma}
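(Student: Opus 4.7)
The plan is to prove the contrapositive: assume some $b \in B \setminus \{\0\}$ has finite additive order and deduce that $B$ is idempotent. Since left or right distributivity is available, it suffices to derive $\1 \oplus \1 = \1$; indeed, this then yields $c \oplus c = (\1 \oplus \1) \otimes c = c$ (right-distributive case) or $c \oplus c = c \otimes (\1 \oplus \1) = c$ (left-distributive case) for every $c \in B$.

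The first step is to extract from $b$ a nonzero, additively idempotent element $b'$. Finiteness of $\langle b \rangle_\oplus$ combined with monotonicity~(i) (which forces the non-decreasing chain $kb \preceq (k+1)b \preceq \cdots$ to collapse once two members coincide) produces $k \in \N_+$ with $kb = (k+1)b$. Iterating then gives $(k+j)b = kb$ for all $j \in \N$, so setting $b' := kb$ yields $b' \oplus b' = 2kb = kb = b'$. Since $\0 \prec b \preceq kb$ by monotonicity, $b' \neq \0$.

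The final step is to transport $b' \oplus b' = b'$ onto $\1 \oplus \1$. Depending on which distributive law holds, $b' \oplus b' = \1 \otimes b' \oplus \1 \otimes b' = (\1 \oplus \1) \otimes b'$ or, symmetrically, $b' \oplus b' = b' \otimes (\1 \oplus \1)$. Suppose toward a contradiction that $\1 \oplus \1 \neq \1$; monotonicity~(i) forces $\1 \prec \1 \oplus \1$, so $\1 \oplus \1 \in B \setminus \{\0,\1\}$. Now monotonicity~(ii) applied with $a = \1$, $b = \1 \oplus \1$, $c = b'$ gives the strict inequality $b' = \1 \otimes b' \prec \1 \otimes (\1 \oplus \1) \otimes b' = b'$ (right-distributive case), and applied with $a = b'$, $b = \1 \oplus \1$, $c = \1$ gives $b' \prec b' \otimes (\1 \oplus \1) = b'$ (left-distributive case). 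Both are contradictions, so $\1 \oplus \1 = \1$, as required. The only non-routine point is the construction of $b'$ from the additive cycle; once $b'$ is in hand, the interaction between monotonicity~(ii) and the chosen distributive law closes the argument.
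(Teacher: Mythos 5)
Your proof is correct and rests on the same pivot as the paper's: the dichotomy $\1 \oplus \1 = \1$ versus $\1 \prec \1 \oplus \1$, combined with monotonicity~(ii) and the distributive law to obtain $x \prec x \otimes (\1 \oplus \1) = x \oplus x$ for nonzero $x$. The only difference is presentational: the paper argues directly, iterating this strict inequality to the infinite chain $b \prec 2b \prec 4b \prec \cdots$, whereas you argue contrapositively by extracting an additively idempotent element $b' = kb \neq \0$ from a finite additive order and contradicting $b' \prec b' \oplus b' = b'$ in one step; both routes are sound.
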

    
    \begin{proof}
        We may assume that $B$ is left distributive.
        Clearly, if $\1 = \1 \oplus \1$, then $B$ is idempotent. Therefore, assume that $\1 \neq \1 \oplus \1$, and thus, $\1 \prec \1 \oplus \1$. Since $B$ is monotonic and left distributive,  for each  $b \in B \setminus \{ \0 \}$,  we have  $b \prec b \otimes (\1 \oplus \1) = b \oplus b$.
Consequently,  $b\oplus b \prec (b\oplus b) \oplus (b \oplus b) = 4b$.
Thus, for each  $n \in \N_+$,  we have  $b \prec b \oplus b \prec ... \prec (2^n)b$.
         Hence, $\langle b \rangle_\oplus$ is infinite. (Observe that the proof is similar if we assume right distributivity instead of left distributivity.)
         %
         %
         %If we had $nb=(n+1)b$ for some $n \in \bbN_+$, we would obtain
         %\[(n+2)b = ((n+1)+1)b = (n+1)b \oplus b = nb \oplus b = (n+1)b = nb\enspace\]
         %and hence, $nb=mb$ for every $m \in \bbN_+$ with $n\leq m$. It is a contradiction.
         %Thus, for each $n \in \bbN$, we have $nb \prec (n+1)b$ showing that $b$ does not have finite additive order. 
    \end{proof}
    
    \begin{theorem} \label{thm:past-fin-mon-semiring-fin-im-dec}
        Let $B$ be a past-finite monotonic strong bimonoid and computable such that it is left or right distributive. Then it is decidable, for each $(\Sigma,B)$-wta $\cA$ given effectively, whether $\cA$ has the finite-image property. In particular, this is decidable if $B$ is a past-finite monotonic and computable semiring.
    \end{theorem}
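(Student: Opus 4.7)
The plan is to invoke Theorem~\ref{thm:characterization}, which reduces deciding the finite-image property for wta over $B$ to deciding, for each $b \in B$, whether $b$ has finite additive order. Since $B$ is past-finite monotonic and computable by hypothesis, the only remaining task is to establish this pointwise decidability.

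The key structural tool is Lemma~\ref{lm:monotonic-semiring}: under one-sided distributivity, $B$ is either additively idempotent, or every element of $B \setminus \{\0\}$ has infinite additive order. Moreover, as is visible from the proof of that lemma, the idempotent regime is captured exactly by the equation $\1 \oplus \1 = \1$. Because $B$ is computable, I can evaluate $\1 \oplus \1$ and test whether it equals $\1$, so it is effectively decidable which of the two regimes $B$ lies in.

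In the idempotent regime, every $b \in B$ satisfies $b \oplus b = b$, hence $\langle b \rangle_\oplus$ is finite, so the answer to ``does $b$ have finite additive order?'' is uniformly yes. In the non-idempotent regime, the only element of finite additive order is $\0$ itself, so deciding whether a given $b$ has finite additive order reduces to the equality test $b = \0$, which is again available by computability of $B$. Combining these cases, the hypothesis of Theorem~\ref{thm:characterization}(2) is satisfied, which yields the first assertion. The ``in particular'' clause is immediate because every semiring is by definition both left and right distributive, so a past-finite monotonic computable semiring falls within the first claim.

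I do not expect any real obstacle: the entire argument is a short combinatorial glue between Lemma~\ref{lm:monotonic-semiring} and Theorem~\ref{thm:characterization}, and the computability hypothesis is used only in its most elementary form, to make the single equation test $\1 \oplus \1 = \1$ and the equality test $b = \0$ effective.
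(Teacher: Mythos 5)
Your proposal is correct and follows essentially the same route as the paper: test whether $\1 \oplus \1 = \1$ (decidable by computability), use Lemma~\ref{lm:monotonic-semiring} to conclude that $B$ is either idempotent or has no nonzero element of finite additive order, and in either case verify the hypothesis of Theorem~\ref{thm:characterization}(2). Your explicit observation that in the non-idempotent regime the finite-additive-order test reduces to the equality test $b = \0$ is a slightly more detailed spelling-out of what the paper leaves implicit, but the argument is the same.
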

    
    \begin{proof}
        Let $\cA$ be an arbitrary $(\Sigma,B)$-wta given effectively. We first check whether $\1=\1\oplus\1$. If this is the case, then $B$ is idempotent, and thus, by Theorem~\ref{thm:characterization}, it is decidable whether $\cA$ has the finite-image property. Otherwise, by Lemma~\ref{lm:monotonic-semiring}, each $b \in B \setminus \{ \0 \}$ has infinite additive order. Again, by Theorem~\ref{thm:characterization}, it is decidable whether $\cA$ has the finite-image property.
    \end{proof}
    
     We give examples illustrating applications of Theorem~\ref{thm:characterization}.
    
    \newpage
    
    \begin{example}\rm\hfill
        \begin{compactenum}
            \item Let $B$ be an additively locally finite strong bimonoid. In this case Theorem~\ref{thm:characterization}(2) is satisfied. This is the situation of Theorem \ref{cor:image-A-finite-decidable}. Note, however, Theorem \ref{cor:image-A-finite-decidable} does not require $B$ to be computable.
            \item Let $B$ be a strong bimonoid such that, for each $b \in (B \setminus \{\0\})$, the set $\langle b \rangle_\oplus$ is not finite, e.g., the semiring $\bbN$ of natural numbers. In this case Theorem~\ref{thm:characterization}(2) is also satisfied. This situation can be viewed as orthogonal to the situation of Theorem \ref{cor:image-A-finite-decidable}.
            \item Each semiring $B$ belongs to either Item 1 or Item 2, because by Lemma~\ref{lm:monotonic-semiring}, it suffices to decide whether $\1 = \1 \oplus \1$.
            \item The following strong bimonoid combines features of Items 1 and 2, i.e., it contains elements both which have, respectively, which do not have finite additive order, but this is decidable for each element.
            Let $B = (\bbN,\oplus,\cdot,0,1,\leq)$, where, for each $x,y \in \bbN$, we let
            \[x \oplus y = \begin{cases}
                \max\{x,y\} &\text{if  $x \in \{0,1,2\}$ or $y \in \{0,1,2\}$}\\
                x + y &\text{otherwise}\enspace;
            \end{cases}\]
            in addition, $\cdot$ is the usual multiplication, and $\leq$ is the usual order on $\bbN$. Then, for each $b \in B$, we can test whether $b \in \{0,1,2\}$ to decide whether $b$ has finite additive order. Thus, Theorem~\ref{thm:characterization}(2) is satisfied. \qed
        \end{compactenum}
    \end{example}

\section{Further decidability results for wta over past-finite monotonic strong bimonoids} \label{sect:main-result}

In all of this section,  $B$  will be a past-finite monotonic and computable strong bimonoid. We recall that, by Theorem~\ref{thm:inverse_b_recognizable}(1), a $(\Sigma,B)$-wta  $\A$  always has the preimage property.
Hence, by Lemma~\ref{lm:cdwta_finite_image}, the $(\Sigma,B)$-wta  $\A$ has the finite-image property iff  $\sem{\A}$  is a recognizable step mapping iff  $\A$  is crisp-determinizable. Hence, the question arises whether in case  $\A$  has the finite image property, and thus, is crisp-determinizable, we can actually construct an equivalent crisp-deterministic wta. This will be shown to be true in the following result.
Note that then, from the equivalent crisp-deterministic wta, we also know the elements of the finite set $\im(\sem{\A})$  as well as tree automata for the recognizable tree languages $\sem{\A}^{-1}(b)$ (for all  $b \in \im(\sem{\A})$).

\begin{theorem} \label{thm:construction-of-crisp-deterministic-wta}  Let $B$ be past-finite monotonic  and computable. Moreover, let $\A=(Q,\delta,F)$ be a $(\Sigma,B)$-wta given effectively.  If  $\A$ has the finite-image property, then we can construct effectively  a crisp-deterministic  $(\Sigma,B)$-wta $\B$ such that $\sem{\B}=\sem{\A}$.
\end{theorem}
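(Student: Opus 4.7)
The plan is to reduce the statement to Theorem~\ref{thm:sufficient-conds-ensure-rec-step-map}(2), which already provides the desired effective construction of an equivalent crisp-deterministic wta, provided that $\rH(\A)$ is finite and that each $b \in \rC(\A)$ either has bounded complete run number mapping $f_{\A,b}$ or has finite additive order. So the task reduces to verifying these hypotheses effectively, using the fact that $\A$ has the finite-image property and that $B$ is past-finite monotonic and computable.

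First I would dispose of the degenerate case. Since $B$ is computable it has an effective test for $\0$, hence Lemma~\ref{lm:decide-useful-state} lets us decide whether $\A$ has any useful state. If it does not, then $\sem{\A} = \widetilde{\0}$, which is trivially recognized by the crisp-deterministic $(\Sigma,B)$-wta on a single state $q$ with all transitions landing in $q$ carrying weight $\1$ and with $F_q = \0$. Otherwise, Theorem~\ref{thm:A'-equivalent-to-A} applies and yields effectively a trim $(\Sigma,B)$-wta $\A'$ with $\sem{\A'} = \sem{\A}$.

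The key observation is then that since $\A'$ is trim, equivalent to $\A$, and therefore has the finite-image property, and since $B$ is past-finite monotonic, Theorem~\ref{cor:characterization-of-crisp-determinizability} guarantees that small loops of $\A'$ have weight $\1$ and that, for each $b \in \rC(\A')$, either $f_{\A',b}$ is bounded or $b$ has finite additive order. Since monotonicity of $B$ entails one-product freeness (as remarked right after the definition of monotonic strong bimonoid), the first consequence together with Corollary~\ref{cor:small-loops->cA-finite}(1) yields that $\rH(\A')$ is finite.

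At this point every hypothesis of Theorem~\ref{thm:sufficient-conds-ensure-rec-step-map}(2) is fulfilled for $\A'$, namely $B$ is computable, $\A'$ is given effectively, $\rH(\A')$ is finite, and the required dichotomy on $\rC(\A')$ holds. I would then invoke that theorem to obtain effectively a crisp-deterministic $(\Sigma,B)$-wta $\B$ with $\sem{\B} = \sem{\A'} = \sem{\A}$, which is exactly what we want. The argument is essentially orchestrational; there is no new structural difficulty, and the only mildly delicate point is the bookkeeping required to secure the preconditions (existence of a useful state, effective test for $\0$, and applicability of Theorem~\ref{cor:characterization-of-crisp-determinizability}) before applying Theorem~\ref{thm:sufficient-conds-ensure-rec-step-map}(2) as the closing step.
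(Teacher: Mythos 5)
Your proposal is correct and follows essentially the same route as the paper's own proof: dispose of the no-useful-state case via Lemma~\ref{lm:decide-useful-state}, trim via Theorem~\ref{thm:A'-equivalent-to-A}, derive the hypotheses of Theorem~\ref{thm:sufficient-conds-ensure-rec-step-map} from Theorem~\ref{cor:characterization-of-crisp-determinizability} and Corollary~\ref{cor:small-loops->cA-finite}(1), and conclude with Theorem~\ref{thm:sufficient-conds-ensure-rec-step-map}(2). Your explicit remark that monotonicity yields one-product freeness (needed for Corollary~\ref{cor:small-loops->cA-finite}) is a detail the paper leaves implicit, but the argument is the same.
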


\begin{proof} By Lemma~\ref{lm:decide-useful-state}, it is decidable whether $\A$ has a useful state. If $\A$ does not have a useful state, then $\sem{\A}=\widetilde{\0}$ and we can construct effectively  the desired crisp-deterministic  $(\Sigma,B)$-wta $\B$ in an easy way. 

Otherwise, by Theorem \ref{thm:A'-equivalent-to-A} we may assume that $\A$ is trim.
Then, by  Theorem \ref{cor:characterization-of-crisp-determinizability} and Corollary \ref{cor:small-loops->cA-finite}(1), $\A$ satisfies the assumption of Theorem~\ref{thm:sufficient-conds-ensure-rec-step-map}.  Thus we can use Theorem~\ref{thm:sufficient-conds-ensure-rec-step-map}(2) to construct effectively a crisp-deterministic  $(\Sigma,B)$-wta $\B$ such that $\sem{\B}=\sem{\A}$.
\end{proof}

In the rest of this section we will show that if $B$ is past-finite monotonic and computable, then  we can decide,  for every wta $\A$ given effectively and positive integer $k$, whether the cardinality of the image of $\sem{\A}$ is bounded by $k$,
and in this case we can construct effectively a crisp-deterministic wta which is equivalent to $\A$.

\begin{theorem}  \label{thm:dec-imA-leq-k}
    Let $B$ be  past-finite monotonic and computable. It is decidable, for every  $(\Sigma,B)$-wta $\A$ given effectively and $k \in \N_+$, whether we have $|\im(\sem{\A})| \leq k$. Moreover, in this case we can construct effectively a crisp-deterministic $(\Sigma,B)$-wta $\B$ such that $\sem{\B} = \sem{\A}$.
  \end{theorem}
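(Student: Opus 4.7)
The plan is to combine the structural analysis of $\A$ from the previous sections with a parallel search, in order to sidestep the fact that finite additive order of elements of $B$ is in general not decidable.

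First, via Lemma~\ref{lm:decide-useful-state} I check whether $\A$ has a useful state; if not, $\sem{\A}=\widetilde{\0}$, so $|\im(\sem{\A})|=1\leq k$ and a trivial crisp-deterministic wta suffices. Otherwise, by Theorem~\ref{thm:A'-equivalent-to-A} I may assume $\A$ is trim. Next, as in the proof of Theorem~\ref{cor:image-A-finite-decidable}, I decide whether every small loop of $\A$ has weight $\1$; if not, then by Theorem~\ref{cor:characterization-of-crisp-determinizability} the image is infinite and I output ``no''. Otherwise Corollary~\ref{cor:small-loops->cA-finite}(1) makes $\rH(\A)$ finite, Lemma~\ref{lm:compute-c(A)} lets me compute $\rH(\A)$ and $\rC(\A)$, and for each $b \in \rC(\A)$ I construct the $(\Sigma,\bbN)$-wta $\A'_b$ with $\sem{\A'_b}=f_{\A,b}$ and apply Theorem~\ref{thm:bbN-wta-bounded-dec} to decide whether $f_{\A,b}$ is bounded, obtaining an effective bound $N_b$ in that case. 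Let $U = \{b \in \rC(\A) : f_{\A,b} \text{ unbounded}\}$.

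The main obstacle is that for $b \in U$ the additive order of $b$ is not decidable in general. I address this by running two processes in parallel. Process~(A) iteratively computes $b,2b,3b,\dots$ for each $b \in U$ until a repetition appears, yielding $i(b)$ and $p(b)$; once this has happened for every $b \in U$, the set $S_b \supseteq \{m b : m \in \bbN\}$ is finite and explicit for each $b \in \rC(\A)$ (truncated to $m \leq N_b$ in the bounded case, and equal to $\langle b\rangle_\oplus$ otherwise). Process~(A) then enumerates the finitely many tuples $\mathbf{s} = (s_b)_b \in \prod_b S_b$, uses Lemmas~\ref{cor:inverse_n_recognizable}(2) and~\ref{lm:inverse_m+nN_recognizable}(2) together with closure of recognizable tree languages under intersection to build a tree automaton for
\[
L_{\mathbf{s}}=\bigcap_{b\in \rC(\A)}\{\xi\in\T_\Sigma : f_{\A,b}(\xi)\,b=s_b\},
\]
and decides its emptiness; this yields $\im(\sem{\A})=\{\bigoplus_b s_b : L_{\mathbf{s}}\neq\emptyset\}$ explicitly, which I compare against $k$. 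Process~(B) enumerates trees $\xi\in\T_\Sigma$, computes $\sem{\A}(\xi)$, and maintains the set of distinct values seen; it halts with ``no'' as soon as this set exceeds size $k$.

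The delicate point is the termination argument. If $|\im(\sem{\A})|\leq k$, then every $b\in U$ must have finite additive order: otherwise $\{f_{\A,b}(\xi)\,b : \xi\in\T_\Sigma\}$ would be infinite, yet by monotonicity $f_{\A,b}(\xi)\,b \preceq \sem{\A}(\xi)$ places this set inside $\bigcup_{v\in\im(\sem{\A})}\past(v)$, a finite union of finite sets by past-finiteness -- a contradiction. Hence Process~(A) halts. Conversely, if $|\im(\sem{\A})| > k$, then either the image is finite, in which case Process~(A) still terminates (again every $b \in U$ has finite order) and outputs ``no'', or the image is infinite, in which case Process~(B) eventually halts with ``no''. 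In every case at least one process returns a correct answer. If that answer is ``yes'', then $\A$ has the finite-image property and Theorem~\ref{thm:construction-of-crisp-deterministic-wta} effectively produces the required crisp-deterministic $\B$.
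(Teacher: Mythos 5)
Your proposal is correct and follows essentially the same route as the paper: trim $\A$, test whether small loops have weight $\1$, compute $\rC(\A)$, and dovetail a tree-enumerating process against a process that establishes the hypotheses of Theorem~\ref{thm:sufficient-conds-ensure-rec-step-map}, with termination justified exactly as in the paper via Theorem~\ref{cor:characterization-of-crisp-determinizability} and past-finiteness. The only (harmless) variation is that you invoke Theorem~\ref{thm:bbN-wta-bounded-dec} to decide boundedness of each $f_{\A,b}$ up front --- note that this theorem only yields decidability, not the bound $N_b$ itself, but once boundedness is known the bound is obtained by the terminating search for the least $i$ with $\bigcup_{j\in[0,i]}\sem{\A'_b}^{-1}(j)=\T_\Sigma$ --- whereas the paper interleaves the boundedness search and the additive-order search inside its Algorithm~B.
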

  \begin{proof}  First, by the proof of Lemma~\ref{lm:decide-useful-state}, we check whether  $\A$  has a useful state.
If  $\A$  has no useful state, we have  $\sem{\A} = \widetilde{0}$, hence our statement holds obviously. Therefore we may assume that
$\A$  has a useful state. By Theorem~\ref{thm:A'-equivalent-to-A}, we may assume that  $\A$  is trim.

Next we check whether  all small loops of $\A$  have weight $\1$ (cf. proof of Theorem~\ref{cor:image-A-finite-decidable}).
If this is not the case, then by Theorem~\ref{cor:characterization-of-crisp-determinizability},  $\im(\sem{\A})$  is infinite and hence  $\A$  is not crisp-determinizable. 
Otherwise,  $\rH(\A)$ is finite by Corollary~\ref{cor:small-loops->cA-finite}(1) and thus $\rC(\A)$ is also finite. (We recall that each monotonic strong bimonoid is one-product free.)
  Using Lemma~\ref{lm:compute-c(A)}, we compute $\rC(\A)$.

We choose an effective enumeration $\xi_0,\xi_1,\ldots$ of $\T_\Sigma$.
Then we run the following Algorithms A and B in parallel.

\

Algorithm A:  Compute  $\sem{\A}(\xi_0), \sem{\A}(\xi_1), \ldots$.
We let this algorithm terminate if we have obtained more than  $k$  different values.

\

Algorithm B:  For each  $b \in \rC(A)$, we run the following subalgorithm  $\mathrm{Alg}(b)$.\\
$\mathrm{Alg}(b)$:  We compute the $(\Sigma,\N)$-wta  $\A'_b$  as in the proof of Theorem~\ref{thm:sufficient-conds-ensure-rec-step-map}.
Now we proceed as in the algorithm described in the proof of Theorem~\ref{thm:sufficient-conds-ensure-rec-step-map}(2). Successively for $i = 0,1,\ldots$, 
\begin{compactenum}
\item[(a)] we compute and store a finite-state $\Sigma$-tree automaton  for the language  $L_{b,i} = \sem{\A'_b}^{-1}(i)$ (cf. Lemma~\ref{cor:inverse_n_recognizable}(2)), and 
\item[(b)] we compute and store  the element  $ib$.
\end{compactenum}
We let  $\mathrm{Alg}(b)$  terminate, if for some  $i \in \N$,  we have $\bigcup_{j \in [0,i]} L_{b,j} = \T_\Sigma$ or  $ib = jb$  for some  $j < i$.
Clearly, both equalities just given are decidable.
We let Algorithm B terminate if, for each $b \in \rC(A)$, the algorithm  $\mathrm{Alg}(b)$  terminates.

\

Next we show that this decision procedure terminates
and thus we can decide whether $|\im(\sem{\A})| \leq k$.
Clearly, if  Algorithm A terminates, we have  $k ~<~|\im(\sem{\A})|$.

Therefore let us assume that Algorithm A does not terminate.
In this case, we have $|\im(\sem{\A})| \leq k$, and we have to show that Algorithm B terminates.
By Theorem~\ref{cor:characterization-of-crisp-determinizability}, for each $b \in \rC(\A)$, the mapping $f_{\A,b}$ is bounded or $b$ has finite additive order. Hence, the assumptions of Theorem~\ref{thm:sufficient-conds-ensure-rec-step-map} are satisfied. As shown in the proof of Theorem~\ref{thm:sufficient-conds-ensure-rec-step-map}(2), for each $b \in \rC(\A)$, the algorithm $\mathrm{Alg}(b)$ terminates. Hence, Algorithm B terminates.

It follows Algorithm A or Algorithm B terminates. If Algorithm A terminates while Algorithm B is running, we know that  $k < |\im(\sem{\A})|$. Now assume Algorithm B terminates while Algorithm A is still running. Then by the proof of Theorem~\ref{thm:sufficient-conds-ensure-rec-step-map}(2),
we can compute effectively a crisp-deterministic wta  $\B$ equivalent to $\A$.
Then we decide whether  $|\im(\sem{\B})| \leq k$. 
\end{proof}

%%%%%%%%%%%%%%% explanation
Besides the proof of Theorem \ref{thm:dec-imA-leq-k} we also show Algorithm~\ref{alg:decides-if-bounded-by-k} which implements the decision procedure in that proof
as a pseudo code. We explain the implementation details as follows.   

We assume that we have an enumeration $\xi_0,\xi_1, \ldots$ of $\T_\Sigma$.

As first step (lines \ref{line:begin-test-useful-state}--\ref{line:end-test-useful-state}), Algorithm \ref{alg:decides-if-bounded-by-k} tests whether $\A$ has a useful state as in the proof of Lemma~\ref{lm:decide-useful-state}. If the answer is no, then we have $\sem{\A}=\widetilde{\0}$. Hence  Algorithm \ref{alg:decides-if-bounded-by-k} constructs effectively a crisp-deterministic $(\Sigma,B)$-wta $\B$ with $\sem{\B}=\widetilde{\0}$ and terminates with output ``yes''. Otherwise,  Algorithm \ref{alg:decides-if-bounded-by-k} trims $\A$ according to Theorem \ref{thm:A'-equivalent-to-A}.

As second step (lines \ref{line:begin-test-small-loops}--\ref{line:end-test-small-loops}), Algorithm \ref{alg:decides-if-bounded-by-k} tests whether every small loop of $\A$ has weight $\1$. This can be done by the same argument as in the proof of Theorem~\ref{cor:image-A-finite-decidable}). If this is not true, then $\im(\sem{\A})$ is infinite (by Theorem~\ref{cor:characterization-of-crisp-determinizability}) and Algorithm~\ref{alg:decides-if-bounded-by-k} terminates with output ``no''. Otherwise, Algorithm~\ref{alg:decides-if-bounded-by-k} continues and, by Corollary~\ref{cor:small-loops->cA-finite}(1), the set $\rH(\A)$ is finite, and hence also $\C(\A)$ is finite. 

As third step (lines \ref{line:compute-C(A)}--\ref{line:end-init-b-in-C(A)}), Algorithm~\ref{alg:decides-if-bounded-by-k} computes the set $\rC(\A) \subseteq B$; by Lemma~\ref{lm:compute-c(A)} this is possible. Moreover, for each $b \in \rC(\A)$, a Boolean value $\mathrm{flag}(b)$, a set $S_b \subseteq B$, and a set $U_b \subseteq \T_\Sigma$ are initialized. The intuition for $\mathrm{flag}$, $S_b$, and $U_b$ are as follows.
  \[
    \mathrm{flag}(b) = \begin{cases}
      \mathrm{true} & \text{if $f_{\A,b}$ is bounded or $\langle b\rangle_\oplus $  is finite}\\
      \mathrm{false} & \text{ otherwise}
      \end{cases} 
    \]
    In the set $S_b$, Algorithm~\ref{alg:decides-if-bounded-by-k} collects multiples of $b$, i.e., $S_b \subseteq \{ib \mid i \in \N\}$. Moreover, in the set $U_b$, Algorithm~\ref{alg:decides-if-bounded-by-k} collects all those trees $\xi$ for which the multiplicity $f_{\A,b}(\xi)$ is $i$ for some $i \in \N$, i.e.,
    \[
U_b \subseteq \bigcup_{i \in \N} L_{b,i} \ \text{ where } \ L_{b,i} = \{\xi \in \T_\Sigma \mid f_{\A,b}(\xi) = i\}\enspace.
\]
We note that Algorithm~\ref{alg:decides-if-bounded-by-k} does not compute the set $U_b$ itself, but a finite representation of $U_b$ in the form of a finite-state $\Sigma$-tree automaton (cf. the proof of Theorem~\ref{thm:sufficient-conds-ensure-rec-step-map}).

  As fourth step (\ref{line:begin-dec-alg}--\ref{line:end-dec-alg}),  Algorithm~\ref{alg:decides-if-bounded-by-k} runs the following two algorithms in parallel. Here, however, we explain these two algorithms separately.

  \begin{itemize}
\item Algorithm A: \sloppy Algorithm A computes  $\sem{\A}(\xi_0), \sem{\A}(\xi_1), \ldots$  and collects these values into a set $W$ (line \ref{line:collect-weight}).  To compute these values, Algorithm A uses the given enumeration $\xi_0,\xi_1, \ldots$ of $\T_\Sigma$. Algorithm A terminates if more than $k$ different values were obtained (line \ref{line:k+1-different-values}). Then the answer to the decision problem is ``no''.

\item Algorithm B:  For each  $b \in \C(\A)$, Algorithm B computes simultaneously the sequences $L_{b,0}$, $L_{b,1}, \ldots$ (line \ref{line:comonstruct-tree-automata}) and $0b,1b, \ldots$ (line \ref{line:compute-multiples-of-b}). 
  Algorithm B terminates, if $\mathrm{flag}(b)=\mathrm{true}$ for each $b \in \rC(\A)$ (lines \ref{line:set-flag-fAb-bounded} and \ref{line:set-flag-b-has-fin-add-ord}), i.e.,
for each $b \in \rC(\A)$ there exists an $i \in \N$ such that $\bigcup_{j \in [0,i]} L_{b,j} = \T_\Sigma$ or  $ib = jb$  for some  $j < i$. Clearly, both tests ($U_b = \T_\Sigma$ and $ib \in S_b$) are decidable.
\end{itemize}
The parallel running of Algorithms A and B terminates,  and after termination the assumptions of Theorem~\ref{thm:sufficient-conds-ensure-rec-step-map} are satisfied (cf. the proof of Theorem \ref{thm:dec-imA-leq-k}).
Then, by Theorem~\ref{thm:sufficient-conds-ensure-rec-step-map}(2), Algorithm~\ref{alg:decides-if-bounded-by-k} can compute effectively a crisp-deterministic wta  $\B$  equivalent to $\A$ and it decides whether $|\im(\sem{\B})| \leq k$.
This finished the explanation of Algorithm~\ref{alg:decides-if-bounded-by-k}.

\

\begin{algorithm}\label{alg:decides-if-bounded-by-k}
    \small
    \KwIn{a $(\Sigma,B)$-wta $\A=(Q,\delta,F)$ given effectively,\;
    \hspace*{10mm} an effective enumeration $\xi_0,\xi_1,\ldots$ of $\T_\Sigma$, and $k \in \N_+$
    }
    \KwOut{"yes" if $|\im(\sem{\A})| \leq k$ and "no" otherwise}
    \BlankLine
    \Var $i \in \N$, $b \in B$, $W \subseteq B$, $\xi \in \T_\Sigma$\;
    \hspace*{15mm} $\mathrm{flag}: \rC(\A) \rightarrow \{\mathrm{true}, \mathrm{false}\}$\;
    \hspace*{15mm} family $\big(U_b \subseteq \T_\Sigma \mid b \in \rC(\A)\big)$ \;
    \hspace*{15mm} family $\big(S_b \subseteq B \mid b \in \rC(\A)\big)$ \;
    \BlankLine
    \If(\Comment*[f]{\textrm{\% cf. proof of Lemma~\ref{lm:decide-useful-state}}}){$\neg(\A \mathrm{\ has\ a\ useful\ state})$ \label{line:begin-test-useful-state}}{
        construct effectively a crisp-deterministic $(\Sigma,B)$-wta $\B$ such that $\sem{\B} = \0$\;
        \Return{$"\mathrm{yes}"$}
    } \label{line:end-test-useful-state} 
    trim $\A$ \Comment*[r]{\textrm{\% cf. Theorem \ref{thm:A'-equivalent-to-A}}}
    \If(\Comment*[f]{\textrm{\% cf. proof of Theorem~\ref{cor:image-A-finite-decidable}}}){$\neg(\mathrm{all\ small\ loops\ of\ } \A \mathrm{\ have\ weight\ } \1)$ \label{line:begin-test-small-loops}}{
      \Return{$"\mathrm{no}"$} \Comment*[r]{\textrm{\% $\im(\sem{\A})$ is infinite, cf. Theorem~\ref{cor:characterization-of-crisp-determinizability}}}
    }\label{line:end-test-small-loops}
    \Comment*[r]{\textrm{\% $\rH(\A)$ is finite, cf. Corollary~\ref{cor:small-loops->cA-finite}(1)}} 	
    compute $\rC(\A)$ \label{line:compute-C(A)}\Comment*[r]{\textrm{\% cf. Lemma~\ref{lm:compute-c(A)}}}
    \ForEach{$b \in \rC(\A)$}{
        $\mathrm{flag}(b) \leftarrow \mathrm{false}$, 
        $U_b \leftarrow \emptyset$, and $S_b \leftarrow \emptyset$\;
        construct effectively the $(\Sigma,\N)$-wta $\A'_b$ \Comment*[r]{\textrm{\% cf. proof of Theorem~\ref{thm:sufficient-conds-ensure-rec-step-map}(2)}}
      } \label{line:end-init-b-in-C(A)}
      \BlankLine
    $i \leftarrow 0$ and $W \leftarrow \emptyset$ \label{line:begin-dec-alg}\;
    \While{$\mathrm{true}$}{
        $\xi \leftarrow \xi_i$ \Comment*[r]{\textrm{\% query the next tree}}
        $W \leftarrow W \cup \{\sem{\A}(\xi)\}$ \label{line:collect-weight}\;
        \lIf{$|W| > k$ \label{line:k+1-different-values}}{\Return{$"\mathrm{no}"$}}
        \eIf{$\mathrm{flag}^{-1}(\mathrm{false}) \neq \emptyset$}{
            \ForEach{$b \in \mathrm{flag}^{-1}(\mathrm{false})$}{
                $U_b \leftarrow U_b \cup \sem{\A'_b}^{-1}(i)$ \label{line:comonstruct-tree-automata}\Comment*[r]{\textrm{\% cf. Lemma \ref{cor:inverse_n_recognizable}(2)}} 
                \eIf(\Comment*[f]{\textrm{\% check whether $f_{\A,b}$ is bounded}}){
                    $U_b = \T_\Sigma$ \label{line:testing-boundedness}}
                    {$\mathrm{flag}(b) \leftarrow \mathrm{true}$ \label{line:set-flag-fAb-bounded}}
                    {
                        \eIf(\Comment*[f]{\textrm{\% check whether $b$ has finite additive order}})
                        {$ib \in S_b$\label{line:testing-finiteness}}
                        {$\mathrm{flag}(b) \leftarrow \mathrm{true}$ \label{line:set-flag-b-has-fin-add-ord}}
                        {$S_b \leftarrow S_b \cup \{ib\}$ \label{line:compute-multiples-of-b}}
                    }                      
            }
            $i \leftarrow i + 1$\;
        }(\Comment*[f]{\textrm{\% conditions of Theorem~\ref{thm:sufficient-conds-ensure-rec-step-map}(2) hold}}){
            construct effectively the crisp-deterministic $(\Sigma,B)$-wta $\B$ such that $\sem{\B}=\sem{\A}$ \;
            \leIf{$|\im(\sem{\B})| \leq k$}{\Return{$"\mathrm{yes}"$}}{\Return{$"\mathrm{no}"$}}
        }
        
    } \label{line:end-dec-alg}
    \caption{\label{alg:dec-imA-leq-k} Deciding whether $\im(\sem{\A}) \leq k$}
\end{algorithm}

To draw conclusions from Theorem \ref{thm:dec-imA-leq-k}, we need the following notions.

Let $r$ be a $(\Sigma,B)$-weighted tree language, $E \subseteq B$ be a finite set, and $b \in B$.
We say that $r$ is a {\em $(\Sigma,B,E)$-recognizable step mapping} (for short: $E$-recognizable step mapping) if $r$ is a recognizable step mapping and $\im(r) = E$.  Moreover, we say that $r$ is {\em constant} if $r=\widetilde{b}$ for some $b \in B$.

Now we show that, for arbitrary  past-finite monotonic and computable strong bimonoid $B$,  $(\Sigma,B)$-wta $\A$ given effectively, finite subset $E\subseteq B$, and $b\in B$, it is decidable whether $\sem{\A}$ is an $E$-recognizable step mapping, and whether $\sem{\A} = \widetilde{b}$
(for the definition of $\widetilde{b}$, see Section \ref{sect:wtl}).
Moreover, we revisit the decidability results of Borchardt (\cite[Sect.~6]{bor04}) concerning the constant problem and the constant-on-its-support problem and prove them for a larger class of algebras.

\begin{corollary} \rm \label{cor:decidability-results}
Let $B$ be past-finite monotonic and computable. Then, for each $(\Sigma,B)$-wta $\A$ given effectively, the following questions are decidable:
\begin{compactenum}
    \item[(a)] Given a finite subset $E \subseteq B$, is the weighted tree language $\sem{\A}$ an $E$-recognizable step mapping?
    \item[(b)] Given $b \in B$, is $\sem{\A} = \widetilde{b}$?
    \item[(c)] Is $\sem{\A}$ constant?
    \item[(d)] Is $\sem{\A}$ a recognizable one-step mapping?
\end{compactenum}
\end{corollary}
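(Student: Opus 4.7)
My plan is to reduce all four items to Theorem~\ref{thm:dec-imA-leq-k}, which both decides whether $|\im(\sem{\A})| \leq k$ and, in the positive case, constructs a crisp-deterministic $(\Sigma,B)$-wta $\B=(Q',\delta',F')$ with $\sem{\B}=\sem{\A}$. Each of the properties (a)--(d) imposes an a priori upper bound on $|\im(\sem{\A})|$: by $|E|$ in (a), by $1$ in (b) and (c), and by $2$ in (d). So the first step in each case is to invoke Theorem~\ref{thm:dec-imA-leq-k} with the relevant $k$; if it answers ``no'', we output ``no''.

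If it answers ``yes'', we have the crisp-deterministic equivalent $\B$, whose image satisfies $\im(\sem{\B}) \subseteq \im(F') \cup \{\0\}$, a finite and explicitly available set. Using the remark following Lemma~\ref{lm:cdwta_finite_image}, for every $b \in \im(F') \cup \{\0\}$ we can construct a finite-state $\Sigma$-tree automaton recognizing $\sem{\B}^{-1}(b)$ and then decide whether it is empty (a standard decision problem). Discarding the empty preimages yields exactly $\im(\sem{\A})$.

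With $\im(\sem{\A})$ explicitly computed, each item becomes immediate. For (a), since $|\im(\sem{\A})|$ is now known to be finite, Theorem~\ref{thm:inverse_b_recognizable}(1) together with Lemma~\ref{lm:cdwta_finite_image} guarantees that $\sem{\A}$ is automatically a recognizable step mapping, so $\sem{\A}$ is an $E$-recognizable step mapping iff $\im(\sem{\A}) = E$. Item (b) is the special case $E = \{b\}$ of (a). Item (c) holds iff $|\im(\sem{\A})| = 1$. For (d), the image of any recognizable one-step mapping $b \otimes \1_{(B,L)}$ is contained in $\{b,\0\}$, so $\sem{\A}$ is a recognizable one-step mapping iff either $|\im(\sem{\A})| = 1$, or $|\im(\sem{\A})| = 2$ and $\0 \in \im(\sem{\A})$; in the latter case $\sem{\A} = b \otimes \1_{(B,\sem{\A}^{-1}(b))}$ where $b$ is the unique nonzero image value, and $\sem{\A}^{-1}(b)$ is recognizable by Theorem~\ref{thm:inverse_b_recognizable}(1).

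There is no real obstacle beyond correctly bookkeeping whether $\0$ occurs in $\im(\sem{\A})$, which (as above) reduces to testing nonemptiness of $\sem{\B}^{-1}(\0)$; this is a standard emptiness check on the crisp-deterministic $\B$.
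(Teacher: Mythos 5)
Your proof is correct and follows essentially the same route as the paper: the same choice of $k$ ($|E|$, $1$, $1$, $2$) fed into Theorem~\ref{thm:dec-imA-leq-k}, followed by deciding the question on the resulting crisp-deterministic $\B$. You merely spell out the last step (computing $\im(\sem{\A})$ from $\im(F')$ via emptiness tests on the preimage automata, and the image-based characterization of one-step mappings) which the paper leaves implicit.
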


\begin{proof} First, for each case (a)-(d), we define a number $k$. 
    In case (a), let $k = |E|$. In cases (b) and (c), we put $k = 1$. In case (d), we set $k=2$. 
Let us decide whether $|\im(\sem{\A})|\le k$ (cf. Theorem \ref{thm:dec-imA-leq-k}).
If this is not the case, the answer to the respective question is "no". Otherwise, a crisp-deterministic $(\Sigma,B)$-wta $\B$ is given such that $\sem{\B}=\sem{\A}$. Now for this $\B$ we can decide the questions of (a) -- (d).
\end{proof}

\section{A comparison of the concepts cost-finiteness and having the finite-image property}

Let $B$ be a strong bimonoid. A $(\Sigma,B)$-wta $\A$ is called {\em cost-finite} if the set
\[\rH(\A)^{F \neq \0} = \{\wt(\xi,\rho) \mid \xi \in \T_\Sigma, \rho \in \R_\A(\xi), \text{ and } F_{\rho(\varepsilon)} \neq \0\}\enspace,\]
is finite (cf. \cite{borfulgazmal05}). We note that in \cite{borfulgazmal05} $\rH(\A)^{F \neq \0}$ is denoted by $\mathrm{c}(\A)$.

In \cite[Thm. 44]{borfulgazmal05}, a characterization of cost-finiteness of trim wta was given
for finitely factorizing monotonic semirings. In this section, we wish to give such a
characterization for monotonic strong bimonoids. As a corollary, we obtain that cost-finiteness is decidable for  wta over monotonic strong bimonoids with effective tests for  $\0$  and  $\1$. Note however, that the weights of transitions of wta in \cite{borfulgazmal05} have a more general structure
than the weights considered here (they are polynomials instead of monomials).
Therefore, Lemma \ref{prop:app}  and Corollary \ref{cor:past-finite-cost-finite} cannot be considered as a generalization of the
corresponding results  \cite[Thm. 44, 46]{borfulgazmal05}. Finally, we compare  the concept of cost-finiteness \cite{borfulgazmal05} and the concept of having the finite-image property (cf. Corollary~\ref{cor:concept-comparision}).

\begin{lemma}\rm \label{prop:app}  Let  $B$  be a monotonic strong bimonoid, and let  $\A$  be trim. Then  $\A$  is cost-finite if and only if small loops of  $\A$
have weight~$\1$.
\end{lemma}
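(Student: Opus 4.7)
The plan is to handle both directions essentially as variants of arguments already assembled in the paper, keeping the setup minimal.

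For the ($\Leftarrow$) direction, I would observe that a monotonic strong bimonoid is one-product free (as noted after the definition of monotonicity). Thus Corollary~\ref{cor:small-loops->cA-finite}(1) applies: if small loops of $\A$ have weight $\1$, then $\rH(\A)$ is finite. Since $\rH(\A)^{F\neq\0}\subseteq \rH(\A)$, this yields cost-finiteness at once.

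For the ($\Rightarrow$) direction I would argue by contraposition, reusing the pumping construction from the proof of Theorem~\ref{cor:characterization-of-crisp-determinizability}. Suppose some small loop $\rho\in\R_\A(q,c,q)$ with $\hgt(c)<|Q|$ satisfies $\1\prec \wt(c,\rho)$. By Observation~\ref{obs:decomp-run-left-right}, $\wt(c,\rho)=l_{c,\rho}\otimes r_{c,\rho}$, and one-product freeness forces $l_{c,\rho}\neq \1$ or $r_{c,\rho}\neq \1$. Trimness of $\A$ produces $q'\in Q$ with $F_{q'}\neq\0$, contexts $c'\in\C_\Sigma$, a tree $\xi\in\T_\Sigma$, and runs $\rho'\in\R_\A(q',c',q)$, $\theta\in\R_\A(q,\xi)$. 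Applying Lemma~\ref{lm:decomposition-of-a-run} gives, for each $n\in\N$,
\[
\wt\bigl(c'[c^n[\xi]],\rho'[\rho^n[\theta]]\bigr)=l_{c',\rho'}\otimes (l_{c,\rho})^n\otimes \wt(\xi,\theta)\otimes (r_{c,\rho})^n\otimes r_{c',\rho'},
\]
and since every factor is a product of transition weights of valid runs, zero-divisor freeness ensures each factor is in $B\setminus\{\0\}$.

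The main step will be to apply the monotonicity axiom (ii) to conclude that the above weights form a strictly $\prec$-increasing sequence in $n$: whichever of $l_{c,\rho},r_{c,\rho}$ differs from $\1$ can be inserted as the middle factor $b$ in axiom~(ii), using the neighbouring factors as $a$ and $c$. This is the only slightly delicate part, since axiom~(ii) inserts a single factor and we must split the argument into the cases $l_{c,\rho}\neq\1$ and $r_{c,\rho}\neq\1$ to cover all possibilities. Once the chain is strictly increasing, the runs $\rho'[\rho^n[\theta]]$ all have root $q'$ with $F_{q'}\neq\0$, so their weights lie in $\rH(\A)^{F\neq\0}$, contradicting cost-finiteness.
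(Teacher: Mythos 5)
Your proposal is correct and follows essentially the same route as the paper: the ($\Leftarrow$) direction is exactly the appeal to Corollary~\ref{cor:small-loops->cA-finite}(1) (the paper leaves the one-product-freeness of monotonic strong bimonoids implicit, which you rightly make explicit), and the ($\Rightarrow$) direction is precisely the pumping argument producing the strictly $\prec$-increasing chain \eqref{eq:prec-sequence} from the proof of Theorem~\ref{cor:characterization-of-crisp-determinizability}, which the paper simply cites. Your extra care in applying monotonicity axiom~(ii) factor by factor, with the case split on which of $l_{c,\rho}$, $r_{c,\rho}$ differs from $\1$, fills in a detail the paper glosses over but introduces nothing new.
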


\begin{proof} If small loops of  $\A$  have weight  $\1$, then $\A$ is cost-finite by Corollary \ref{cor:small-loops->cA-finite}(1).

Now assume  $\A$  has a loop $\rho$ on a context $c$ with $\1\prec \wt(c,\rho)$.
Then as in \eqref{eq:prec-sequence} of the  proof of Theorem~\ref{cor:characterization-of-crisp-determinizability},  we can produce an infinite set
of weights of runs on trees.  Hence  $\rH(\A)^{F \neq \0}$  is infinite.
\end{proof}

As a consequence, we obtain that cost-finiteness of wta is decidable.

\begin{corollary}\rm \label{cor:past-finite-cost-finite}  Let  $B$  be monotonic and have effective tests
for  $\0$  and  $\1$.  Then it is decidable whether an arbitrary
$(\Sigma,B)$-wta is cost-finite.
\end{corollary}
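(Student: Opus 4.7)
The plan is to reduce the decidability of cost-finiteness to the decidability of the property ``all small loops of a trim wta have weight $\1$,'' which has already been established in the proof of Theorem~\ref{cor:image-A-finite-decidable}.

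First, using Lemma~\ref{lm:decide-useful-state} (which requires only an effective test for $\0$), I would decide whether $\A$ has a useful state. If it has none, then there is no run $\rho \in \R_\A(\xi)$ with $F_{\rho(\varepsilon)} \neq \0$, so $\rH(\A)^{F \neq \0} = \emptyset$ is finite, and $\A$ is cost-finite.

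Otherwise, by Theorem~\ref{thm:A'-equivalent-to-A} I construct effectively a trim wta $\A'$ equivalent to $\A$. The key observation is that $\rH(\A)^{F \neq \0} = \rH(\A')^{F \neq \0}$: for any run $\rho \in \R_\A(\xi)$ with $F_{\rho(\varepsilon)} \neq \0$, every state in $\im(\rho)$ is useful by the definition of useful state, so $\rho$ is a run of $\A'$ with the same weight (and conversely $\R_{\A'}(\xi) \subseteq \R_\A(\xi)$ with weights preserved). Hence $\A$ is cost-finite if and only if $\A'$ is.

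Finally, by Lemma~\ref{prop:app}, the trim wta $\A'$ is cost-finite if and only if every small loop of $\A'$ has weight $\1$. This property is decidable exactly as in the proof of Theorem~\ref{cor:image-A-finite-decidable}: there are only finitely many contexts $c \in \C_\Sigma$ with $\hgt(c) < |Q'|$ and finitely many candidate loops on each, and by monotonicity of $B$ (which implies one-product freeness) a loop $\rho$ satisfies $\wt(c,\rho) = \1$ iff every transition weight used in $\rho$ equals $\1$, which is decidable via the effective test for $\1$. I do not expect any substantial obstacle here, since this corollary essentially reassembles Lemma~\ref{prop:app} together with the decidability machinery already developed for Theorem~\ref{cor:image-A-finite-decidable}.
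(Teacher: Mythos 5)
Your proposal is correct and follows essentially the same route as the paper: decide usefulness via Lemma~\ref{lm:decide-useful-state}, trim via Theorem~\ref{thm:A'-equivalent-to-A}, and then combine Lemma~\ref{prop:app} with the decidability of the small-loop condition from the proof of Theorem~\ref{cor:image-A-finite-decidable}. Your explicit verification that $\rH(\A)^{F \neq \0} = \rH(\A')^{F \neq \0}$ under trimming is a welcome detail that the paper leaves implicit.
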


\begin{proof} Let $\A=(Q,\delta,F)$ be an arbitrary $(\Sigma,B)$-wta. We first check whether $\A$ has a useful state as in the proof of Lemma~\ref{lm:decide-useful-state}. If $\A$ has no useful state, then $\rH(\A)^{F \neq \0} = \emptyset$, and thus, $\A$ is obviously cost-finite. Otherwise, by Theorem~\ref{thm:A'-equivalent-to-A}, we may assume that $\A$ is trim. Moreover, by Lemma~\ref{prop:app}, $\A$ is cost-finite if and only if small loops of $\A$ have weight~$\1$. The latter property is decidable by the proof of Theorem~\ref{cor:image-A-finite-decidable}.
\end{proof}

Lastly, we remark that cost-finiteness and having the finite-image property coincide for wta over additively locally finite and past-finite monotonic strong bimonoids.

\begin{corollary} \rm \label{cor:concept-comparision}
Let $B$ be monotonic and $\A=(Q,\delta,F)$ be a $(\Sigma,B)$-wta. 
\begin{compactenum}
\item If $B$ is additively locally finite and $\A$ is cost-finite, then $\A$ has the finite-image property.
\item If $B$ is past-finite and $\A$ has the finite-image property, then $\A$ is cost-finite.
\end{compactenum}
\end{corollary}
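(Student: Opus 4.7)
The plan is to prove the two implications separately, exploiting the finite set $\rH(\A)^{F\neq\0}$ in one direction and the finite set $\im(\sem{\A})$ in the other, and in each case showing by monotonicity that an arbitrary element of one set must live in a finite subset generated from the other.

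For Statement~1, I would assume $\rH(\A)^{F\neq\0}$ is finite and consider the set $D = \rH(\A)^{F\neq\0} \otimes (\im(F) \setminus \{\0\})$; this is finite since $Q$, and hence $\im(F)$, is finite. For any $\xi \in \T_\Sigma$, the defining equality
\[
\sem{\A}(\xi) = \bigoplus_{\rho \in \R_\A(\xi)} \wt(\xi,\rho) \otimes F_{\rho(\varepsilon)}
\]
shows that $\sem{\A}(\xi)$ is a finite $\oplus$-sum of elements of $D \cup \{\0\}$, hence lies in the submonoid $\langle D \rangle_\oplus$. Since $B$ is additively locally finite and $D$ is finite, $\langle D \rangle_\oplus$ is finite, so $\im(\sem{\A})$ is finite, giving the finite-image property.

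For Statement~2, I would assume $B$ is past-finite monotonic and $\im(\sem{\A})$ is finite, so that $P = \bigcup_{a \in \im(\sem{\A})} \past(a)$ is finite. Given $\xi \in \T_\Sigma$ and $\rho \in \R_\A(\xi)$ with $F_{\rho(\varepsilon)} \neq \0$, I first note that condition (i) of monotonicity (applied to the individual summand against the rest of the defining sum for $\sem{\A}(\xi)$) yields $\wt(\xi,\rho) \otimes F_{\rho(\varepsilon)} \preceq \sem{\A}(\xi)$. Then I would show $\wt(\xi,\rho) \preceq \wt(\xi,\rho) \otimes F_{\rho(\varepsilon)}$ by a short case split: if $\wt(\xi,\rho) = \0$ or $F_{\rho(\varepsilon)} = \1$ the inequality is an equality, and otherwise condition (ii) of monotonicity with $a = \wt(\xi,\rho)$, $b = F_{\rho(\varepsilon)}$, $c = \1$ gives $\wt(\xi,\rho) \prec \wt(\xi,\rho) \otimes F_{\rho(\varepsilon)}$. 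Chaining the two inequalities via transitivity of $\preceq$ yields $\wt(\xi,\rho) \in \past(\sem{\A}(\xi)) \subseteq P$, so $\rH(\A)^{F\neq\0} \subseteq P$ is finite.

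The only subtle step is the inequality $\wt(\xi,\rho) \preceq \wt(\xi,\rho) \otimes F_{\rho(\varepsilon)}$ in Statement~2, since condition (ii) of monotonic strong bimonoids is stated for elements in $B \setminus \{\0\}$ with the middle factor distinct from $\1$; handling the boundary cases ($\wt(\xi,\rho) = \0$, $F_{\rho(\varepsilon)} = \1$) correctly is essential but routine. Everything else in both directions is a direct application of the definitions of cost-finiteness, past-finiteness, and additive local finiteness.
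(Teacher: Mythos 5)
Your proof is correct and follows essentially the same route as the paper: for Statement~1 the paper bounds $\im(\sem{\A})$ by $\langle \rC(\A)\rangle_\oplus$ (your $D\cup\{\0\}$ is just $\rC(\A)$ up to the zero element), and for Statement~2 it uses exactly the chain $\wt(\xi,\rho)\preceq\wt(\xi,\rho)\otimes F_{\rho(\varepsilon)}\preceq\sem{\A}(\xi)$ together with past-finiteness of the union of the pasts of $\im(\sem{\A})$. Your explicit case split justifying $\wt(\xi,\rho)\preceq\wt(\xi,\rho)\otimes F_{\rho(\varepsilon)}$ is a correct elaboration of a step the paper states without detail.
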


\begin{proof} 
We first prove Statement 1. Since $\A$ is cost-finite, the set $\rC(\A)$ is finite. Moreover, we note that $\im(\sem{\A})$ is contained in $\langle \rC(\A)\rangle_\oplus$. By the assumption of Statement~1, the latter set is finite. 

We prove Statement 2: Since $B$ is monotonic, for every $\xi \in \T_\Sigma$, $q \in Q$ with $F_q \neq \0$, and $\rho \in \R_\A(q,\xi)$, 
we have $\wt(\xi,\rho)\preceq \wt(\xi,\rho)\otimes F_q \preceq \sem{\A}(\xi)$. Hence, \[\rH(\A)^{F \neq \0}\subseteq \bigcup_{b\in \im(\sem{\A})} \past(b),\]
and the set on the right hand side of the inclusion is finite because $B$ is past-finite and $\A$ has the finite-image property. Thus $\A$ is cost-finite.
\end{proof}

We note that, in general, the implication of Corollary \ref{cor:concept-comparision}(2) does not hold if the condition past-finite is dropped. In fact, the wta $\cA$ in Example~\ref{ex:cA-infin-imA-fin} has the finite-image property, but it is not cost-finite.

In the following example we show that past-finiteness of $B$ and $\A$ being cost-finite do not imply that $\A$ has the finite-image property. In fact, we can give a $(\Sigma,\N)$-wta $\A$ such that $\A$ is cost-finite but $\A$ does not have the finite-image property as follows. (Note that $\N$ is a past-finite monotonic.)

\begin{example} \label{ex:cA-fin-imA-infin} \rm
 Let $\Sigma=\{\gamma^{(1)},e^{(0)}\}$. We consider the $(\Sigma,\N)$-wta $\A=(\{p,q,r\},\delta,F)$ with $\delta_0(\varepsilon,e,p)=\delta_1(p,\gamma,q)=\delta_1(q,\gamma,p)=\delta_1(p,\gamma,r)=\delta_1(r,\gamma,p)=F_p = 1$, all other transitions have weight $0$, and $F_q = F_r = 0$. 

Clearly, $\rH(\A)^{F \neq \0}=\{1\}$, i.e., $\A$ is cost-finite. Moreover, for every $n \in \N$ and $\gamma^{2n}(e) \in \T_\Sigma$, we have $\sem{\A}\big(\gamma^{2n}(e)\big)=2^n$. Since $\{2^n \mid n \in \N\} \subseteq \im(\sem{\A})$, the wta $\A$ does not have the finite-image property.

We remark that, though $\N$ is commutative and one-product-free, small loops of $\A$ have weight 1,
the conditions of Corollary \ref{cor:small-loops->cA-finite}(2) do not hold. In fact, $\rC(\A)=\{0,1\}$ and neither $f_{\A,1}$ is bounded nor 1 has finite additive order. \qed
\end{example}

\section{Results for weighted string automata} \label{sect:string-case}

Let $\Delta$ be an alphabet. A \emph{weighted string automaton (over $\Delta$ and  $B$)} 
(for short: $(\Delta,B)$-wsa)   \cite{sch61,eil74} is a tuple
$\A = (Q,I,\delta,F)$, where $Q$ is a finite set of
\emph{states}, $I: Q\to B$ is the \emph{initial weight mapping}, $\delta: Q\times \Delta \times Q\to B$ is the \emph{transition
mapping}, and  $F: Q \to B$ is the \emph{final weight mapping}. For each $q \in Q$, we abbreviate $I(q)$ and $F(q)$ by $I_q$ and $F_q$, respectively.

We define the run semantics for $\A$ as follows.  Let $w=a_1\cdots a_n$ be a string in $\Delta^*$ with $n \in \mathbb{N}$ and $a_i \in \Delta$ for each $i \in [n]$. A {\em run of $\A$ on $w$} is a string $\rho= q_0\cdots q_n$ in $Q^{n+1}$. 
The {\em weight of $\rho$ for $w$}, denoted by $\wt(w,\rho)$, is the element of $B$ defined by 
\[\wt(w,\rho) = I_{q_0}\otimes \delta(q_0,a_1,q_1)\otimes \ldots \otimes \delta(q_{n-1},a_n,q_n) \otimes F_{q_n}\enspace.\]
Then the {\em run semantics of $\A$} is the weighted language
$\sem{\A}:\Delta^* \to B$ defined by 
\[\sem{\A}(w)=\bigoplus_{\rho\in Q^{|w|+1}}\wt(w,\rho)\] 
for every $w\in \Delta^*$. In particular, $\sem{\A}(\varepsilon) = \bigoplus_{q \in Q} I_{q}\otimes F_{q}$. A weighted language $r: \Delta^* \rightarrow B$ is \emph{run-recognizable} if there exists a $(\Delta,B)$-wsa $\A$ such that $r = \sem{\A}$.

In \cite[p.~324]{fulvog09} it is shown that, for each semiring $B$, the concept of  $(\Delta,B)$-wsa
and the concept of $(\Sigma,B)$-wta where $\Sigma$ is a string ranked alphabet are essentially the same. A string ranked alphabet is a ranked alphabet $\Sigma$ for which $\Sigma = \Sigma^{(0)} \cup \Sigma^{(1)}$ and $|\Sigma^{(0)}| =1$. In fact, for each $(\Delta,B)$-wsa $\A$ there exists a string ranked alphabet $\Sigma$, a bijection $\tree:\Delta^*\to \T_\Sigma$, and a $(\Sigma,B)$-wta $\B$ such that $\sem{\A}(w)=\sem{\B}(\tree(w))$ for each $w\in \Delta^*$. The inverse of this statement also holds, and the proof of both directions also works if $B$ is a strong bimonoid.

Since each string ranked alphabet is a particular ranked alphabet, each of our results for wta also holds for wsa with run semantics.  Moreover, the wta in Examples \ref{ex:commutative-SB-not-crisp-determinizable}, \ref{ex:cA-fin-imA-infin}, and \ref{ex:cA-infin-imA-fin} are examples for wsa with run semantics because in each of these examples, the ranked alphabet of the wta is a string ranked alphabet.

\

% Preparatory step for publishing in DMTCS
% Please do not drop it!
%\acknowledgements
% End of preparatory step
%
\textbf{Acknowledgments.} The authors had obtained Theorem~\ref{thm:past-fin-mon-semiring-fin-im-dec} for semirings. They are thankful to Uli Fahrenberg for a question which prompted the extension of this semiring-result
to left or right distributive strong bimonoids.

\bibliographystyle{alpha}
\bibliography{../crisp19-bib}

\end{document}